\documentclass[12pt,onecolumn]{IEEEtran}

\usepackage{latexsym}
\usepackage{graphicx}
\usepackage{array}
\usepackage{amsmath}
\usepackage{amsfonts}
\usepackage{amssymb}
\usepackage{amsthm}
\usepackage{epsfig}
\usepackage{epstopdf}
\usepackage{cite}
\usepackage{mathtools}
\usepackage[normal]{subfigure}
\usepackage[colorlinks=true]{hyperref}



\numberwithin{equation}{section}

\newtheorem{theorem}{Theorem}[section]
\newtheorem{lemma}[theorem]{Lemma}
\newtheorem{definition}[theorem]{Definition}

\newtheorem{remark}[theorem]{Remark}

%
%
%

%
%
\newcommand{\sr}{\stackrel}

\newcommand{\tri}{\sr{\triangle}{=}}

%
%

%
%
%
\newcommand{\be}{\begin{equation}}
\newcommand{\ee}{\end{equation}}
\newcommand{\bea}{\begin{eqnarray}}
\newcommand{\eea}{\end{eqnarray}}
\newcommand{\bes}{\begin{eqnarray*}}
\newcommand{\ees}{\end{eqnarray*}}
%
%
\newcommand{\bi}{\begin{itemize}}
\newcommand{\ei}{\end{itemize}}
\newcommand{\ben}{\begin{enumerate}}
\newcommand{\een}{\end{enumerate}}
%
%

%
%
\newcommand{\bp}{\begin{problem}}
\newcommand{\ep}{\end{problem}}

\newcommand{\hst}{\hspace{.2in}}

\newcommand{\noi}{\noindent}

%
%

%
%
%
%

%
\begin{document}


\title{Directed Information on Abstract Spaces: Properties and Variational Equalities}
%
%
\author{Charalambos~D.~Charalambous and Photios~A.~Stavrou~
\thanks{This work was financially supported by a medium size University of Cyprus grant entitled “DIMITRIS”. Parts of the material in this paper were presented at the IEEE International Symposium on Information Theory, Boston MA, July 1--6 2012 \cite{charalambous-stavrou2012}, at the IEEE International Symposium on Information Theory, Istanbul, Turkey, July 7--12 2013 \cite{stavrou-charalambous2013c}, and in book series ``Lecture Notes in Control and Information Sciences'' \cite{charalambous-stavrou-kourtellaris2015}.}
\thanks{The authors are with the Department of Electrical and Computer Engineering (ECE), University of Cyprus, 75 Kallipoleos Avenue, P.O. Box 20537, Nicosia, 1678, Cyprus, {\it Email:\{chadcha,stavrou.fotios\}@ucy.ac.cy}}}

%


\maketitle

\begin{abstract}
\par Directed information or its variants are utilized extensively in the characterization of the capacity of channels with memory and feedback, nonanticipative lossy data compression, and their generalizations to networks.
\par In this paper, we derive several functional and topological properties of directed information for general abstract alphabets (complete separable metric spaces) using the topology of weak convergence of probability measures. These include convexity of the set of consistent distributions, which uniquely define causally conditioned distributions, convexity and concavity of directed information with respect to the sets of consistent distributions, weak compactness of these sets of distributions, their joint distributions and their marginals. Furthermore, we show lower semicontinuity of directed information, and under certain conditions we also establish continuity of directed information. Finally, we derive variational equalities for directed information, including sequential versions. These may be viewed as the analogue of the variational equalities of mutual information (utilized in Blahut-Arimoto algorithm).
\par In summary, we extend the basic functional and topological properties of mutual information to directed information. These properties are discussed in the context of extremum problems of directed information. 
\end{abstract}
\begin{IEEEkeywords}
Directed information, weak convergence, convexity, concavity, lower semicontinuity, continuity, variational equalities.
\end{IEEEkeywords}





\section{Introduction}
\par Directed information quantifies the directivity of information defined by a causal sequence of feedback and feedforward channel conditional distributions \cite{marko,massey90}. Specifically, given two sequences of Random Variables (RV's) $X^n\tri\{X_0,X_1,\ldots,X_n\}\in{\cal X}_{0,n}\tri\times_{i=0}^n{\cal X}_i$, $Y^n\tri\{Y_0,Y_1,\ldots,Y_n\}\in{\cal Y}_{0,n}\tri\times_{i=0}^n{\cal Y}_i$, where ${\cal X}_i$ and ${\cal Y}_i$ are the input and output alphabets of a channel, respectively, and ${\cal B}({\cal X}_i),~{\cal B}({\cal Y}_i)$, the corresponding measurable spaces, directed information from $X^n$ to $Y^n$ is often defined via conditional mutual information \cite{massey90,kramer1998} as follows.
\begin{align}
I(X^n\rightarrow{Y}^n)&\tri\sum_{i=0}^{n}I(X^i;Y_i|Y^{i-1})\label{equation1a}\\
&=\sum_{i=0}^n\int_{{\cal X}_{0,i}\times{\cal Y}_{0,i}}\log\bigg(\frac{dP_{Y_i|Y^{i-1},X^i}(\cdot|y^{i-1},x^i)}{dP_{Y_i|Y^{i-1}}(\cdot|y^{i-1})}(y_i)\bigg)P_{X^i,Y^i}(dx^i,dy^i)\label{equation1b}\\
&\equiv\mathbb{I}_{X^n\rightarrow{Y}^n}(P_{X_i|X^{i-1},Y^{i-1}},P_{Y_i|Y^{i-1},X^i}:i=0,1,\ldots,n)\label{equation1c}
\end{align}
where notion \eqref{equation1c} indicates that directed information $I(X^n\rightarrow{Y}^n)$ is a functional of two collections of causally conditioned distributions, $\{P_{Y_i|Y^{i-1},X^i}:i=0,\ldots,n\}$, and $\{P_{X_i|X^{i-1},Y^{i-1}}:i=0,1,\ldots,n\}$, called feedforward distribution, and feedback feedback distribution, respectively,  which uniquely define the joint distribution $\{P_{X^i,Y^i}:~i=0,1,\ldots,n\}$ and the conditional distribution $\{P_{Y_i|Y^{i-1}}:i=0,1,\ldots,n\}$ of the RV's $\{(X^i,Y^i):~i=0,1,\ldots,n\}$.
\vspace*{0.2cm}\\
\noi By Bayes' rule, for any $A_j\in{\cal B}({\cal X}_j), B_j\in{\cal B}({\cal Y}_j),~j=0,1,\ldots,i$, the joint distribution decomposes into 
\begin{align}
P_{X^i,Y^i}&(A_0,B_0,\ldots,A_i,B_i)=\int_{A_0}P_{X_0}(dx_0)\int_{B_0}P_{Y_0|X_0,Y^{-1}}(dy_0|x_0,y^{-1})\ldots\nonumber\\
&\ldots\int_{A_i}P_{X_i|X^{i-1},Y^{i-1}}(dx_i|x^{i-1},y^{i-1})\int_{B_i}P_{Y_i|Y^{i-1},X^{i}}(dy_i|y^{i-1},x^{i}),~i=0,1,\ldots,n.\label{directed:information:section:introduction:equation1}
\end{align}
\noi Formally, we represent (\ref{directed:information:section:introduction:equation1}) by $P_{X^i,Y^i}(dx^i,dy^i)=\otimes_{j=0}^i\big({P}_{X_j|X^{j-1},Y^{j-1}}\otimes{P}_{Y_j|Y^{j-1},X^j}\big)$, and we call it an $(n+1)$-fold compound probability distribution.

\noi If the distributions $\{P_{X_i|X^{i-1},Y^{i-1}},P_{Y_i|Y^{i-1},X^i}:~i=0,\ldots,n\}$ are defined with respect to the probability density functions of continuous valued  RV's $\{(X_i,Y_i):~i=0,1,\ldots,n\}$, denoted by, $\{f_{X_i|X^{i-1},Y^{i-1}},f_{Y_i|Y^{i-1},X^i}$ $:~i=0,\ldots,n\}$, then (\ref{equation1a}) reduces  to 
\begin{align*}
I(X^n\rightarrow{Y}^n)=\sum_{i=0}^n\int_{{\cal X}_{0,i}\times{\cal Y}_{0,i}}\log\Big(\frac{f_{Y_i|Y^{i-1},X^i}(y_i|y^{i-1},x^i)}{f_{Y_i|Y^{i-1}}(y_i|y^{i-1})}\Big)f_{X^i,Y^i}(x^i,y^i)dx^idy^i.
\end{align*}
\noi If the distributions $\{P_{X_i|X^{i-1},Y^{i-1}},P_{Y_i|Y^{i-1},X^i}:~i=0,\ldots,n\}$ are defined with respect to the probability mass functions of countable or finite alphabet valued RV's $\{(X_i,Y_i):~i=0,\ldots,n\}$, denoted by, $\{p_{X_i|X^{i-1},Y^{i-1}},p_{Y_i|Y^{i-1},X^i}:~i=0,\ldots,n\}$, then (\ref{equation1a}) reduces  to 
\begin{align*}
I(X^n\rightarrow{Y}^n)=\sum_{i=0}^n\sum_{(x^i,y^i)\in{\cal X}_{0,i}\times{\cal Y}_{0,i}}\log\Big(\frac{p_{Y_i|Y^{i-1},X^i}(y_i|y^{i-1},x^i)}{p_{Y_i|Y^{i-1}}(y_i|y^{i-1})}\Big)p_{X^i,Y^i}(x^i,y^i).
\end{align*}
\vspace*{0.2cm}\\
In information theory, directed information (\ref{equation1a}) or its variants are used to characterize capacity of channels with memory and feedback  \cite{tatikonda2000,chen-berger2005,yang-kavcic-tatikonda2005,permuter-cuff-vanroy-weissman2008,tatikonda-mitter2009,permuter-weissman-chen2009,permuter-weissman-goldsmith2009,shrader-permuter2009}, lossy data compression of sequential codes \cite{tatikonda2000,ma-ishwar2011}, lossy data compression with feedforward information at the decoder \cite{venkataramanan-pradhan2007}, and capacity of networks, such as, the two-way channel, the multiple access channel \cite{kramer1998,kramer2003}, etc. Some of the above references derive coding theorems for an anthology of problems of information theory, under any one of the assumptions: $(a)$ stationary ergodic processes $\{(X_i,Y_i):i=0,1,\ldots\}$, $(b)$ Dobrushin's stability of the information density $\sum_{i=0}^n\log\Big(\frac{dP_{Y_i|Y^{i-1},X^i}}{dP_{Y_i|Y^{i-1}}}\Big)$,  $(c)$ Verd\'u and Han's information spectrum methods \cite{han93}. Moreover, directed information is also utilized in a variety of problems subject to causality constraints, such as, gambling, portfolio theory, data compression and hypothesis testing \cite{permuter-kim-weissman2011}, in biology as an alternative to Granger's measure of causality \cite{solo2008,amblard-michel2011,quinn-coleman-kiyavash-hatsopoulos2011}, and in relating Bayesian filtering theory to sequential and nonanticipative RDF \cite{charalambous-stavrou-ahmed2014ieeetac,charalambous-stavrou2014ecc}. 
\vspace*{0.2cm}\\
 Directed information is initially introduced by Marko \cite{marko} by decomposing Shannon's self-mutual information into two directional parts, and then taking  expectation. Although, directed information is defined via a sequence of  conditional mutual informations (i.e., (\ref{equation1a})), for general abstract alphabets (i.e., continuous) or distributions which are not necessarily continuous (i.e., induced by mixture of continuous and finite alphabet RVs) its functional and topological properties are not well understood \cite{kramer1998}. \\
Further, for such alphabet spaces or distributions, specific functional properties of mutual information expressed as a functional $I(X^n;Y^n)\equiv\mathbb{I}_{X^n;Y^n}(P_{X^n},P_{Y^n|X^n})$, of the two distributions $\{P_{X^n},P_{Y^n|X^n}\}$, such as, convexity, concavity, and topological properties such as lower semicontinuity (with respect to the topology of weak convergence of probability measures), at first glance, do not translate into analogous properties for directed information. The reason is that directed information  $I(X^n\rightarrow{Y}^n)\equiv\mathbb{I}_{X^n\rightarrow{Y}^n}(P_{X_i|X^{i-1},Y^{i-1}},P_{Y_i|Y^{i-1},X^i}:i=0,1,\ldots,n)$ is a functional of two sequences of distributions $\{P_{X_i|X^{i-1},Y^{i-1}},P_{Y_i|Y^{i-1},X^i}:i=0,1,\ldots,n\}$, and the joint and marginal distributions are induced from these sequences of distributions. Such properties are important in extremum problems of directed information.  \\ Similarly, it is not obvious whether  the well-known variational equalities of mutual information, which involve a single maximization or minimization of  appropriate functionals over appropriate convex sets,  have counter parts, for directed information, which involve nested maximization and minimization operations of  appropriate functionals over appropriate convex sets, giving rise to sequential variational equalities.  Such sequential variational equalities, are important to develop computationally efficient  sequential algorithms to compute capacity of channels with memory and feedback, similar to  the Blahut-Arimoto algorithm \cite{blahut1987}, of memoryless channels.

These properties together with compactness of subsets of the sets of the conditional distributions $\{P_{X_i|X^{i-1},Y^{i-1}}:i=0,1,\ldots,n\}$ and $\{P_{Y_i|Y^{i-1},X^i}:i=0,1,\ldots,n\}$, are fundamental to analyze extremum problems of directed information related to channel capacity, sequential and nonanticipative RDF, their generalizations to networks, etc, for countable and abstract alphabets.\\
\noi  Recently, in \cite{ho-yeung2009ieeeit} it is demonstrated via several examples that Shannon information measures, such as, entropy, relative entropy, mutual information, and conditional mutual information, when defined on countable alphabets, are discontinuous with respect to strong topologies (i.e., induced by total variational distance metrics on the space of probability distributions). Since directed information in \eqref{equation1a} involves a sequence of conditional mutual informations, the observations in \cite{ho-yeung2009ieeeit} also apply to directed information. The lack of continuity is attributed to the fact that mutual information and directed information  are defined from relative entropy, and relative entropy is lower semicontinuous with respect to distributions \cite{ihara1993}. For such abstract alphabets problems, it was recognized many years ago (see \cite{csiszar92,fozunbal}) that the analysis of capacity formulae based on single letter mutual information formulae requires tools from the topology of weak convergence of probability measures (or equivalently the weak$^*$ topology), in order to identify global and local analytical properties of channel input distributions which maximize mutual information. 
\vspace*{0.2cm}\\  
The main objective of this paper is to derive functional properties, topological properties, and sequential variational equalities, for directed information, when the distributions are defined on abstract alphabets, and to provide appropriate conditions for these to hold. The methodology and the main results are summarized below.
\begin{itemize}
\item[R1)] Introduce an equivalent directed information definition expressed via information  divergence $\mathbb{D}(\cdot||\cdot)$, as a functional of two consistent families of conditional distributions ${\bf P}(\cdot|{\bf y})$  on ${\cal X}^{\mathbb{N}_0}\tri\times_{i=0}^{\infty}{\cal X}_i$ parametrized by ${\bf y}=(y_0,y_1,\ldots)\in{\cal Y}^{\mathbb{N}_0}\tri\times_{i=0}^{\infty}{\cal Y}_i$, and ${\bf Q}(\cdot|{\bf x})$ on ${\cal Y}^{\mathbb{N}_0}$ parametrized by ${\bf x}\in{\cal X}^{\mathbb{N}_0}$, which uniquely define $\{P_{X_i|X^{i-1},Y^{i-1}}:i\in\mathbb{N}_0\}$ and $\{P_{Y_i|Y^{i-1},X^i}:i\in\mathbb{N}_0\}$, respectively, and vice-versa, and their $(n+1)$-fold compound probability distributions $\overleftarrow{P}_{0,n}(dx^n|y^{n-1})\triangleq\otimes_{i=0}^n{P}_{X_i|X^{i-1},Y^{i-1}}$ $(dx_i|x^{i-1},y^{i-1})$, $\overrightarrow{Q}_{0,n}(dy^n|x^{n})\triangleq\otimes_{i=0}^n{P}_{Y_i|Y^{i-1},X^{i}}(dy_i|y^{i-1},x^i)$.
\item[R2)] Show convexity of the consistent families of the conditional distributions ${\bf P}(\cdot|{\bf y})$ for ${\bf y}\in{\cal Y}^{\mathbb{N}_0}$, ${\bf Q}(\cdot|{\bf x})$ for ${\bf x}\in{\cal X}^{\mathbb{N}_0}$.
\item[R3)] Show convexity and concavity of directed information as a functional with respect to the consistent families of conditional distributions ${\bf Q}(\cdot|{\bf x})$ for ${\bf x}\in{\cal X}^{\mathbb{N}_0}$, and ${\bf P}(\cdot|{\bf y})$ for ${\bf y}\in{\cal Y}^{\mathbb{N}_0}$, respectively.
\item[R4)] Show under certain conditions, weak compactness of the consistent families of conditional distributions ${\bf P}(\cdot|{\bf x})$ for ${\bf x}\in{\cal X}^{\mathbb{N}_0}$, and ${\bf Q}(\cdot|{\bf y})$ for ${\bf y}\in{\cal Y}^{\mathbb{N}_0}$, and of their marginals and joint distribution.
\item[R5)] Show lower semicontinuity of directed information as a functional of the consistent families of the conditional distributions ${\bf P}(\cdot|{\bf y})$ for ${\bf y}\in{\cal Y}^{\mathbb{N}_0}$, and ${\bf Q}(\cdot|{\bf x})$ for ${\bf x}\in{\cal X}^{\mathbb{N}_0}$, and under certain conditions, continuity of directed information as a functional of the family ${\bf P}(\cdot|{\bf y})$ for ${\bf y}\in{\cal Y}^{\mathbb{N}_0}$.
\item[R6)] Express directed information in terms of variational equalities involving sequential minimization and sequential maximization operations over conditional distributions.
\item[R7)] Illustrate that R1)--R6) extend naturally to three sequences of RV's $X^n\in{\cal X}_{0,n}$, $Y^n\in{\cal Y}_{0,n}$, $Z^n\in{\cal Z}_{0,n}$, or more, which cover directed information measures for networks, and possible problems with side information.
\item[R8)] Discuss applications of R1)-R6).
\end{itemize}
\vspace*{0.2cm}
\noi The above functional and topological properties are shown by invoking the topology of weak convergence of probability measures on Polish spaces and Prohorov's theorems \cite{dupuis-ellis97,billingsley1999}. Some of the results described above are obtained by utilizing analogies between communication channels with memory and feedback, and stochastic optimal control problems in which the control element and the controlled element are the sequences of conditional distributions, $\{P_{X_i|X^{i-1},Y^{i-1}}:i=0,1,\ldots\}$ and $\{P_{Y_i|Y^{i-1},X^i}:i=0,1,\ldots\}$, respectively, \cite{gihman-skorohod1979,bertsekas-shreve2007}.
\vspace*{0.2cm}\\
Items R1)-R7) extend various functional and topological properties of mutual information $I(X^n;Y^n)\equiv\mathbb{I}_{X^n;Y^n}(P_{X^n},P_{Y^n|X^n})$ as a functional of $\{P_{X^n},P_{Y^n|X^n}\}$ to directed information. \\
\noi From the practical point of view, there are many potential applications of R1)-R7). Below, we briefly discuss some of them.\\
\noi The concavity and convexity properties are important in deriving tight bounds in applications of converse coding theorems, in identifying properties of extremum problems involving feedback capacity \cite{kramer1998,tatikonda2009} and sequential and nonanticipative lossy data compression via the nonanticipative RDF \cite{stavrou-charalambous-kourtellaris2013a}, in relating Bayesian filtering theory and nonanticipative RDF \cite{charalambous-stavrou-ahmed2014ieeetac}, in network communication applications \cite{kramer2007,gamal-kim2011}, etc. The semicontinuity and continuity of directed information, and the compactness of the consistent families of distributions ${\bf P}(\cdot|{\bf y})$ for ${\bf y}\in{\cal Y}^{\mathbb{N}_0}$, and ${\bf Q}(\cdot|{\bf x})$ for ${\bf x}\in{\cal X}^{\mathbb{N}_0}$, are crucial, when addressing questions of existence of extremum solutions to problems involving feedback capacity, sequential and nonanticipative lossy data compression, computations of extremum solutions and their properties, and in extending existing coding theorems to abstract alphabets \cite{berger1968}. For example, the converse part of coding theorem for feedback capacity presupposes existence of optimal channel input distribution maximizing directed information, and existence of its per unit time limit. The variational equalities are important in generalizing Blahut-Arimoto computation schemes of single letter mutual information expressions \cite{blahut1972} to sequential Blahut-Arimoto schemes, involving extremum problems of directed information, such as, in problems of evaluating feedback capacity (see \cite{stavrou-charalambous-tzortzis2015}). 
\vspace*{0.2cm}\\
\noi Throughout the paper, we illustrate applications of the results to the following extremum problems. 

\noi{\bf Capacity of channels with memory and feedback.} Consider the extremum problem of channel capacity with memory and feedback. Under the assumption of stationary ergodic processes $\{(X_i,Y_i):~i=0,1,\ldots\}$ or Dobrushin's directed information stability and transmission cost stability, the operational definition of capacity is given by the following extremum problem \cite{tatikonda-mitter2009}.  
\begin{align}
C^{fb}(P)\tri\liminf_{n\rightarrow\infty}\sup_{\{P_{X_i|X^{i-1},Y^{i-1}}:~i=0,1,\ldots,n\}\in{{\cal P}_{0,n}(P)}}\frac{1}{n+1}I(X^n\rightarrow{Y^n}),\label{equation1dd}
\end{align}
where ${\cal P}_{0,n}(P)$ is the transmission cost constraint set defined by
\begin{align}
{\cal P}_{0,n}(P)&\tri\bigg\{{P}_{X_i|X^{i-1},Y^{i-1}},~i=0,1,\ldots,n:~\frac{1}{n+1}\mathbb{E}\big\{c_{0,n}(x^n,y^{n-1})\big\}\leq{P}\bigg\},~P\geq{0}\label{introduction:equation1ddd}
\end{align}
and $c_{0,n}:{\cal X}_{0,n}\times{\cal Y}_{0,n-1}\longmapsto [0,\infty), c_{0,n}(x^n,y^{n-1})\tri\sum_{i=0}^{n}{g}_{i}(x^i,y^{i-1})$ is a measurable function denoting the cost of transmitting symbols over the channel. 

\noi The task of showing existence of a sequence of probability distributions $\{P_{X_i|X^{i-1},Y^{i-1}}:i=0,1,\ldots,n\}\in{\cal P}_{0,n}(P)$ which achieves the supremum in (\ref{equation1dd}) for continuous or countable alphabet spaces is not easy. The main difficulty arises from the fact that $I(X^n\rightarrow{Y^n})$ is a functional of the two sequences of distributions $\{P_{X_i|X^{i-1},Y^{i-1}},P_{Y_i|Y^{i-1},X^i}:i=0,1,\ldots,n\}$, unlike mutual information $I(X^n;Y^n)\equiv\mathbb{I}_{X^n;Y^n}(P_{X^n},P_{Y^n|X^n})$, which inherits most of its properties from those of relative entropy between the joint distribution $P_{Y^n,X^n}$ and the product of its marginals  $P_{X^n} \times P_{Y^n}$. However, we show by utilizing some of the results described under  R1)--R6),  existence of such conditional distribution and identify several properties of the optimal conditional channel input distribution.\\

\noi{\bf Generalized Information Nonanticipative or Sequential RDF.} Consider the extremum problem of general information nonanticipative RDF, or sequential RDF \cite{tatikonda2000}, which is a variant of classical RDF \cite{berger}, defined by \cite{charalambous-stavrou-ahmed2014ieeetac,stavrou-kourtellaris-charalambous2015ieeeit} 
\begin{align}
R^{na}(D)\tri\limsup_{n\rightarrow\infty}\inf_{\big\{P_{Y_i|Y^{i-1},X^{i}},~i=0,1,\ldots,n\big\}\in{{\cal Q}_{0,n}(D)}}\frac{1}{n+1}I(X^n\rightarrow{Y^n}),\label{introduction:nrdf:equation1}
\end{align}
where ${\cal Q}_{0,n}(D)$ is the fidelity constraint set defined by
\begin{align}
{\cal Q}_{0,n}(D)\tri\bigg\{{Q}_{Y_i|Y^{i-1},X^{i}},~i=0,1,\ldots,n:~\frac{1}{n+1}\mathbb{E}\big\{d_{0,n}(x^n,y^{n})\big\}\leq{D}\bigg\},~D\geq{0}\label{introduction:nrdf:equation2}
\end{align}
and $d_{0,n}:{\cal X}_{0,n}\times{\cal Y}_{0,n}\longmapsto [0,\infty],~d_{0,n}(x^n,y^{n})\tri\sum_{i=0}^{n}{\rho}_{i}(x^i,y^{i})$ is a measurable function denoting the distortion function of reconstructing $x_i$ by $y_i$, $i=0,1,\ldots,n$. Note that if $P_{X_i|X^{i-1},Y^{i-1}}=P_{X_i|X^{i-1}},~a.a. (x^{i-1,y^{i-1}}),~i=0,1,\ldots,n$, then it can be shown that \eqref{introduction:nrdf:equation1}, \eqref{introduction:nrdf:equation2}  are degraded to Gorbunov and Pinsker's nonanticipatory $\epsilon$-entropy \cite{gorbunov-pinsker1973}.

\noi For both extremum problems (\ref{equation1dd}), (\ref{introduction:nrdf:equation1}), we illustrate applications of R1)--R6) in showing existence of solutions, identifying properties of optimal solutions, and in constructing sequential versions of Blahut Arimoto Algorithm (BAA) \cite{blahut1972}.
\vspace*{0.2cm}\\
The rest of the paper is structured as follows. Section~\ref{equivalent_definitions} introduces two equivalent definitions of nonanticipative channels on abstract spaces \big(R1)\big). Section~\ref{properties} derives the functional and topological properties of directed information \big(R2)--R5)\big). Section~\ref{variational} derives sequential variational equalities of directed information \big(R6)\big). 

\section{Equivalent Nonanticipative Channels on Abstract  Spaces}\label{equivalent_definitions}
\par In this section, our aim is to establish two equivalent definitions of the sequence of conditional distributions or basic processes, which define any probabilistic channel with nonanticipative (causal) feedback, that relate causally the input-output behavior of the channel. This formulation is utilized extensively to establish the results stated under R1)--R7). The first definition of conditional distributions is the usual one found in many papers, e.g., \cite{kramer1998,tatikonda2000,permuter-cuff-vanroy-weissman2008,tatikonda-mitter2009,permuter-weissman-goldsmith2009,permuter-weissman-chen2009}, for finite alphabets spaces. The aforementioned definition is described via a family of multi-fold compound conditional distributions (see Fig.~\ref{Fig1}, (a)). The second definition is described via a family of conditional distributions defined on product alphabets, which satisfy a certain consistency condition (see Fig.~\ref{Fig1}, (b)). 
\begin{figure*}[htp]
\centering
\subfigure[Sequence of feedback and feedforward channels $\{P_{X_i|X^{i-1},Y^{i-1}},P_{Y_i|Y^{i-1},X^i}:i=0,1,\ldots,n\}$.]
{\includegraphics[scale=0.50]{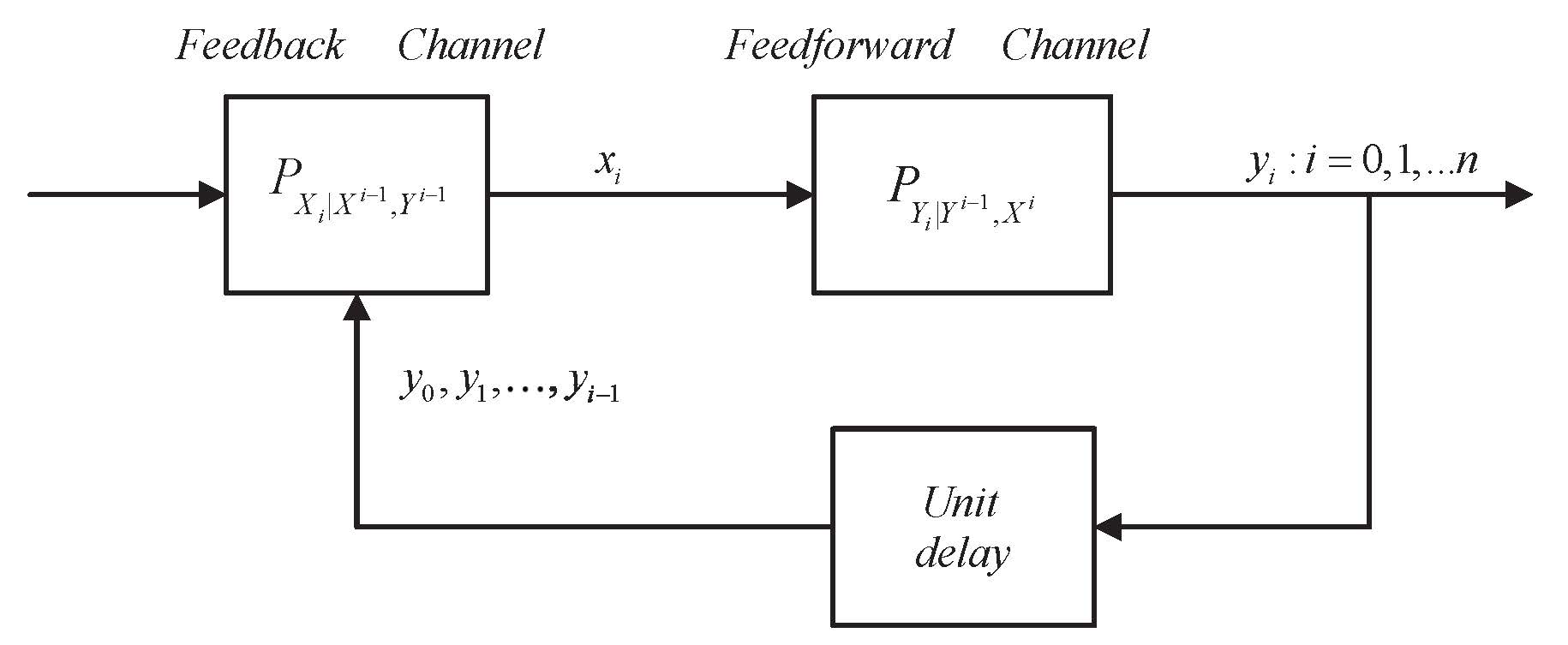}}\quad
\subfigure[Consistent families of feedback and feedforward channels $\{\protect \overleftarrow {P}_{X^n|Y^{n-1}},\protect\overrightarrow{Q}_{Y^n|X^n}:n\in{\mathbb N}_0\}$.]
{\includegraphics[scale=0.50]{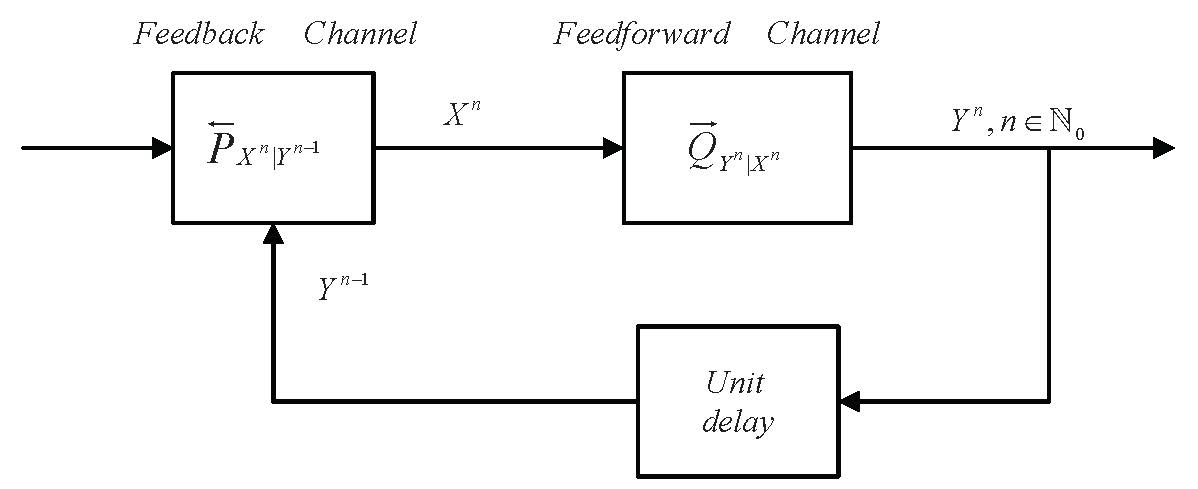}}
\caption{Equivalent Representations of Feedback/Feedforward Channels.}
\label{Fig1}
\end{figure*}
The second definition is often utilized in the stochastic control literature, in which there is a control process and a controlled process \cite{gihman-skorohod1979,bertsekas-shreve2007}. Indeed, the analogy is that $\{X_i:~i=0,1,\ldots\}$ is the control process, $\{Y_i:~i=0,1,\ldots\}$ is the controlled process, $\{P_{X_i|X^{i-1},Y^{i-1}}:~i=0,1,\ldots\}$ is the control element, and $\{P_{Y_i|Y^{i-1},X^{i}}:~i=0,1,\ldots\}$ is the controlled element. The second definition is more convenient, because the directed information density $i(X^n\rightarrow{Y^n})\tri\log\Big(\otimes_{i=0}^n\frac{dP_{Y_i|Y^{i-1},X^i}}{dP_{Y_i|Y^{i-1}}}\Big)=\sum_{i=0}^n\log\Big(\frac{dP_{Y_i|Y^{i-1},X^i}}{dP_{Y_i|Y^{i-1}}}\Big)$ corresponding to $I(X^n\rightarrow{Y^n})$, can be equivalently expressed in terms of two consistent families of conditional distributions, namely, ${\bf Q}(\cdot|{\bf x})$ on ${\cal Y}^{\mathbb{N}_0}$ given ${\bf x}=(x_0,x_1,\ldots)\in{\cal X}^{\mathbb{N}_0}$, and ${\bf P}(\cdot|{\bf y})$ on ${\cal X}^{\mathbb{N}_0}$ given ${\bf y}=(y_0,y_1,\ldots)\in{\cal Y}^{\mathbb{N}_0}$, which uniquely define  $\{P_{Y_i|Y^{i-1},X^i}:i=0,1,\ldots\}$ and $\{P_{X_i|X^{i-1},Y^{i-1}}:i=0,1,\ldots\}$, respectively, and vice-versa, such that
$i(X^n\rightarrow{Y^n})=\log\Big(\frac{d{\bf Q}(\cdot|x^n)}{d\nu^{{\bf P}\otimes{\bf Q}}(\cdot)}(y^n)\Big)-a.s.$, where $\nu^{{\bf P}\otimes{\bf Q}}(\cdot)$ is the marginal distribution on $\times_{i=0}^n{\cal Y}_i$ obtained from ${\bf P}(\cdot|{\bf y})$ and ${\bf Q}(\cdot|{\bf x})$. Once the conditions on the abstract spaces $\{({\cal Y}_i,{\cal X}_i):~i=0,1,\ldots\}$ are identified, and the consistency conditions are introduced, then it can be shown that $i(X^n\rightarrow{Y^n})$ has another version given by $i(X^n\rightarrow{Y^n})=\log\Big(\frac{d({\bf P}(\cdot|\cdot){\otimes}{\bf Q}(\cdot|\cdot))}{d({\bf P}(\cdot|\cdot)\otimes\nu^{{\bf P}\otimes{\bf Q}}(\cdot))}(x^n,y^n)\Big)-a.s.$, where $\otimes$ denotes the compound probability distribution defined by ${\bf P}(\cdot|\cdot)$ and ${\bf Q}(\cdot|\cdot)$, and similarly for the rest of the measures. Consequently, directed information can be expressed in terms of Kullback-Leibler distance $\mathbb{D}\big({\bf P}{\otimes}{\bf Q}||{\bf P}\otimes\nu^{{\bf P}\otimes{\bf Q}}\big)$\footnote{In the rest of the paper we write $\nu$ instead of $\nu^{{\bf P}\otimes{\bf Q}}$ omitting its explicit dependence on ${\bf P}(\cdot|{\bf y})$ and ${\bf Q}(\cdot|{\bf x})$.}. 
\vspace*{0.2cm}\\
\noi{\bf Notations and Preliminaries.}\\
Denote the set of nonnegative integers by $\mathbb{N}_0 \tri \{0,1,2,\ldots\},$ and the restriction of $\mathbb{N}_0$ to positive integers by $\mathbb{N}_1 \tri \{1,2,\ldots\}$, and to a finite set by $\mathbb{N}_0^n \tri \{0,1,2,\ldots,n\}$. Introduce two sequences of spaces $\{({\cal X}_n,{\cal B}({\cal X }_n)):n\in\mathbb{N}_0\}$ and $\{({\cal Y}_n,{\cal B}({\cal Y}_n)):n\in\mathbb{N}_0\},$ called basic measurable spaces, where ${\cal X}_n,{\cal Y}_n, n\in\mathbb{N}_0$ are topological spaces, and ${\cal B}({\cal X}_n)$ and ${\cal B}({\cal Y}_n)$ are Borel $\sigma-$algebras of subsets of ${\cal X}_n$ and ${\cal Y}_n,$ respectively. The set of probability measures on any measurable space $({\cal Z},{\cal B}({\cal Z}))$ is denoted by ${\cal M}_1({\cal Z})$.\\
For each $n\in\mathbb{N}_0$ define the product spaces
\begin{align*}
({\cal X}_{0,n},{\cal B}({\cal X}_{0,n}))\tri(\times_{i=0}^n{\cal X}_i,\otimes_{i=0}^n{\cal B}({\cal X}_i)), ({\cal Y}_{0,n},{\cal B}({\cal Y}_{0,n}))\tri(\times_{i=0}^n{\cal Y}_i,\otimes_{i=0}^n{\cal B}({\cal Y}_i)).
\end{align*}
\noi For each $n\in\mathbb{N}_0$, let ${\cal X}_n$ and ${\cal Y}_n$ be the spaces of all possible outcomes. Given the data up to and including the $n$th time, specifically, $(x_i,y_i)\in{\cal X}_i\times{\cal Y}_i,~i=0,1,\ldots,n$, the probability distributions at time $(n+1)$ are $p_{n+1}(A_{n+1}|x_0,\ldots,x_n,y_0,\ldots,y_n)$ and $q_{n+1}(B_{n+1}|y_0,\ldots,y_n,x_0,\ldots,x_{n+1})$, $A_{n+1}\in{\cal B}({\cal X}_{n+1})$, $B_{n+1}\in{\cal B}({\cal Y}_{n+1})$.
Hence, each possible outcome of the experiment is a sequence $\omega=(x_0,y_0,x_1,y_1,\ldots)$ with $x_n\in{\cal X}_n, y_n\in{\cal Y}_n$ for each $n\in\mathbb{N}_0$ (here, no time ordering is required).\\
Consequently, define the sample space $\Omega$ and the algebra ${\cal F}$ of all experiments by 
\begin{align*}
(\Omega,{\cal F})\tri\Big(\times_{n\in\mathbb{N}_0}({\cal X}_n\times{\cal Y}_n),\otimes_{n\in\mathbb{N}_0}\big({\cal B}({\cal X}_n)\otimes{\cal B}({\cal Y}_n)\big)\Big).
\end{align*}
Associated with the basic measurable spaces there are two basic sequences of Random Variables (RV's) $\{X_n:n\in\mathbb{N}_0\}$ and $\{Y_n:n\in\mathbb{N}_0\},$ such that for each $n\in\mathbb{N}_0,$ they take values $X_n\in{\cal X}_n$ and $Y_n\in{\cal Y}_n$. These are introduced as follows.\\
Let $X_0,Y_0,X_1,Y_1,\ldots$ be the coordinate RV's. For each $n\in\mathbb{N}_0$
\begin{align}
X_n(\omega)=x_n,~Y_n(\omega)=y_n~~\mbox{if}~~\omega=(x_0,y_0,x_1,y_1,\ldots).\nonumber
\end{align}
Clearly, $X_n:(\Omega,{\cal F})\longmapsto({\cal X}_n,{\cal B}({\cal X}_n))$, $Y_n:(\Omega,{\cal F})\longmapsto({\cal Y}_n,{\cal B}({\cal Y}_n))$, and for each outcome $\omega\in\Omega$ of the experiment, $X_n(\omega)$, $Y_n(\omega)$ are the results of the $n$th time. Similarly, $X^n\tri\{X_0,\ldots,X_n\}$ and $Y^n\tri\{Y_0,\ldots,Y_n\}$ denote the result of the trials up to and including the $n$th time; they are RV taking values in $({\cal X}_{0,n},{\cal B}({\cal X}_{0,n}))$ and $({\cal Y}_{0,n},{\cal B}({\cal Y}_{0,n}))$, respectively. The objective is to construct a measure $\mathbb{P}$ on $(\Omega,{\cal F})$ consistent with the data (e.g., measurable spaces and conditional distributions).\\
For every $n\in\mathbb{N}_0$, define the $\sigma$-algebras generated by $\{X_0,X_1,\ldots,X_n\}$ and $\{Y_0,Y_1,\ldots,Y_n\}$ by 
\begin{align}
{\cal F}(X^n)\tri\sigma\{X_0,X_1,\ldots,X_n\},~{\cal F}(Y^n)\tri\sigma\{Y_0,Y_1,\ldots,Y_n\}.\nonumber
\end{align}
Then every event $H\in{\cal F}(X^n)$ has the form 
\begin{align}
H=\Big\{(X_0,X_1,\ldots,X_n)\in{A}\Big\}=A\times{\cal X}_{n+1}\times{\cal X}_{n+2}\ldots,~A\in{\cal B}({\cal X}_{0,n})\nonumber
\end{align}
and $H$ is called a cylinder set with base $A\in{\cal B}({\cal X}_{0,n})$. Similarly, for an event $J\in{\cal F}(Y^n)$ 
\begin{align}
J=\Big\{(Y_0,Y_1,\ldots,Y_n)\in{B}\Big\}=B\times{\cal Y}_{n+1}\times{\cal Y}_{n+2}\ldots,~B\in{\cal B}({\cal Y}_{0,n})\nonumber
\end{align}
and $J$ is a cylinder set with base $B\in{\cal B}({\cal Y}_{0,n})$.\\
Points in the Cartesian countable product spaces ${\cal X}^{\mathbb{N}_0}\tri{\times_{n\in\mathbb{N}_0}}{\cal X}_n,$ ${\cal Y}^{\mathbb{N}_0}\tri{\times_{n\in\mathbb{N}_0}}{\cal Y}_n$ are denoted by ${\bf x}\tri\{x_0,x_1,\ldots\}\in{\cal X}^{\mathbb{N}_0},$ ${\bf y}\tri\{y_0,y_1,\ldots\}\in{\cal Y}^{\mathbb{N}_0},$ respectively. Similarly, for $n\in\mathbb{N}_0,$ points in ${\cal X}_{0,n}\tri\times^n_{i=0}{\cal X}_i,$ ${\cal Y}_{0,n}\tri\times^n_{i=0}{\cal Y}_i$  are denoted by $x^n\tri\{x_0,x_1,\ldots,x_n\}\in{\cal X}_{0,n},$ $y^n\tri\{y_0,y_1,\ldots,y_n\}\in{\cal Y}_{0,n},$ respectively.\\
Let ${\cal B}({\cal X}^{\mathbb{N}_0})$ and ${\cal B}({\cal Y}^{\mathbb{N}_0})$ denote the $\sigma-$algebras in ${\cal X}^{\mathbb{N}_0}$ and ${\cal Y}^{\mathbb{N}_0},$ respectively, generated by cylinder sets (e.g., ${\cal B}({\cal X}^{\mathbb{N}_0})$ is the smallest Borel $\sigma-$algebra containing all cylinder sets $\{{\bf x}=(x_0,x_1,\ldots)\in{\cal X}^{\mathbb{N}_0}:x_0\in{A}_0,x_1\in{A}_1,\ldots,x_n\in{A}_n\}, A_i\in{\cal B}({\cal X}_i), {i}\in\mathbb{N}_0^n$). The Borel $\sigma$-algebra ${\cal B}({\cal X}^{\mathbb{N}_0})$ is denoted by $\otimes_{i\in\mathbb{N}_0}{\cal B}({\cal X}_i)$. Hence, ${\cal B}({\cal X}_{0,n})$ and ${\cal B}({\cal Y}_{0,n})$ denote the $\sigma-$algebras of cylinder sets in ${\cal X}^{\mathbb{N}_0}$ and ${\cal Y}^{\mathbb{N}_0},$ respectively, with bases over $A_i\in{\cal B}({\cal X}_i),~i\in\mathbb{N}_0^n$, and $B_i\in{\cal B}({\cal Y}_i),~i\in\mathbb{N}_0^n$, respectively.
\vspace*{0.2cm}\\
\noi{\bf Backward or Feedback Channel.}\\
Suppose for each $n\in\mathbb{N}_0,$ the conditional distribution of the RV $X_n\in{\cal X}_n$ is determined provided the values of the basic processes $X^{n-1}=x^{n-1}\in{\cal X}_{0,n-1}$ and $Y^{n-1}=y^{n-1}\in{\cal Y}_{0,n-1}$ are known, and let $\{p_n(dx_n|x^{n-1},y^{n-1}):n\in\mathbb{N}_0\}$ denote the collection of these distributions. At $n=0$, the distribution is $p_0(dx_0|x^{-1},y^{-1})$, where $(x^{-1},y^{-1})$ are either fixed, or $p_0(dx_0|x^{-1},y^{-1})={p}(dx_0)$, depending on the convention used. Without loss of generality, we assume $p_0(dx_0|x^{-1},y^{-1})\tri{p}_0(x_0)$ (i.e., $\sigma\{X^{-1},Y^{-1}\}=\{\emptyset,\Omega\}$). For each $n\in\mathbb{N}_0$, the functions $p_n(\cdot|\cdot,\cdot): {\cal X}_n\times{\cal X}_{0,n-1}\times{\cal Y}_{0,n-1}\longmapsto[0,1]$ are candidates of distributions of the sequence of RV's $\{X_n:n\in\mathbb{N}_0\}$ on $\{({\cal X}_n,{\cal B}({\cal X }_n)):n\in\mathbb{N}_0\}$ if and only if the following conditions hold.
\vspace*{0.2cm}\\
{\bf i)} For every $n\in\mathbb{N}_0$, and $x^{n-1}\in{\cal X}_{0,n-1}$, $y^{n-1}\in{\cal Y}_{0,n-1}$, $p_n(\cdot|x^{n-1},y^{n-1})$ is a probability measure on ${\cal B}({\cal X }_n);$\\
{\bf ii)} For every $n\in\mathbb{N}_0$, and $A_n\in{\cal B}({\cal X}_n)$, $p_n(A_n|\cdot,\cdot)$ is an $\otimes^{n-1}_{i=0}\big({\cal B}({\cal X }_i)\otimes{\cal B}({\cal Y}_i)\big)$-measurable function of $x^{n-1}\in{\cal X}_{0,n-1},$ $y^{n-1}\in{\cal Y}_{0,n-1}$.
\vspace*{0.2cm}\\
\noi For every $n\in\mathbb{N}_0$, the set of all functions that satisfy {\bf i)}, {\bf ii)}, are called {\it stochastic kernels} on ${\cal X}_n$ given ${\cal X}_{0,n-1}\times{\cal Y}_{0,n-1}$, and these are denoted by 
\begin{align*}
{\cal Q}({\cal X}_n|{\cal X}_{0,n-1}\times{\cal Y}_{0,n-1})\tri\big\{p_n(\cdot|x^{n-1},y^{n-1})\in{\cal M}_1({\cal X}_n):~x^{n-1}\in{\cal X}_{0,n-1}, y^{n-1}\in{\cal Y}_{0,n-1}~\mbox{and}~{\bf ii)}~\mbox{holds}\big\}.
\end{align*}
Given the collection of functions $\{p_n(\cdot|\cdot,\cdot):n\in\mathbb{N}_0\}$ satisfying conditions {\bf i)}, {\bf ii)}, one can construct a family of measures on the product space $({\cal X}^{\mathbb{N}_0},{\cal B}({\cal X}^{\mathbb{N}_0}))\tri\big(\times_{i\in\mathbb{N}_0}{\cal X}_i,\otimes_{i\in\mathbb{N}_0}{\cal B}({\cal X }_i)\big)$ as follows.\\
Let $C\in{\cal B}({\cal X}_{0,n})$ be a cylinder set of the form
\begin{align}
C\tri\Big\{{\bf x}\in{\cal X}^{\mathbb{N}_0}:x_0\in{C_0},x_1\in{C_1},\ldots,x_n\in{C_n}\Big\},~C_i\in{\cal B}({\cal X }_i),~{i}\in\mathbb{N}_0^n, C_{0,n}=\times_{i=0}^n{C_i}.\nonumber
\end{align}
Define a family of measures ${\bf P}(\cdot|{\bf y})$ parametrized by ${\bf y}\in{\cal Y}^{\mathbb{N}_0}$ on ${\cal B}({\cal X}^{\mathbb{N}_0})$ by
\begin{align}
{\bf P}(C|{\bf y})&\tri\int_{C_0}p_0(dx_0)\int_{C_1}p_1(dx_1|x_0,y_0)\ldots\int_{C_n}p_n(dx_n|x^{n-1},y^{n-1})\label{equation2}\\
&\equiv{\overleftarrow{P}}_{0,n}(C_{0,n}|y^{n-1}).\label{equation4a}
\end{align}
The notation ${\overleftarrow{P}}_{0,n}(\cdot|y^{n-1})$ is used to denote the causal conditioning dependence of the measure ${\bf P}(\cdot|{\bf y})$ defined on cylinder sets $C\in{\cal B}({\cal X}_{0,n})$, for any $n\in\mathbb{N}_0$. 
The right hand side (RHS) of (\ref{equation2}) uniquely defines a measure on $({\cal X}^{\mathbb{N}_0},{\cal B}({\cal X}^{\mathbb{N}_0}))$. Moreover, for each $n\in\mathbb{N}_0$ the family of measures ${\bf P}(\cdot|{\bf y})$ parametrized by ${\bf y}\in{\cal Y}^{\mathbb{N}_0}$, satisfies the following property (inherited from condition {\bf ii)}): for $E\in{\cal B}({\cal X}^{\mathbb{N}_0}),$ ${\bf P}(E|\cdot)$ is ${\cal B}({\cal Y}^{\mathbb{N}_0})-$measurable, and for $E\in{\cal B}({\cal X}_{0,n}),$ ${\bf P}(E|\cdot)$ is ${\cal B}({\cal Y}_{0,n-1})-$measurable.\\ 
Thus, if conditions {\bf i)} and {\bf ii)} hold then for each ${\bf y}\in{\cal Y}^{\mathbb{N}_0},$ the RHS of (\ref{equation2}) defines a consistent family of finite-dimensional distribution, and hence there exists a unique measure on $({\cal X}^{\mathbb{N}_0},{\cal B}({\cal X}^{\mathbb{N}_0})),$ for which $p_n(dx_n|x^{n-1},y^{n-1})$ is obtained. This leads to the first definition of a feedback channel, as a family of functions $\{p_n(\cdot|\cdot,\cdot)\in{\cal Q}({\cal X}_n|{\cal X}_{0,n-1}\times{\cal Y}_{0,n-1}):~n\in\mathbb{N}_0\}$, i.e., satisfying conditions {\bf i)} and {\bf ii)}. This definition is used extensively by many authors \cite{kramer1998,tatikonda2000,permuter-cuff-vanroy-weissman2008,tatikonda-mitter2009,permuter-weissman-chen2009,permuter-weissman-goldsmith2009}, when the alphabet spaces have finite cardinality.
\vspace*{0.2cm}\\
 An alternative, equivalent definition of a feedback channel is established as follows. Consider a family of measures ${\bf P}(\cdot|{\bf y})$ on $({\cal X}^{\mathbb{N}_0},{\cal B}({\cal X}^{\mathbb{N}_0}))$ parametrized by ${\bf y}\in{\cal Y}^{\mathbb{N}_0}$ satisfying the following consistency condition.
\vspace*{0.2cm}\\
\noi{\bf C1}:~~If $E\in{\cal B}({\cal X}_{0,n})$ then ${\bf P}(E_{0,n}|\cdot)$ is ${\cal B}({\cal Y}_{0,n-1})-$measurable function of ${\bf y}\in{\cal Y}^{\mathbb{N}_0}$.
\vspace*{0.2cm}\\
Clearly, if conditions {\bf i)} and {\bf ii)} are satisfied, then the family of measures ${\bf P}(\cdot|{\bf y})$ defined via the RHS of (\ref{equation2}) satisfies consistency condition {\bf C1}. The question we address next is whether for any family of measures ${\bf P}(\cdot|{\bf y})$ on $({\cal X}^{\mathbb{N}_0},{\cal B}({\cal X}^{\mathbb{N}_0}))$ parametrized by ${\bf y}\in{\cal Y}^{\mathbb{N}_0}$, satisfying consistency condition {\bf C1}, one can construct a collection of functions $\{p_n(\cdot|\cdot,\cdot)\in{\cal Q}({\cal X}_n|{\cal X}_{0,n-1}\times{\cal Y}_{0,n-1}):n\in\mathbb{N}_0\}$, i.e., satisfying conditions {\bf i)} and {\bf ii)}, which are connected to ${\bf P}(\cdot|{\bf y})$ via relation (\ref{equation2}). To illustrate this point, let $A^{(n)}=\{{\bf x}\in{\cal X}^{\mathbb{N}_0}:x_n{\in}A\},$ $A\in{\cal B}({\cal X}_n),$ and let ${\bf P}(A^{(n)}|{\cal B}({\cal X}_{0,n-1})|{\bf y})$ denote the conditional probability of $A^{(n)}$ with respect to ${\cal B}({\cal X}_{0,n-1})$ calculated on the probability space $\big({\cal X}^{\mathbb{N}_0},{\cal B}({\cal X}^{\mathbb{N}_0}),{\bf P}(\cdot|{\bf y})\big)$. Then
\begin{align}
{\bf P}(A^{(n)}|{\cal B}({\cal X}_{0,n-1})|{\bf y})=p_n(A|x^{n-1},y^{n-1}),~~A^{(n)}\in{\cal B}({\cal X}_{0,n}),\label{equation3a}
\end{align}
for ${\bf P}(\cdot|{\bf y})-$almost all ${\bf x}\in{\cal X}^{\mathbb{N}_0}.$ Clearly, the function on the RHS of (\ref{equation3a}), $p_n(A|x^{n-1},y^{n-1})$ is ${\cal B}({\cal X}_{0,n-1})$-measurable for a fixed $A\in{\cal B}({\cal X}_n)$ and $y^{n-1}\in{\cal Y}_{0,n-1}$, but it cannot be claimed that $p_n(\cdot|x^{n-1},y^{n-1})$ is a probability measure on ${\cal X}_n$. However, under the general assumption that $\{({\cal X}_n,{\cal B}({\cal X }_n)):n\in\mathbb{N}_0\}$ are complete separable metric spaces (Polish spaces), with ${\cal B}({\cal X}_n)$ the $\sigma-$algebra of Borel sets, it is shown in \cite{gihman-skorohod1979}, that the RHS of (\ref{equation3a}) represents a version of conditional probability $(a.s.)$, i.e., condition {\bf i)} holds as well. Therefore, to establish the second equivalent definition of a family of measures defined by (\ref{equation2}) with elements $\{p_n(\cdot|\cdot,\cdot)\in{\cal Q}({\cal X}_n|{\cal X}_{0,n-1}\times{\cal Y}_{0,n-1}):~n\in\mathbb{N}_0\}$, we introduce the following condition on the alphabet spaces.
\vspace*{0.2cm}\\
\noi{\bf iii)} $\{{\cal X}_n:n\in\mathbb{N}_0\}$ are complete separable metric spaces and $\{{\cal B}({\cal X}_n):n\in\mathbb{N}_0\}$ are the $\sigma-$algebras of Borel sets.
\vspace*{0.2cm}\\
By \cite{gihman-skorohod1979}, if condition {\bf iii)} holds, then for any family of measures ${\bf P}(\cdot|{\bf y})$ parametrized by ${\bf y}\in{\cal Y}^{\mathbb{N}_0}$ satisfying {\bf C1} one can construct a collection of versions of conditional distributions  $\{p_n(dx_n|x^{n-1},y^{n-1}):n\in\mathbb{N}_0\}$ satisfying conditions {\bf i)} and {\bf ii)} which are connected with ${\bf P}(\cdot|{\bf y})$ via relation (\ref{equation2}), and hence the following conclusion.\\
\noi When $\{{\cal X}_n:n\in\mathbb{N}_0\}$ are Polish Spaces with $\{{\cal B}({\cal X}_n):n\in\mathbb{N}_0\}$ the $\sigma-$algebra of Borel sets, there are two equivalent definitions of a feedback channel. The first definition is the usual one given by a collection of functions $\{p_n(\cdot|\cdot,\cdot)\in{\cal Q}({\cal X}_n|{\cal X}_{0,n-1}\times{\cal Y}_{0,n-1}):~n\in\mathbb{N}_0\}$, i.e., satisfying conditions {\bf i)} and {\bf ii)}. The second definition is given by a family of measures ${\bf P}(\cdot|{\bf y})$ on $({\cal X}^{\mathbb{N}_0},{\cal B}({\cal X}^{\mathbb{N}_0}))$ depending parametrically on ${\bf y}\in{\cal Y}^{\mathbb{N}_0}$ and satisfying the consistency condition {\bf C1}.\\ 
The second equivalent definition of a feedback channel, together with an analogous equivalent definition for the forward channel will be used throughout the paper.
\vspace*{0.2cm}\\
\noi{\bf Feedforward Channel.}\\
The above methodology is repeated to obtain two equivalent definitions for the forward channel as well.
Suppose for each $n\in\mathbb{N}_0,$ the conditional distribution of the RV $Y_n\in{\cal Y}_n$ is determined provided the values of the basic processes $Y^{n-1}\in{\cal Y}_{0,n-1}$ and $X^n=x^n\in{\cal X}_{0,n}$ are known, and let $\{q_n(dy_n|y^{n-1},x^n):n\in\mathbb{N}_0\}$ denotes this collection of distributions. At $n=0$, $q_0(dy_0|y_{-1},x_0)$, where $y_{-1}$ is either fixed or its distribution is fixed (depending on the convection used). Without loss of generality, we assume $q_0(dy_0|y_{-1},x_0)\tri{q}_0(dy_0|x_0)$. The functions $\{q_n(\cdot|\cdot,\cdot):~n\in\mathbb{N}_0\}$ satisfy the following conditions.
\vspace*{0.2cm}\\
{\bf iv)} For every $n\in\mathbb{N}_0$, and $y^{n-1}\in{\cal Y}_{0,n-1}, x^n\in{\cal X}_{0,n}$, $q_n(\cdot|y^{n-1},x^{n})$ is a probability measure ${\cal B}({\cal Y}_n)$;\\
{\bf v)} For every $n\in\mathbb{N}_0$, and $B_n\in{\cal B}({\cal Y}_n)$, $q_n(B_n|\cdot,\cdot)$ is an $\otimes^{n-1}_{i=0}\big({\cal B}({\cal Y}_i)\otimes{\cal B}({\cal X}_i)\big)\otimes{\cal B}({\cal X}_n)$-measurable function of $x^{n}\in{\cal X}_{0,n},$ $y^{n-1}\in{\cal Y}_{0,n-1}$.
\vspace*{0.2cm}\\ 
For every $n\in\mathbb{N}_0$, the set of all functions that satisfy {\bf iv)}, {\bf v)}, are called  stochastic kernels on ${\cal Y}_n$ given ${\cal Y}_{0,n-1}\times{\cal X}_{0,n}$, and these are denoted by 
\begin{align*}
{\cal Q}({\cal Y}_n|{\cal Y}_{0,n-1}\times{\cal X}_{0,n})=\{q_n(\cdot|y^{n-1},x^{n})\in{\cal M}_1({\cal Y}_n):~y^{n-1}\in{\cal Y}_{0,n-1}, x^{n}\in{\cal X}_{0,n}~\mbox{and}~{\bf v)}~\mbox{holds}\}.
\end{align*}
\noi Similarly as before, using the collection of functions $\{q_n(\cdot|\cdot,\cdot)\in{\cal Q}({\cal Y}_n|{\cal Y}_{0,n-1}\times{\cal X}_{0,n}):n\in\mathbb{N}_0\}$ one can construct a family of measures ${\bf Q}(\cdot|{\bf x})$ on $({\cal Y}^{\mathbb{N}_0},{\cal B}({\cal Y}^{\mathbb{N}_0}))$ which depend parametrically on ${\bf x}\in{{\cal X}^{\mathbb{N}_0}},$ as follows.\\
Consider a cylinder set $D\in{\cal B}({\cal Y}_{0,n})$ of the form
\begin{align}
D\tri\Big\{{\bf y}\in{\cal Y}^{\mathbb{N}_0}:y_0{\in}D_0,y_1{\in}D_1,\ldots,y_n{\in}D_n\Big\},~D_i\in{\cal B}({\cal Y}_i),~{n}\in\mathbb{N}_0^n, D_{0,n}=\times_{i=0}^n{D_i}.\nonumber
\end{align}
Define a family of measures on ${\cal B}({\cal Y}^{\mathbb{N}_0})$ parametrized by ${\bf x}\in{\cal X}^{\mathbb{N}_0}$ by
\begin{align}
{\bf Q}(D|{\bf x})&\tri\int_{D_0}q_0(dy_0|x_0)\int_{D_1}q_1(dy_1|y_0,x^1)\ldots\int_{D_n}q_n(dy_n|y^{n-1},x^n)\label{equation4}\\
&\equiv{\overrightarrow{Q}}_{0,n}(D_{0,n}|x^n).\label{equation4b}
\end{align}
Since, for each ${\bf x}\in{\cal X}^{\mathbb{N}_0}$ the RHS of (\ref{equation4}) defines a consistent family of finite dimensional distribution, then there exist a unique measure on $({\cal Y}^{\mathbb{N}_0},{\cal B}({\cal Y}^{\mathbb{N}_0}))$ from which the family of distributions $\{q_n(dy_n|y^{n-1},x^n):n\in\mathbb{N}_0\}$ satisfying {\bf iv)}, {\bf v)} can be obtained. Moreover, the family of measures ${\bf Q}(\cdot|{\bf x})$ parametrized by ${\bf x}\in{\cal X}^{\mathbb{N}_0}$ satisfies the following consistency condition.
\vspace*{0.2cm}\\
\noi{\bf C2}: If $F\in{\cal B}({\cal Y}_{0,n}),$ then ${\bf Q}(F|\cdot)$ is a ${\cal B}({\cal X}_{0,n})-$measurable function of ${\bf x}\in{\cal X}^{\mathbb{N}_0}.$
\vspace*{0.2cm}\\
By \cite{gihman-skorohod1979}, to obtain another equivalent definition for the forward channel introduce the following condition on the output alphabet.
\vspace*{0.2cm}\\
\noi {\bf vi)} $\{{\cal Y}_n:n\in\mathbb{N}_0\}$ are Polish Spaces and $\{{\cal B}({\cal Y}_n):n\in\mathbb{N}_0\}$ are the $\sigma-$algebra of Borel sets.
\vspace*{0.2cm}\\
If condition {\bf vi}) holds, then for any family of measures ${\bf Q}(\cdot|{\bf x})$ on $({\cal Y}^{\mathbb{N}_0},{\cal B}({\cal Y}^{\mathbb{N}_0}))$ parametrized by ${\bf x}\in{\cal X}^{\mathbb{N}_0}$ satisfying consistency condition {\bf C2}, one can construct a collection of functions $\{q_n(\cdot|\cdot,\cdot)\in{\cal Q}({\cal Y}_n|{\cal Y}_{0,n-1}\times{\cal X}_{0,n}):n\in\mathbb{N}_0\}$, i.e., satisfying conditions {\bf iv)} and {\bf v)}, which are connected with ${\bf Q}(\cdot|{\bf x})$ via relation (\ref{equation4}). Therefore, we arrive at two equivalent definitions for the forward channel as well.\\
We conclude this section by constructing the probability space $(\Omega,{\cal F},\mathbb{P})$, as stated earlier, and the sequence of RV's $\{(X_n,Y_n):~n\in\mathbb{N}_0\}$ defined on it.
Given the basic measures ${\bf P}(\cdot|{\bf y})$ on ${\cal X}^{\mathbb{N}_0}$ satisfying consistency condition {\bf C1} and ${\bf Q}(\cdot|{\bf x})$ on ${\cal Y}^{\mathbb{N}_0}$ satisfying consistency condition {\bf C2}, one can construct a sequence of RV's $\{X_n,Y_n:n\in\mathbb{N}_0\}$ or conditional distributions as follows.\\
Suppose {\bf iii)}, {\bf iv)} hold. Let $A^{(n)}=\{{\bf x}:x_n{\in}A\},$ $A\in{\cal B}({\cal X}_n)$ and $B^{(n)}=\{{\bf y}:y_n{\in}B\},$ $B\in{\cal B}({\cal Y}_n).$ In addition, let ${\bf P}(A^{(n)}|{{\cal B}({\cal X}_{0,n-1})}|{\bf y})$ denote the conditional probability of $A^{(n)}$ with respect to ${\cal B}({\cal X}_{0,n-1})$ calculated on the probability space $\big({\cal X}^{\mathbb{N}_0},{\cal B}({\cal X}^{\mathbb{N}_0}),{\bf P}(\cdot|{\bf y})\big),$ and ${\bf Q}(B^{(n)}|{{\cal B}({\cal Y}_{0,n-1})}|{\bf x})$ denote the conditional probability of $B^{(n)}$ with respect to ${\cal B}({\cal Y}_{0,n-1})$ calculated on the probability space $\big({\cal Y}^{\mathbb{N}_0},{\cal B}({\cal Y}^{\mathbb{N}_0}),{\bf Q}(\cdot|{\bf x})\big).$\\
Then for each $n\in\mathbb{N}_0$, by conditioning it follows that
\begin{align}
\mathbb{P}\big\{X_n{\in}A|X^{n-1}=x^{n-1},Y^{n-1}=y^{n-1}\big\}&={\bf P}\big(\{{\bf x}:x_n{\in}A\}|{{\cal B}({\cal X}_{0,n-1})}|{\bf y}\big),~A{\in}{\cal B}({\cal X}_n)\nonumber\\
&=p_n(A|x^{n-1},y^{n-1})\label{equation16}\\
\mathbb{P}\big\{Y_n{\in}B|Y^{n-1}=y^{n-1},X^n=x^n\big\}&={\bf Q}\big(\{{\bf y}:y_n{\in}B\}|{{\cal B}({\cal Y}_{0,n-1})}|{\bf x}\big),~B{\in}{\cal B}({\cal Y}_n)\nonumber\\
&=q_n(B|y^{n-1},x^n)\label{equation17}
\end{align}
for almost all ${\bf x}\in{\cal X}^{\mathbb{N}_0}$ in measure ${\bf P}(\cdot|{\bf y}),$ and for almost all ${\bf y}\in{\cal Y}^{\mathbb{N}_0}$ in measure ${\bf Q}(\cdot|{\bf x})$. Note that for each $n\in\mathbb{N}_0$, $p_n(\cdot;\cdot,\cdot)\in{\cal Q}({\cal X}_n|{\cal X}_{0,n-1},{\cal Y}_{0,n-1})$ and $q_n(\cdot|\cdot,\cdot)\in{\cal Q}({\cal Y}_n|{\cal Y}_{0,n-1},{\cal X}_n)$ are stochastic kernels determined from ${\bf P}(\cdot|\cdot)$ and ${\bf Q}(\cdot|\cdot),$ respectively, (e.g., they are related via (\ref{equation2}) and (\ref{equation4}), respectively).\\
Consequently, the finite dimensional distributions of the sequence of RV's $\{(X_n,Y_n):~n\in\mathbb{N}_0\}$ is defined by 
\begin{align}
\mathbb{P}\big\{X_0{\in}A_0,Y_0\in{B}_0,\ldots,X_n{\in}A_n,Y_n{\in}B_n\big\}&=\int_{A_0}p_0(dx_0)\int_{B_0}q_0(dy_0|x_0)\ldots\nonumber\\
&\int_{A_n}p_n(dx_n|x^{n-1},y^{n-1})\int_{B_n}q_n(dy_n|y^{n-1},x^n).\label{equation18a}
\end{align}
Hence, given the two Polish spaces ${\cal X}^{\mathbb{N}_0}$ and ${\cal Y}^{\mathbb{N}_0}$, for any ${\bf P}(\cdot|\cdot)$ and ${\bf Q}(\cdot|\cdot)$ satisfying the consistency conditions {\bf C1}, {\bf C2}, respectively, there exist a probability space and a sequence of RV's $\{(X_n,Y_n):~n\in\mathbb{N}_0\}$ defined on it, whose joint probability distribution is uniquely defined by (\ref{equation18a}), via ${\bf P}(\cdot|\cdot)$ and ${\bf Q}(\cdot|\cdot)$.
\vspace*{0.2cm}\\
The following remark summarizes the previous discussion on the two equivalent definitions of forward and feedback channels.
\begin{remark}\label{equivalent}{\ \\}
Suppose $\{{\cal X}_n:n\in\mathbb{N}_0\},$ $\{{\cal Y}_n:n\in\mathbb{N}_0\},$ are complete, separable metric spaces (Polish spaces) and $\{{\cal B}({\cal X}_n):n\in\mathbb{N}_0\},$ $\{{\cal B}({\cal Y}_n):n\in\mathbb{N}_0\}$ are respectively, the $\sigma-$algebras of Borel sets.\\
Then
\item[1)] The collection of stochastic kernels $\{p_n(\cdot|\cdot,\cdot)\in{\cal Q}({\cal X}_n|{\cal X}_{0,n-1}\times{\cal Y}_{0,n-1}):n\in\mathbb{N}_0\}$ uniquely define a family of probability measures on $({\cal X}^{\mathbb{N}_0},{\cal B}({\cal X}^{\mathbb{N}_0}))$ parametrized by ${\bf y}\in{\cal Y}^{\mathbb{N}_0}$ via (\ref{equation2}).
\item[2)] For any family of probability measures ${\bf P}(\cdot|{\bf y})$ on $({\cal X}^{\mathbb{N}_0},{\cal B}({\cal X}^{\mathbb{N}_0}))$ parametrized by ${\bf y}\in{\cal Y}^{\mathbb{N}_0}$, satisfying consistency condition {\bf C1} there exists a collection of stochastic kernels $\{p_n(\cdot|\cdot,\cdot)\in{\cal Q}({\cal X}_n|{\cal X}_{0,n-1}\times{\cal Y}_{0,n-1}):n\in\mathbb{N}_0\}$ connected to ${\bf P}(\cdot|\cdot)$ via (\ref{equation2}).
\item[3)] The collection of stochastic kernels $\{q_n(\cdot|\cdot,\cdot)\in{\cal Q}({\cal Y}_n|{\cal Y}_{0,n-1}\times{\cal X}_{0,n}):n\in\mathbb{N}_0\}$ uniquely define a family of probability measures on $({\cal Y}^{\mathbb{N}_0},{\cal B}({\cal Y}^{\mathbb{N}_0}))$ parametrized by ${\bf x}\in{\cal X}^{\mathbb{N}_0}$ via (\ref{equation4}).
\item[4)] For any family of probability measures ${\bf Q}(\cdot|{\bf x})$ on $({\cal Y}^{\mathbb{N}_0},{\cal B}({\cal Y}^{\mathbb{N}_0}))$ parametrized by ${\bf x}\in{\cal X}^{\mathbb{N}_0}$ satisfying consistency condition {\bf C2} there exists a collection of stochastic kernels $\{q_n(\cdot|\cdot,\cdot)\in{\cal Q}({\cal Y}_n|{\cal Y}_{0,n-1}\times{\cal X}_{0,n}):n\in\mathbb{N}_0\}$ connected to ${\bf Q}(\cdot|\cdot)$ via (\ref{equation4}).
\end{remark}

\noi The point to be made here is that directed information as defined by (\ref{equation1a})-(\ref{equation1c}) can be expressed via the equivalent definitions of Remark~\ref{equivalent}, 2) and 4) rather than 1) and 3). We use this equivalent definition of directed information, to derive the functional and topological properties of directed information on general abstract spaces.
Throughout the rest of the paper it is assumed that the conditions of Remark~\ref{equivalent} are satisfied, i.e., all spaces are Polish spaces.


\section{Properties of Directed Information}\label{properties}

\par In this section, we define the feedforward information $I(X^n\rightarrow{Y^n})$ on abstract spaces (Polish spaces), via the Kullback-Leibler distance (or relative entropy), using the basic family of measures ${\bf P}(\cdot|{\bf y})$ on $({\cal X}^{\mathbb{N}_0},{\cal B}({\cal X}^{\mathbb{N}_0})),$ and ${\bf Q}(\cdot|{\bf x})$ on $({\cal Y}^{\mathbb{N}_0},{\cal B}({\cal Y}^{\mathbb{N}_0})),$ which satisfy consistency condition ${\bf C1}$ and ${\bf C2},$ respectively. Once this is established, then following Pinsker \cite{pinsker-book}, it will become obvious that directed information permits a representation as a supremum of relative entropy between two distributions, where the supremum is taken over all measurable partitions on a given $\sigma-$ algebra of subsets of a set $\cal Z.$ Further, in a subsequent subsection, we use the definition of directed information in terms of ${\bf P}(\cdot|{\bf y})$ and ${\bf Q}(\cdot|{\bf x}),$ to derive several of its properties, such as, convexity, concavity, lower semicontinuity, with respect to these two families of measures.\\
To present the precise expression for the directed information, we first introduce the measures of interest constructed from the basic consistent families of conditional distributions. Introduce the following notation.\\
The set of stochastic kernels by
\begin{align}
{\cal Q}^{\bf C1}({\cal X}^{\mathbb{N}_0}|{\cal Y}^{\mathbb{N}_0})&\tri\Big\{
{\bf P}(\cdot|{\bf y})\in{\cal M}_1({\cal X}^{\mathbb{N}_0}): {\bf y}\in{\cal Y}^{\mathbb{N}_0}~\mbox{and consistency condition}~{\bf C1}~\mbox{holds}\Big\}\nonumber\\
&\equiv\Big\{{\bf P}(\cdot|\cdot)\in{\cal Q}({\cal X}^{\mathbb{N}_0}|{\cal Y}^{\mathbb{N}_0}):~\mbox{consistency condition}~{\bf C1}~\mbox{holds}\Big\}.\label{section:properties:equation2}
\end{align}
Note that for each  ${\bf y}\in{\cal Y}^{\mathbb{N}_0}$, elements of this set are  probability distributions on ${\cal X}^{\mathbb{N}_0}$ denoted  by
\begin{align}
{\cal M}^{\bf C1}_1({\cal X}^{\mathbb{N}_0})&\tri\Big\{{\bf P}(\cdot|{\bf y})\in{\cal M}_1({\cal X}^{\mathbb{N}_0}):~\mbox{consistency condition ${\bf C1}$ holds}\Big\}\label{section:properties:equation1}
\end{align}
Similarly, 
\begin{align}
{\cal Q}^{\bf C2}({\cal Y}^{\mathbb{N}_0}|{\cal X}^{\mathbb{N}_0})&\tri\Big\{{\bf Q}(\cdot|{\bf x})\in{\cal M}_1({\cal Y}^{\mathbb{N}_0}): {\bf x}\in{\cal X}^{\mathbb{N}_0}~\mbox{and consistency condition}~{\bf C2}~\mbox{holds}\Big\}\nonumber\\
&\equiv\Big\{{\bf Q}(\cdot|\cdot)\in{\cal Q}({\cal Y}^{\mathbb{N}_0}|{\cal X}^{\mathbb{N}_0}):~\mbox{consistency condition}~{\bf C2}~\mbox{holds}\Big\}.\label{section:properties:equation4}
\end{align} 
and for each  ${\bf x}\in{\cal X}^{\mathbb{N}_0}$,  elemenets of this set are probability distributions on ${\cal Y}^{\mathbb{N}_0}$, denoted by
\begin{align}
{\cal M}^{\bf C2}_1({\cal Y}^{\mathbb{N}_0})&\tri\Big\{{\bf Q}(\cdot|{\bf x})\in{\cal M}_1({\cal Y}^{\mathbb{N}_0}):~\mbox{consistency condition ${\bf C2}$ holds}\Big\}\label{section:properties:equation3}  
\end{align}
The projection of ${\cal M}^{\bf C1}_1({\cal X}^{\mathbb{N}_0})$, ${\cal M}^{\bf C2}_1({\cal Y}^{\mathbb{N}_0})$, ${\cal Q}^{\bf C1}({\cal X}^{\mathbb{N}_0}|{\cal Y}^{\mathbb{N}_0})$, and ${\cal Q}^{\bf C2}({\cal Y}^{\mathbb{N}_0}|{\cal X}^{\mathbb{N}_0})$ to finite number of coordinates is denoted by ${\cal M}^{\bf C1}_1({\cal X}_{0,n})$, ${\cal M}^{\bf C2}_1({\cal Y}_{0,n})$, ${\cal Q}^{\bf C1}({\cal X}_{0,n}|{\cal Y}_{0,n-1})$, and ${\cal Q}^{\bf C2}({\cal Y}_{0,n}|{\cal X}_{0,n})$, respectively. Since the spaces are complete separable metric spaces then ${\bf P}(\cdot|{\bf y})\in{\cal M}_1({\cal X}^{\mathbb{N}_0})$, for fixed ${\bf y}\in{\cal Y}^{\mathbb{N}_0}$, and ${\bf Q}(\cdot|{\bf x})\in{\cal M}_1({\cal Y}^{\mathbb{N}_0})$, for fixed ${\bf x}\in{\cal X}^{\mathbb{N}_0}$, are regular conditional probability distributions \cite{dupuis-ellis97}.\\
Next, we define the distributions of interest. Given any ${\bf P}(\cdot|\cdot)\in{\cal Q}^{\bf C1}({\cal X}^{\mathbb{N}_0}|{\cal Y}^{\mathbb{N}_0})$ and ${\bf Q}(\cdot|\cdot)\in{\cal Q}^{\bf C2}({\cal Y}^{\mathbb{N}_0}|{\cal X}^{\mathbb{N}_0})$, by utilizing the construction of Section~\ref{equivalent_definitions}, we can define uniquely $\{p_n(\cdot|\cdot,\cdot):n\in\mathbb{N}_0\}$ and $\{q_n(\cdot|\cdot,\cdot):n\in\mathbb{N}_0\}$, \big(see (\ref{equation16}), (\ref{equation17})\big) and the following distributions.
\vspace*{0.2cm}\\
\noi{\bf P1}: The joint distribution on ${\cal X}^{\mathbb{N}_0}\times{\cal Y}^{\mathbb{N}_0}$ of the basic sequence $\{X_n,Y_n:n\in\mathbb{N}_0\}$ constructed from ${\bf P}(\cdot|\cdot)\in{\cal Q}^{\bf C1}({\cal X}^{\mathbb{N}_0}|{\cal Y}^{\mathbb{N}_0})$ and ${\bf Q}(\cdot|\cdot)\in{\cal Q}^{\bf C2}({\cal Y}^{\mathbb{N}_0}|{\cal X}^{\mathbb{N}_0}),$ defined uniquely for $A_i\in{\cal B}({\cal X}_i),$ $B_i\in{\cal B}({\cal Y}_i),$ $\forall{i}\in\mathbb{N}_0^n,$ by
\begin{align}
({\overleftarrow P}_{0,n}&\otimes{\overrightarrow Q}_{0,n})(\times^n_{i=0}(A_i{\times}B_i)){\tri}\mathbb{P}\Big\{X_0{\in}A_0,Y_0\in{B}_0,\ldots,X_n{\in}A_n,Y_n{\in}B_n\Big\}\nonumber\\
&=\int_{A_0}p_0(dx_0)\int_{B_0}q_0(dy_0|x_0)\ldots\int_{A_n}p_n(dx_n|x^{n-1},y^{n-1})\int_{B_n}q_n(dy_n|y^{n-1},x^n).\label{equation18}
\end{align}
Formally, the $(n+1)$ fold compound joint distribution defined by (\ref{equation18}) is written as $({\overleftarrow P}_{0,n}\otimes{\overrightarrow Q}_{0,n})(dx^n,dy^n)$ or ${\overleftarrow P}_{0,n}(dx^n|y^{n-1})\otimes{\overrightarrow Q}_{0,n}(dy^n|x^{n}).$
\vspace*{0.2cm}\\
\noi{\bf P2}: The marginal distributions on ${\cal X}^{\mathbb{N}_0}$ of the sequence $\{X_n:n\in\mathbb{N}_0\}$ constructed from ${\bf P}(\cdot|\cdot)\in{\cal Q}^{\bf C1}({\cal X}^{\mathbb{N}_0}|{\cal Y}^{\mathbb{N}_0})$ and ${\bf Q}(\cdot|\cdot)\in{\cal Q}^{\bf C2}({\cal Y}^{\mathbb{N}_0}|{\cal X}^{\mathbb{N}_0}),$ defined uniquely by\footnote{Actually $\mu\equiv\mu^{{\bf P}\otimes{\bf Q}}$ but we omit the superscript throughout the paper.}
\begin{align}
\mu_{0,n}(\times^n_{i=0}A_i)&\tri\mathbb{P}\Big\{X_0\in{A}_0, Y_0\in{\cal Y}_0,\ldots, X_n\in{A}_n, Y_n\in{\cal Y}_n\Big\},~A_i\in{\cal B}({\cal X}_i),~\forall{i}\in\mathbb{N}_0^n\label{equation19}\\
&=({\overleftarrow P}_{0,n}\otimes{\overrightarrow Q}_{0,n})(\times^n_{i=0}(A_i\times{\cal Y}_i))\nonumber\\
&=\int_{A_0}p_0(dx_0)\int_{{\cal Y}_0}q_0(dy_0|x_0)\ldots\int_{A_n}p_n(dx_n|x^{n-1},y^{n-1})\int_{{\cal Y}_n}q_n(dy_n|y^{n-1},x^n).\label{equation19a}
\end{align}
Formally, (\ref{equation19a}) is written as $\mu_{0,n}(dx^n)=({\overleftarrow P}_{0,n}\otimes{\overrightarrow Q}_{0,n})(dx^n,{\cal Y}_{0,n})$, and by Bayes' rule $\mu_{0,n}(dx^n)=\otimes^{n}_{i=0}\mu_{i}(dx_i|x^{i-1}).$
\vspace*{0.2cm}\\
\noi{\bf P3}: The marginal distributions on ${\cal Y}^{\mathbb{N}_0}$ of the sequence $\{Y_n:n\in\mathbb{N}_0\}$ constructed from ${\bf P}(\cdot|\cdot)\in{\cal Q}^{\bf C1}({\cal X}^{\mathbb{N}_0}|{\cal Y}^{\mathbb{N}_0})$ and ${\bf Q}(\cdot|\cdot)\in{\cal Q}^{\bf C2}({\cal Y}^{\mathbb{N}_0}|{\cal X}^{\mathbb{N}_0}),$ defined uniquely by\footnote{Similarly, $\nu\equiv\nu^{{\bf P}\otimes{\bf Q}}$.}
\begin{align}
\nu_{0,n}(\times^n_{i=0}B_i)&\tri\mathbb{P}\Big\{X_0\in{\cal X}_0, Y_0\in{B}_0,\ldots, X_n\in{\cal X}_n, Y_n\in{B}_n\Big\},~B_i\in{\cal B}({\cal Y}_i),~\forall{i}\in\mathbb{N}_0^n\label{equation20}\\
&=({\overleftarrow P}_{0,n}\otimes{\overrightarrow Q}_{0,n})(\times^n_{i=0}({\cal X}_i\times{B}_i))\nonumber\\
&=\int_{{\cal X}_0}p_0(dx_0)\int_{B_0}q_0(dy_0|x_0)\ldots\int_{{\cal X}_n}p_n(dx_n|x^{n-1},y^{n-1})\int_{B_n}q_n(dy_n|y^{n-1},x^n).\label{equation20a}
\end{align}
Formally, (\ref{equation20a}) is written as $\nu_{0,n}(dy^n)=({\overleftarrow P}_{0,n}\otimes{\overrightarrow Q}_{0,n})({\cal X}_{0,n},dy^n)$, and by Bayes' rule $\nu_{0,n}(dy^n)=\otimes^{n}_{i=0}\nu_{i}(dy_i|y^{i-1})$.
\vspace*{0.2cm}\\
\noi {\bf P4}: The distribution ${\overrightarrow\Pi}_{0,n}:{\cal B}({\cal X}_{0,n})\otimes{\cal B}({\cal Y}_{0,n})\mapsto[0,1]$ constructed from ${\overleftarrow P}_{0,n}(\cdot|\cdot)\in{\cal Q}^{\bf C1}({\cal X}_{0,n}|{\cal Y}_{0,n-1})$ and $\nu_{0,n}(dy^n)=({\overleftarrow P}_{0,n}\otimes{\overrightarrow Q}_{0,n})({\cal X}_{0,n},dy^n)\in{\cal M}_1({\cal Y}_{0,n})$ of (\ref{equation20}), defined uniquely by
\begin{align}
{\overrightarrow\Pi}_{0,n}(\times^n_{i=0}(A_i{\times}B_i))&\tri({\overleftarrow P}_{0,n}\otimes\nu_{0,n})(\times^n_{i=0}(A_i{\times}B_i)),~A_i\in{\cal B}({\cal X}_i),~B_i\in{\cal B}({\cal Y}_i),~\forall{i}\in\mathbb{N}_0^n\nonumber\\
&=\int_{A_0}p_0(dx_0)\int_{B_0}\nu_0(dy_0)\int_{A_1}p_1(dx_1|x_0,y_0)\int_{B_1}\nu_{1}(dy_1|y_0)\ldots\nonumber\\
&\ldots\int_{A_n}p_n(dx_n|x^{n-1},y^{n-1})\int_{B_n}\nu_{n}(dy_n|y^{n-1}). \label{equation21}
\end{align}
Formally, (\ref{equation21}) is written as ${\overrightarrow\Pi}_{0,n}(dx^n,dy^n)={\overleftarrow P}_{0,n}(dx^n|y^{n-1})\otimes\nu_{0,n}(dy^n)\in{\cal M}_1({\cal X}_{0,n}\times{\cal Y}_{0,n})$.
\vspace*{0.2cm}\\
\noi{\bf P5}: The distribution ${\overleftarrow\Pi}_{0,n}:{\cal B}({\cal Y}_{0,n})\otimes{\cal B}({\cal X}_{0,n})\mapsto[0,1]$ constructed from ${\overrightarrow Q}_{0,n}(\cdot|\cdot)\in{\cal Q}^{\bf C2}({\cal Y}_{0,n}|{\cal X}_{0,n})$ and $\mu_{0,n}(dx^n)=({\overleftarrow P}_{0,n}\otimes{\overrightarrow Q}_{0,n})(dx^n,{\cal Y}_{0,n})\in{\cal M}_1({\cal X}_{0,n})$ of (\ref{equation19a}), defined uniquely by
\begin{align}
{\overleftarrow\Pi}_{0,n}(\times^n_{i=0}(A_i{\times}B_i))&\tri(\mu_{0,n}\otimes{\overrightarrow Q}_{0,n})(\times^n_{i=0}(A_i{\times}B_i)),~A_i\in{\cal B}({\cal X}_i),~B_i\in{\cal B}({\cal Y}_i),~\forall{i}\in\mathbb{N}_0^n\nonumber\\
&=\int_{A_0}\mu_0(dx_0)\int_{B_0}q_0(dy_0|x_0)\int_{A_1}\mu_{1}(dx_1|x_0)\int_{B_1}q_1(dy_1|y_0,x_0)\ldots\nonumber\\
&\ldots\int_{A_n}\mu_{n}(dx_n|x^{n-1})\int_{B_n}q_n(dy_n|y^{n-1},x^{n}).\label{equation22}
\end{align}
Formally, (\ref{equation22}) is written as ${\overleftarrow\Pi}_{0,n}(dx^n,dy^n)=\mu_{0,n}(dx^n)\otimes{\overrightarrow Q}_{0,n}(dy^n|x^n)\in{\cal M}_1({\cal X}_{0,n}\times{\cal Y}_{0,n})$.
\vspace*{0.2cm}\\
\noi From the above definitions, for each $n\in\mathbb{N}_0$, an alternative way to construct the conditional distributions of $Y_n$ given $Y^{n-1}=y^{n-1}$, $\nu_{n}(\cdot|y^{n-1})\in{\cal M}_1({\cal Y}_n)$, and $X_n$ given $X^{n-1}=x^{n-1}$, $\mu_{n}(\cdot|x^{n-1})\in{\cal M}_1({\cal X}_n)$ is as follows. Let $A^{(n)}=\{{\bf x}:x_n\in{A}\},$ $A\in{\cal B}({\cal X}_n),$ $B^{(n)}=\{{\bf y}:y_n\in{B}\},$ $B\in{\cal B}({\cal Y}_n),$ and let ${\overrightarrow\Pi}_{0,n}(A^{(n)},B^{(n)}|{{\cal B}({\cal X}_{0,n-1})\otimes{\cal B}({\cal Y}_{0,n-1})})$ denote the joint conditional probability of $A^{(n)}\times{B}^{(n)}$ with respect to ${\cal B}({\cal X}_{0,n-1})\otimes{\cal B}({\cal Y}_{0,n-1})$ calculated on the probability space $\Big({\cal X}^{\mathbb{N}_0}\otimes{\cal Y}^{\mathbb{N}_0},{\cal B}({\cal X}^{\mathbb{N}_0})\otimes{\cal B}({\cal Y}^{\mathbb{N}_0}),{\overrightarrow\Pi}_{0,n}(\cdot)\Big)$. Then, for $ A\in{\cal B}({\cal X}_n)$,~$B\in{\cal B}({\cal Y}_n)$ we obtain
\begin{align}
{\overrightarrow\Pi}_{0,n}(A^{(n)},B^{(n)}|{{{\cal B}({\cal X}_{0,n-1})}\otimes{{\cal B}({\cal Y}_{0,n-1})}})&=p_n(A|x^{n-1},y^{n-1})\times\nu_{n}(B|y^{n-1}).\label{equation5a}
\end{align}
Hence, $\nu_{n}(\cdot|y^{n-1})\in{\cal M}_1({\cal Y}_n)$ is given by $\nu_{n}(dy_n|y^{n-1})=\int_{{\cal X}_n}{\overrightarrow\Pi}_{0,n}(dx_n,dy_n|x^{n-1},y^{n-1})$, from which $\nu_{0,n}(dy^n)\in{\cal M}_1({\cal Y}_{0,n})$ is also obtained.
Similarly, let ${\overleftarrow\Pi}_{0,n}(A^{(n)},B^{(n)}|{{\cal B}({\cal Y}_{0,n-1})}\otimes{{\cal B}({\cal X}_{0,n-1})})$ denote the joint conditional probability of $A^{(n)}\times{B}^{(n)}$ with respect to ${\cal B}({\cal Y}_{0,n-1})\otimes{\cal B}({\cal X}_{0,n-1})$ calculated on the probability space $\big({\cal Y}^{\mathbb{N}_0}\times{\cal X}^{\mathbb{N}_0},{\cal B}({\cal Y}^{\mathbb{N}_0})\otimes{\cal B}({\cal X}^{\mathbb{N}_0}),{\overleftarrow\Pi}_{0,n}(\cdot)\big).$ Then, for $B\in{\cal B}({\cal Y}_n)$ we have
\begin{align}
{\overleftarrow\Pi}_{0,n}(A^{(n)},B^{(n)}|{{{\cal B}({\cal X}_{0,n-1})}\otimes{{\cal B}({\cal Y}_{0,n-1})}})&=\int_{A_n}q_n(B|y^{n-1},x^n)\otimes\mu_{n}(dx_n|x^{n-1})\label{equation5b}
\end{align}
from which $\mu_{n}(\cdot|x^{n-1})\in{\cal M}_1({\cal X}_n)$ and $\mu_{0,n}(dx^n)\in{\cal M}_1({\cal X}_{0,n})$ are obtained.
Similarly, from (\ref{equation21}) and (\ref{equation5a}) we can obtain any of the individual conditional distributions $p_n(\cdot|x^{n-1},y^{n-1})\in{\cal M}_1({\cal X}_n)$ and $q_n(\cdot|y^{n-1},x^n)\in{\cal M}_1({\cal Y}_n)$ appearing in their RHS by proper conditional expectations.
\vspace*{0.2cm}\\
\noi Using the first definition of basic processes, that is, given a collection of stochastic kernels $\{p_n(\cdot|\cdot,\cdot)\in{\cal Q}({\cal X}_n|{\cal X}_{0,n-1}\times{\cal Y}_{0,n-1}):n\in\mathbb{N}_0\}$ and $\{q_n(\cdot|\cdot,\cdot)\in{\cal Q}({\cal Y}_n|{\cal Y}_{0,n-1}\times{\cal X}_{0,n}):n\in\mathbb{N}_0\}$, the joint distribution, as well as the conditional distributions are defined via ${\bf P1-P5}.$ Consequently, it is well-known that directed information is defined via relative entropy as follows \cite{tatikonda-mitter2009} 
\begin{align}
&I(X^n\rightarrow{Y}^n)\tri\sum_{i=0}^{n}I(X^i;Y_i|Y^{i-1})\nonumber\\
&=\sum_{i=0}^n\int_{{\cal Y}_{0,i-1}}\int_{{\cal X}_{0,i}\times{\cal Y}_{i}}\log\Bigg(\frac{dP_{0,i}(\cdot,\cdot|y^{i-1})}{d\big(P_{0,i}(\cdot|y^{i-1})\times\nu_{i}(\cdot|y^{i-1})\big)}(x^i,y_i)\Bigg)P_{0,i}(dx^i,dy_i|y^{i-1})P_{0,i-1}(dy^{i-1})\label{equation29}\\
&=\sum_{i=0}^n\int_{{\cal X}_{0,i}\times{\cal Y}_{0,i-1}}\mathbb{D}\big(q_i(\cdot|y^{i-1},x^i)||\nu_{i}(\cdot|y^{i-1})\big)p_i(dx_i|x^{i-1},y^{i-1})\nonumber\\
&\qquad\qquad\otimes_{j=0}^{i-1}\Big(q_j(dy_j|y^{j-1},x^j)\otimes{p}_j(dx_j|x^{j-1},y^{j-1})\Big)\label{equation29a}\\
&\equiv{{\mathbb{I}}_{X^n\rightarrow{Y}^n}}(p_i(\cdot|\cdot,\cdot),q_i(\cdot|\cdot,\cdot):~i=0,1,\ldots,n).\label{equation30}
\end{align}
The RHS in (\ref{equation29}) follows from the definition of conditional mutual information. In (\ref{equation30}), we use the notation ${{\mathbb{I}}_{X^n\rightarrow{Y}^n}}(p_i(\cdot|\cdot,\cdot),q_i(\cdot|\cdot,\cdot):~i=0,1,\ldots,n)$ to indicate that $I(X^n\rightarrow{Y}^n)$ is a functional of $\{p_i(\cdot|\cdot,\cdot)\in{\cal Q}({\cal X}_i|{\cal X}_{0,i-1}\times{\cal Y}_{0,i-1}),~q_i(\cdot|\cdot,\cdot)\in{\cal Q}({\cal Y}_i|{\cal Y}_{0,i-1}\times{\cal X}_{0,i}):~i=0,1,\ldots,n\}$. 


\subsection{Directed Information Functional of Consistent Conditional Distributions}

Now we consider the second definition of basic process introduced in Section~\ref{equivalent_definitions}. Given any ${\bf P}(\cdot|\cdot)\in{\cal Q}^{\bf C1}({\cal X}^{\mathbb{N}_0}|{\cal Y}^{\mathbb{N}_0})$ and ${\bf Q}(\cdot|\cdot)\in{\cal Q}^{\bf C2}({\cal X}^{\mathbb{N}_0}|{\cal Y}^{\mathbb{N}_0})$ the distributions under ${\bf P1-P5}$ are constructed. Next, we define directed information via relative entropy as often done for mutual information \cite{csiszar92}. By Lemma~\ref{absolute_contunuity}, ${\overleftarrow P}_{0,n} \otimes {\overrightarrow Q}_{0,n} << {\overleftarrow P}_{0,n} \otimes \nu_{0,n}$ if and only if ${\overrightarrow Q}_{0,n}(\cdot|x^n) << \nu_{0,n}(\cdot)$ for $\overleftarrow{P}_{0,n}-$almost all $x^n \in {\cal X}_{0,n}$. Utilizing the Radon-Nikodym derivative (RND) $\frac{d({\overleftarrow P}_{0,n}\otimes{\overrightarrow Q}_{0,n})}{d({\overleftarrow P}_{0,n}\otimes\nu_{0,n})}(x^n,y^n)$, define the relative entropy of ${\overleftarrow P}_{0,n} \otimes {\overrightarrow Q}_{0,n}$ with respect to ${\overrightarrow\Pi}_{0,n}$ as follows.
\begin{align}
{\mathbb{I}}_{X^n\rightarrow{Y}^n}({\overleftarrow P}_{0,n}, {\overrightarrow Q}_{0,n})&\tri\mathbb{D}({\overleftarrow P}_{0,n} \otimes {\overrightarrow Q}_{0,n}|| \overrightarrow\Pi_{0,n})\nonumber\\
&=\int_{{\cal X}_{0,n} \times {\cal Y }_{0,n}}\log \Big( \frac{d({\overleftarrow P}_{0,n}\otimes {\overrightarrow Q}_{0,n})}{d ( {\overleftarrow P}_{0,n}\otimes \nu_{0,n} ) }(x^n,y^n)\Big) ({\overleftarrow P}_{0,n}\otimes {\overrightarrow Q}_{0,n})(dx^n,dy^n)\label{equation202}\\
&= \int_{{\cal X}_{0,n} \times {\cal Y}_{0,n}} \log \Big( \frac{d{\overrightarrow Q}_{0,n}(\cdot|x^n)}{d\nu_{0,n}(\cdot)}(y^n)\Big)({\overleftarrow P}_{0,n}\otimes {\overrightarrow Q}_{0,n})(dx^n,dy^n)\label{equation6}\\
&\equiv\mathbb{I}_{X^n\rightarrow{Y}^n}({\overleftarrow P}_{0,n},{\overrightarrow Q}_{0,n})\label{section:properties:equations1}
\end{align}
Note that (\ref{equation6}) is obtained by utilizing the fact that if ${\overleftarrow P}_{0,n} \otimes {\overrightarrow Q}_{0,n}  << {\overleftarrow P}_{0,n} \otimes \nu_{0,n}$ then the RND $ \frac {d({\overleftarrow P}_{0,n} \otimes {\overrightarrow Q}_{0,n})}{d({\overleftarrow P}_{0,n}\otimes \nu_{0,n} )}(x^n,y^n)$ represents a version of $\frac {d{\overrightarrow Q}_{0,n}(\cdot|x^n) }{d\nu_{0,n} (\cdot)}(y^n)$, $\overleftarrow{P}_{0,n}-a.s$ for all  $x^n \in {\cal X}_{0,n}$. On the other hand, using Lemma~\ref{absolute_contunuity}, ${\overrightarrow Q}_{0,n}(\cdot|x^n)\ll\nu_{0,n}(\cdot),$ $\overleftarrow{P}_{0,n}-$almost $x^n\in{\cal X}_{0,n},$ and by Radon-Nikodym theorem, there exists a version of the RND $\bar{\xi}_{0,n}(x^n,y^n)\tri\frac{d{\overrightarrow Q}_{0,n}(\cdot|x^n)}{d\nu_{0,n}(\cdot)}(y^n)$ which is a non-negative measurable function of $(x^n,y^n)\in{\cal X}_{0,n}\times{\cal Y}_{0,n}.$ Hence another version of $\bar{\xi}_{0,n}(\cdot,\cdot)$ is $\bar{\xi}_{0,n}(x^n,y^n)=\frac{d({\overleftarrow P}_{0,n}\otimes{\overrightarrow Q}_{0,n})}{d({\overleftarrow P}_{0,n}\otimes\nu_{0,n})}(x^n,y^n)$.
We use notation ${\mathbb{I}}_{X^n\rightarrow{Y}^n}({\overleftarrow P}_{0,n}, {\overrightarrow Q}_{0,n})$ given in (\ref{section:properties:equations1}) to illustrate that $\mathbb{D}({\overleftarrow P}_{0,n} \otimes {\overrightarrow Q}_{0,n}|| \overrightarrow\Pi_{0,n})$ is a functional of $\big\{{\overleftarrow P}_{0,n}(\cdot|\cdot), {\overrightarrow Q}_{0,n}(\cdot|\cdot)\big\}\in{\cal Q}^{\bf C1}({\cal X}_{0,n}|{\cal Y}_{0,n-1})\times{\cal Q}^{\bf C2}({\cal Y}_{0,n}|{\cal X}_{0,n}).$
\vspace*{0.2cm}\\
\noi In the next Remark we summarize the equivalent definitions of directed information based on the two equivalent definitions of channels, that is, the one based on (\ref{equation29a}), (\ref{equation30}), and the one based on (\ref{equation202}), (\ref{equation6}). 
\begin{remark}\label{equivalent1}{\ \\}
Let ${\bf P}(\cdot|\cdot)\in{\cal Q}^{\bf C1}({\cal X}^{\mathbb{N}_0}|{\cal Y}^{\mathbb{N}_0})$ and ${\bf Q}(\cdot|\cdot)\in{\cal Q}^{\bf C2}({\cal Y}^{\mathbb{N}_0}|{\cal X}^{\mathbb{N}_0})$. By repeated application of Lemma~\ref{absolute_contunuity}, and the chain rule of relative entropy \cite[Theorem B.2.1., p. 326]{dupuis-ellis97}, directed information admits the following equivalent definitions.
\begin{align}
I(X^n&\rightarrow{Y}^n)\tri\sum_{i=0}^{n}I(X^i; Y_i | Y^{i-1})=\mathbb{D}({\overleftarrow P}_{0,n} \otimes {\overrightarrow Q}_{0,n}||{\overrightarrow\Pi}_{0,n})\label{equation33}\\
&=\int_{{\cal X}_{0,n} \times {\cal Y}_{0,n}} \log \Big( \frac{d{\overrightarrow Q}_{0,n}(\cdot|x^n)}{d\nu_{0,n}(\cdot)}(y^n)\Big)({\overleftarrow P}_{0,n}\otimes {\overrightarrow Q}_{0,n})(dx^n,dy^n)\equiv{\mathbb{I}}_{X^n\rightarrow{Y^n}}({\overleftarrow P}_{0,n}, {\overrightarrow Q}_{0,n}).\label{equation7a}
\end{align}
\end{remark}
\vspace*{.5cm}
\noi Clearly, (\ref{equation7a}) is valid even when  $({\overleftarrow P}_{0,n} \otimes {\overrightarrow Q}_{0,n})(dx^n,dy^n)$ is singular with respect to  $({\overleftarrow P}_{0,n}\otimes\nu_{0,n})(dx^n,dy^n)$, in which case its value is $+\infty$.
The point to be made here is that we will show the convexity, concavity, lower semicontinuity properties of directed information using the definition $I(X^n\rightarrow{Y^n})=\mathbb{D}({\overleftarrow P}_{0,n} \otimes {\overrightarrow Q}_{0,n}||{\overrightarrow\Pi}_{0,n})\equiv{\mathbb{I}}_{X^n\rightarrow{Y^n}}({\overleftarrow P}_{0,n}, {\overrightarrow Q}_{0,n}),$ as a functional of ${\overleftarrow P}_{0,n}(\cdot|y^{n-1})\in{\cal M}_1^{\bf C1}({\cal X}_{0,n})$ and ${\overrightarrow Q}_{0,n}(\cdot|x^n)\in{\cal M}_1^{\bf C2}({\cal Y}_{0,n})$. We will also use the directed information definition $\mathbb{D}({\overleftarrow P}_{0,n} \otimes {\overrightarrow Q}_{0,n}|| \overrightarrow\Pi_{0,n})$, as a functional of $\{{\overleftarrow P}_{0,n}, {\overrightarrow Q}_{0,n}\}$ to show lower semicontinuity, convexity and concavity properties. Then we will use these functional and topological properties to demonstrate how to establish existence of optimal solutions to the two extremum problems defined by (\ref{equation1dd}) and (\ref{introduction:nrdf:equation2}), respectively.\\

\subsection{Convexity and Concavity of Directed Information}

\par  First, we  show that the set of conditional distributions ${\bf P}(\cdot|{\bf y})\in{\cal M}^{\bf C1}_1({\cal X}^{\mathbb{N}_0})$ and ${\bf Q}(\cdot|{\bf x})\in{\cal M}^{\bf C2}_1({\cal Y}^{\mathbb{N}_0})$, i.e., satisfying consistency conditions {\bf C1} and {\bf C2}, are convex, and then we show convexity of directed information with respect to ${\bf Q}(\cdot|{\bf x})$ and concavity with respect to ${\bf P}(\cdot|{\bf y})$. \\
\noi Recall that the set of all distributions ${\bf P}(\cdot|{\bf y})\in{\cal M}_1({\cal X}^{\mathbb{N}_0})$ and ${\bf Q}(\cdot|{\bf x})\in{\cal M}_1({\cal Y}^{\mathbb{N}_0})$ (i.e., without imposing consistency conditions {\bf C1} and {\bf C2}) are convex, that is, given $\{{\bf P}^1(\cdot|{\bf y})$, ${\bf P}^2(\cdot|{\bf y})\}\in{\cal M}_1({\cal X}^{\mathbb{N}_0})\times{\cal M}_1({\cal X}^{\mathbb{N}_0})$, and $\lambda\in(0,1)$, there exists a probability measure $\tilde{P}$ on $({\cal X}^{\mathbb{N}_0}\times{\cal Y}^{\mathbb{N}_0},{\cal B}({\cal X}^{\mathbb{N}_0})\otimes{\cal B}({\cal Y}^{\mathbb{N}_0}))$ whose regular distribution $\tilde{P}(\cdot|{\bf y})$ satisfies $\tilde{P}(\cdot|{\bf y})=\lambda{\bf P}^1(\cdot|{\bf y})+(1-\lambda){\bf P}^2(\cdot|{\bf y})\in{\cal M}_1({\cal X}^{\mathbb{N}_0})$.\\
\noi Next, we show convexity of the sets ${\cal M}_1^{\bf C1}({\cal X}^{\mathbb{N}_0})$ and ${\cal M}_1^{\bf C2}({\cal Y}^{\mathbb{N}_0})$.
\begin{theorem}(Convexity of sets ${\cal M}_1^{\bf C1}({\cal X}^{\mathbb{N}_0})$, ${\cal M}_1^{\bf C2}({\cal Y}^{\mathbb{N}_0})$)\label{convexity_of_sets}{\ \\}
Let $\{{\cal X}_n:n\in{\mathbb{N}_0}\},$ $\{{\cal Y}_n:n\in{\mathbb{N}_0}\}$ be Polish spaces with ${\cal B}({\cal X}_n),$ ${\cal B}({\cal Y}_n)$, respectively, the $\sigma-$algebras of Borel sets. Then the sets of distributions ${\bf P}(\cdot|{\bf y})\in{\cal M}_1^{\bf C1}({\cal X}^{\mathbb{N}_0})$ and ${\bf Q}(\cdot|{\bf x})\in{\cal M}_1^{\bf C2}({\cal Y}^{\mathbb{N}_0})$ are convex, and similarly, their projection to finite number of coordinates, that is, $\overleftarrow{P}_{0,n}(\cdot|{y}^{n-1})\in{\cal M}_1^{\bf C1}({\cal X}_{0,n})$ and $\overrightarrow{Q}_{0,n}(\cdot|{x}^{n})\in{\cal M}_1^{\bf C2}({\cal Y}_{0,n})$, are also convex.
\end{theorem}
\begin{proof}
Since the methodology is similar for both sets, only the derivation for ${\cal M}_1^{\bf C1}({\cal X}^{\mathbb{N}_0})$ is given. By definition, the set of distributions ${\cal M}_1^{\bf C1}({\cal X}^{\mathbb{N}_0})$ is convex if for a given $\{{\bf P}^1(\cdot|{\bf y}), {\bf P}^2(\cdot|{\bf y})\}\in{\cal M}_1^{\bf C1}({\cal X}^{\mathbb{N}_0})\times{\cal M}_1^{\bf C1}({\cal X}^{\mathbb{N}_0})$, and a given $\lambda\in(0,1)$, there exists a probability measure ${\tilde{P}}$ on $({\cal X}^{\mathbb{N}_0}\times{\cal Y}^{\mathbb{N}_0},{\cal B}({\cal X}^{\mathbb{N}_0})\otimes{\cal B}({\cal Y}^{\mathbb{N}_0})$, whose regular conditional measure $\tilde{P}(\cdot|{\bf y})$ is a convex combination $\tilde{P}(\cdot|{\bf y})=\lambda{\bf P}^1(\cdot|{\bf y})+(1-\lambda){\bf P}^2(\cdot|{\bf y}),~a.e.~{\bf y}\in{\cal Y}^{\mathbb{N}_0}$, and consistency condition ${\bf C1}$ holds, i.e., $\lambda{\bf P}^1(\cdot|{\bf y})+(1-\lambda){\bf P}^2(\cdot|{\bf y})\in{\cal M}_1^{\bf C1}({\cal X}^{\mathbb{N}_0})$. By \cite{shiryaev1984}, the set of distributions ${\cal M}_1({\cal X}^{\mathbb{N}_0})$ is convex, and since $\{{\bf P}^1(\cdot|{\bf y}), {\bf P}^2(\cdot|{\bf y})\}\in{\cal M}_1({\cal X}^{\mathbb{N}_0})$\\ $\times{\cal M}_1({\cal X}^{\mathbb{N}_0})$, then there is a probability measure $\tilde{{P}}$ on ${\cal M}_1\big({\cal X}^{\mathbb{N}_0}\times{\cal Y}^{\mathbb{N}_0},{\cal B}({\cal X}^{\mathbb{N}_0}\otimes{\cal B}({\cal Y}^{\mathbb{N}_0}))$, whose regular distribution $\tilde{P}(\cdot|{\bf y})$, ${\bf y}\in{\cal Y}^{\mathbb{N}_0}$, satisfies
\begin{align}
\tilde{P}(\cdot|{\bf y})=\lambda{\bf P}^1(\cdot|{\bf y})+(1-\lambda){\bf P}^2(\cdot|{\bf y})\in{\cal M}_1({\cal X}^{\mathbb{N}_0}),~~\forall\lambda\in(0,1).\nonumber
\end{align}
Moreover, if ${\bf P}^1(\cdot|{\bf y})$, and ${\bf P}^2(\cdot|{\bf y})$ satisfy consistency condition ${\bf C1}$, then their convex combination also satisfies consistency condition {\bf C1}, and consequently $\lambda{{\bf P}}^1(\cdot|{\bf y})+(1-\lambda){{\bf P}}^2(\cdot|{\bf y})\in{\cal M}_1^{\bf C1}({\cal X}^{\mathbb{N}_0})$, i.e., the consistency condition ${\bf C1}$ holds. The derivation for ${\bf Q}(\cdot|{\bf x})\in{\cal M}_1^{\bf C2}({\cal Y}^{\mathbb{N}_0})$ is similar.  The derivation for the projection to finite number of coordinates is done as follows.  Let $A^{(n)}=\{{\bf x}:x_n{\in}A\},$ $A\in{\cal B}({\cal X}_n),$  and   let ${\bf P}(A^{(n)}|{{\cal B}({\cal X}_{0,n-1})}|{\bf y})$ denote the conditional probability of $A^{(n)}$ with respect to ${\cal B}({\cal X}_{0,n-1})$ calculated on the probability space $\big({\cal X}^{\mathbb{N}},{\cal B}({\cal X}^{\mathbb{N}}),{\bf  P}(\cdot|{\bf y})\big).$ From the definition of regular conditional probability measures,  it follows that
\begin{align}
\tilde{ P}(A^{(n)}|{\cal B}({\cal X}_{0,n-1})|{\bf y})&=\lambda{\bf P}^1(A^{(n)}|{\cal B}({\cal X}_{0,n-1})|{\bf y})+(1-\lambda){\bf P}^2(A^{(n)}|{\cal B}({\cal X}_{0,n-1})|{\bf y})-a.s.\nonumber\\
&=\lambda{p_n^1}(A|x^{n-1},y^{n-1})+(1-\lambda){p_n^2}(A|x^{n-1},y^{n-1})-a.s.\nonumber
\end{align}
where ${p_n^1}(\cdot|x^{n-1},y^{n-1}), {p_n^2}(\cdot|x^{n-1},y^{n-1})$ are regular conditional distributions.    Since convex combination of regular conditional distributions is also a regular conditional distribution, by Remark~\ref{equivalent}  the set $\overleftarrow{P}_{0,n}(\cdot| y^{n-1}) \in {\cal M}_1^{\bf C1}({\cal X}_{0,n})$ is convex, and the derivation is complete.
\end{proof}
\vspace*{0.5cm}
\noi Since ${\cal M}_1^{\bf C1}({\cal X}_{0,n})$ and ${\cal M}_1^{\bf C2}({\cal Y}_{0,n})$ are convex, then we proceed further to show that directed information ${\mathbb I}_{X^n\rightarrow{Y}^n}({\overleftarrow P}_{0,n},{\overrightarrow Q}_{0,n})$, as a functional of ${\overleftarrow P}_{0,n}(\cdot|y^{n-1})\in{\cal M}_1^{\bf C1}({\cal X}_{0,n})$, for a fixed ${\overrightarrow Q}_{0,n}(\cdot|x^n)\in{\cal M}_1^{\bf C2}({\cal Y}_{0,n})$, is concave, and as a functional of ${\overrightarrow Q}_{0,n}(\cdot|x^n)\in{\cal M}_1^{\bf C2}({\cal Y}_{0,n})$, for a fixed ${\overleftarrow P}_{0,n}(\cdot|y^{n-1})\in{\cal M}_1^{\bf C1}({\cal X}_{0,n})$, is convex. These results are shown in the next theorem.
\begin{theorem}(Convexity of conditional distributions)\label{convexity1}{\ \\}
Let $\{{\cal X}_n:n\in{\mathbb{N}_0}\},$ $\{{\cal Y}_n:n\in{\mathbb{N}_0}\}$ be Polish spaces with ${\cal B}({\cal X}_n),$ ${\cal B}({\cal Y}_n)$, respectively, the $\sigma-$algebras of Borel sets. Consider the directed information functional $I(X^n\rightarrow{Y}^n)={\mathbb I}_{X^n\rightarrow{Y}^n}$ $({\overleftarrow P}_{0,n},{\overrightarrow Q}_{0,n}),$ ${\mathbb I}_{X^n\rightarrow{Y}^n}:{\cal M}_1^{\bf C1}({\cal X}_{0,n})\times{\cal M}_1^{\bf C2}({\cal Y}_{0,n})\mapsto[0,\infty]$ defined by (\ref{equation7a}).\\
Then the following hold.
\item[1)] ${\mathbb{I}}_{X^n\rightarrow{Y}^n}({\overleftarrow P}_{0,n},{\overrightarrow Q}_{0,n})$ is a convex functional of ${\overrightarrow Q}_{0,n}(\cdot|x^n)\in{\cal M}_1^{\bf C2}({\cal Y}_{0,n})$ for a fixed ${\overleftarrow P}_{0,n}(\cdot|y^{n-1})\in{\cal M}_1^{\bf C1}({\cal X}_{0,n})$.
\item[2)] ${\mathbb{I}}_{X^n\rightarrow{Y}^n}({\overleftarrow P}_{0,n},{\overrightarrow Q}_{0,n})$ is a concave functional of ${\overleftarrow P}_{0,n}(\cdot|y^{n-1})\in{\cal M}_1^{\bf C1}({\cal X}_{0,n})$ for a fixed ${\overrightarrow Q}_{0,n}(\cdot|x^n)\in{\cal M}_1^{\bf C2}({\cal Y}_{0,n})$.
\item[3)] ${\mathbb{I}}_{X^n\rightarrow{Y}^n}({\overleftarrow P}_{0,n},\cdot)$ is a strictly convex functional on the set $\big\{{\overrightarrow Q}_{0,n}(\cdot|x^n)\in{\cal M}_1^{\bf C2}({\cal Y}_{0,n}):{\mathbb{I}}_{X^n\rightarrow{Y}^n}({\overleftarrow P}_{0,n},{\overrightarrow Q}_{0,n})$ $<\infty\big\}$ for a fixed ${\overleftarrow P}_{0,n}(\cdot|y^{n-1})\in{\cal M}_1^{\bf C1}({\cal X}_{0,n})$.
\end{theorem}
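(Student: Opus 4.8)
The three assertions are the directed-information counterparts of the classical facts that mutual information is convex in the channel, concave in the input, and strictly convex in the channel. All three rest on one structural observation which I would establish first: for fixed ${\overleftarrow P}_{0,n}$ the map ${\overrightarrow Q}_{0,n}\mapsto{\overleftarrow P}_{0,n}\otimes{\overrightarrow Q}_{0,n}$ is affine, and symmetrically, for fixed ${\overrightarrow Q}_{0,n}$ the map ${\overleftarrow P}_{0,n}\mapsto{\overleftarrow P}_{0,n}\otimes{\overrightarrow Q}_{0,n}$ is affine. This holds because the joint law (\ref{equation18}) is, in the Radon--Nikodym sense, the product of the two convolutional kernels ${\overleftarrow P}_{0,n}(dx^n|y^{n-1})$ and ${\overrightarrow Q}_{0,n}(dy^n|x^n)$: although the one-step disintegrations $q_i$ depend nonlinearly on ${\overrightarrow Q}_{0,n}$, their convolution $\otimes_{i=0}^n q_i={\overrightarrow Q}_{0,n}$ is precisely the object being combined, and multiplication by the fixed factor ${\overleftarrow P}_{0,n}$ preserves affinity. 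Marginalizing over $x^n$ then shows ${\overrightarrow Q}_{0,n}\mapsto\nu_{0,n}$ is affine, whence ${\overrightarrow Q}_{0,n}\mapsto{\overrightarrow\Pi}_{0,n}={\overleftarrow P}_{0,n}\otimes\nu_{0,n}$ is affine as well.

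For claim 1) I would fix ${\overleftarrow P}_{0,n}$ and set ${\overrightarrow Q}^{\lambda}=\lambda{\overrightarrow Q}^{1}+(1-\lambda){\overrightarrow Q}^{2}$. By the observation above, both ${\overleftarrow P}_{0,n}\otimes{\overrightarrow Q}^{\lambda}$ and the associated reference ${\overrightarrow\Pi}^{\lambda}={\overleftarrow P}_{0,n}\otimes\nu^{\lambda}$ are the $\lambda$-convex combinations of the corresponding measures built from ${\overrightarrow Q}^{1}$ and ${\overrightarrow Q}^{2}$. Since $\mathbb{I}_{X^n\rightarrow Y^n}=\mathbb{D}({\overleftarrow P}_{0,n}\otimes{\overrightarrow Q}_{0,n}||{\overrightarrow\Pi}_{0,n})$ by (\ref{equation33}), and relative entropy is jointly convex in its two arguments (the log-sum inequality, i.e. joint convexity of the perspective map $(a,b)\mapsto a\log(a/b)$), convexity of ${\overrightarrow Q}_{0,n}\mapsto\mathbb{I}_{X^n\rightarrow Y^n}({\overleftarrow P}_{0,n},{\overrightarrow Q}_{0,n})$ follows immediately.

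For claim 2) joint convexity points the wrong way, so I would use a variational (golden-formula) representation. For any $R\in{\cal M}_1({\cal Y}_{0,n})$, splitting the logarithm gives
\[
\int\log\Big(\frac{{\overrightarrow Q}_{0,n}(dy^n|x^n)}{R(dy^n)}\Big)\,d({\overleftarrow P}_{0,n}\otimes{\overrightarrow Q}_{0,n})=\mathbb{I}_{X^n\rightarrow Y^n}({\overleftarrow P}_{0,n},{\overrightarrow Q}_{0,n})+\mathbb{D}(\nu_{0,n}||R),
\]
because $\log(\nu_{0,n}/R)$ depends only on $y^n$ and $\nu_{0,n}$ is the $Y$-marginal of ${\overleftarrow P}_{0,n}\otimes{\overrightarrow Q}_{0,n}$; as $\mathbb{D}(\nu_{0,n}||R)\ge 0$ with equality at $R=\nu_{0,n}$, the left side is minimized over $R$ exactly at the value $\mathbb{I}_{X^n\rightarrow Y^n}$. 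For fixed ${\overrightarrow Q}_{0,n}$ and fixed $R$ the integrand is independent of ${\overleftarrow P}_{0,n}$, while the integrating measure is affine in ${\overleftarrow P}_{0,n}$; hence the inner functional is affine in ${\overleftarrow P}_{0,n}$, and a pointwise infimum of affine functionals is concave. This yields concavity of ${\overleftarrow P}_{0,n}\mapsto\mathbb{I}_{X^n\rightarrow Y^n}({\overleftarrow P}_{0,n},{\overrightarrow Q}_{0,n})$.

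For claim 3) I would sharpen the estimate in claim 1) on the finiteness set using the strict case of the log-sum inequality: equality in the joint convexity of $\mathbb{D}(\cdot||\cdot)$ forces the densities $\xi^{j}={\overrightarrow Q}^{j}(dy^n|x^n)/\nu^{j}_{0,n}(dy^n)$ of the two endpoints to coincide $({\overleftarrow P}_{0,n}\otimes\nu^{\lambda})$-almost everywhere, and I would then recover ${\overrightarrow Q}^{1}_{0,n}={\overrightarrow Q}^{2}_{0,n}$ from $\xi^{1}=\xi^{2}$ by invoking that each ${\overrightarrow Q}^{j}(\cdot|x^n)$ is a probability measure with induced marginal $\nu^{j}_{0,n}$. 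I expect this last step to be the main obstacle: the equality case only delivers equal likelihood ratios, and passing from $\xi^{1}=\xi^{2}$ to equal channels requires the normalization and marginal-consistency relations, which degenerate in certain directions, e.g. feedback-free kernels ${\overrightarrow Q}_{0,n}(dy^n|x^n)=R(dy^n)$, for which $\xi\equiv 1$ independently of $R$. Isolating and excluding such directions, so that finiteness of $\mathbb{I}_{X^n\rightarrow Y^n}$ together with ${\overrightarrow Q}^1_{0,n}\neq{\overrightarrow Q}^2_{0,n}$ genuinely forces $\xi^1\neq\xi^2$ on a set of positive measure, is the delicate point; claims 1) and 2) are routine once the bi-affinity of the joint law is in hand.
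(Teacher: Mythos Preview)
Your arguments for parts 1) and 2) are correct and coincide in substance with the paper's proofs. For 1), the paper applies the log-sum inequality directly to the densities $\Psi^j_{0,n}=\overrightarrow{Q}^j_{0,n}(\cdot|x^n)/\nu_{0,n}$ and $K^j_{0,n}=\nu^j_{0,n}/\nu_{0,n}$, which is joint convexity of $\mathbb{D}(\cdot\,\|\,\cdot)$ written out pointwise; your formulation via the bi-affinity of $\overrightarrow Q_{0,n}\mapsto(\overleftarrow P_{0,n}\otimes\overrightarrow Q_{0,n},\,\overrightarrow\Pi_{0,n})$ is the same proof one level of abstraction up. For 2), the paper fixes an auxiliary $U_{0,n}\in\mathcal{M}_1(\mathcal{Y}_{0,n})$ with $\nu_{0,n}\ll U_{0,n}$ and writes
\[
\mathbb{I}_{X^n\rightarrow Y^n}(\overleftarrow P_{0,n},\overrightarrow Q_{0,n})=\int\log\frac{\overrightarrow Q_{0,n}(dy^n|x^n)}{U_{0,n}(dy^n)}\,d(\overleftarrow P_{0,n}\otimes\overrightarrow Q_{0,n})-\mathbb{D}(\nu_{0,n}\|U_{0,n});
\]
the first term is affine in $\overleftarrow P_{0,n}$ and the second is convex in $\nu_{0,n}$ (hence in $\overleftarrow P_{0,n}$), so the difference is concave. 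Your variational route is this same identity followed by an infimum over $U_{0,n}$, and is in fact the paper's later Theorem~\ref{variational_equalities}, Part~A; the only caveat is that you are forward-referencing a result proved after the present theorem, whereas the paper works with a single fixed $U_{0,n}$ and sidesteps that issue.

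Your concern about part 3) is well founded, and the paper's own argument does not resolve it. The paper invokes strict convexity of $s\mapsto s\log s$ together with finiteness of $\mathbb{I}$, but the equality case of the log-sum inequality only yields $\overrightarrow Q^1_{0,n}(\cdot|x^n)/\nu^1_{0,n}=\overrightarrow Q^2_{0,n}(\cdot|x^n)/\nu^2_{0,n}$ a.e., exactly the obstruction you flag. The counterexample you sketch is genuine: take $\overrightarrow Q^j_{0,n}(dy^n|x^n)=R^j(dy^n)$ independent of $x^n$, with $R^1\neq R^2$. Both satisfy consistency \textbf{C2}, both give $\nu^j_{0,n}=R^j$ and hence $\mathbb{I}_{X^n\rightarrow Y^n}(\overleftarrow P_{0,n},\overrightarrow Q^j_{0,n})=0<\infty$, and the convex combination is again $x^n$-independent with directed information zero, so the strict inequality fails. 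Thus the assertion as stated needs an additional nondegeneracy hypothesis; you have correctly located a gap that the paper's proof shares.
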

\begin{proof} By Theorem~\ref{convexity_of_sets}, the sets ${\cal M}_1^{\bf C1}({\cal X}_{0,n})$ and ${\cal M}_1^{\bf C2}({\cal Y}_{0,n})$ are convex. Therefore, to show parts 1), 2), 3) we utilize the consistency of the two families of conditional distributions and we apply the log-sum formulae, and the existence of certain Radon-Nikodym Derivatives (RNDs). The complete derivation is given in Appendix~\ref{convexity_of_functionals}.  
\end{proof}
\vspace*{.2cm}
\noi Theorem~\ref{convexity1} is analogous to mutual information $I(X^n;Y^n)\equiv{\mathbb{I}}_{X^n;{Y}^n}(P_{X^n},P_{Y^n|X^n})$, expressed as a functional of input distribution $P_{X^n}(\cdot)\in{\cal M}_1({\cal X}_{0,n})$ and the channel $P_{Y^n|X^n}(\cdot|x^n)\in{\cal M}_1({\cal Y}_{0,n})$, which is known to be a convex (respectively concave) functional of $P_{Y^n|X^n}(\cdot|x^n)\in{\cal M}_1({\cal Y}_{0,n})$ \big(respectively $P_{X^n}(\cdot)\in{\cal M}_1({\cal X}_{0,n})$\big), for a fixed $P_{X^n}(\cdot)\in{\cal M}_1({\cal X}_{0,n})$ \big(respectively $P_{Y^n|X^n}(\cdot|x^n)\in{\cal M}_1({\cal Y}_{0,n})$\big). It is important to point out that if one considers the alternative definition of directed information (\ref{equation29}), (\ref{equation30}), as a functional of the sequence of input channel distributions, $I(X^n\rightarrow{Y^n})\equiv\mathbb{I}_{X^n\rightarrow{Y^n}}(p_i(\cdot|\cdot,\cdot),q_i(\cdot|\cdot,\cdot):~i=0,1,\ldots,n)$, then it is not clear to us whether it is possible to establish convexity and concavity with respect to $q_i$ and $p_i$.   \\
\noi For finite alphabet spaces, the convexity of the set of causally conditioned probability mass functions $P(x^n||y^{n-1})\tri\prod_{i=0}^n{p}(x_i|x^{i-1},y^{i-1})$ and $Q(y^n||x^n)\tri\prod_{i=0}^n{q}(y_i|y^{i-1},x^i)$ is shown in \cite[Lemma 1]{permuter-asnani-weissman2014ieeeit}, under the assumption that for each $n\in\mathbb{N}_0$, the ratios $\frac{P(x^n||y^{n-1})}{P(x^{n-1}||y^{n-1})}$ and $\frac{Q(y^n||x^{n})}{Q(y^{n-1}||x^{n-1})}$ exist, and they are given by ${p}(x_n|x^{n-1},y^{n-1})$ and ${q}(y_n|y^{n-1},x^n)$, respectively. The derivation in \cite{permuter-asnani-weissman2014ieeeit} is based on showing that the set of all causally conditioned distributions $P(x^n||y^{n-1})$ is a polyhedron. The method described in \cite{permuter-asnani-weissman2014ieeeit} does not apply to conditional distributions defined on continuous alphabets. Theorem~\ref{convexity_of_sets} and Theorem~\ref{convexity1}, hold for general conditional distributions defined on abstract alphabet spaces, and they do not require existence of probability density functions (corresponding to the causally conditioned distributions for each $n\in\mathbb{N}_0$), hence they compliment the work in \cite{permuter-asnani-weissman2014ieeeit}.


\subsection{Weak Convergence and Compactness of Conditional Distributions}

\par In this section we give general sufficient conditions for weak compactness of the set of probability distributions $\overleftarrow{P}_{0,n}(\cdot|y^{n-1})\in{\cal M}_1^{\bf C1}({\cal X}_{0,n})$ and $\overrightarrow{Q}_{0,n}(\cdot|x^n)\in{{\cal M}_1^{\bf C2}}({\cal Y}_{0,n}),$ and compactness of the set of joint and marginal measures with respect to the topology of weak convergence of probability measures. These conditions are sufficient to show lower semicontinuity of ${\mathbb{I}}_{X^n\rightarrow{Y}^n}({\overleftarrow P}_{0,n},{\overrightarrow Q}_{0,n})$ for fixed ${\overleftarrow P}_{0,n}(\cdot|y^{n-1})\in{\cal M}_1^{\bf C1}({\cal X}_{0,n})$ \big(respectively ${\overrightarrow Q}_{0,n}(\cdot|x^n)\in{\cal M}_1^{\bf C2}({\cal Y}_{0,n})$\big) with respect to ${\overrightarrow Q}_{0,n}(\cdot|x^n)\in{\cal M}_1^{\bf C2}({\cal Y}_{0,n})$ \big(respectively ${\overleftarrow P}_{0,n}(\cdot|y^{n-1})\in{\cal M}_1^{\bf C1}({\cal X}_{0,n})$\big). The lower semicontinuity of directed information is the analogue of the lower semicontinuity of mutual information, extensively utilized in information theory and statistics (see \cite{csiszar74,csiszar92}). 
\vspace*{0.2cm}\\
\noi Before we state the main theorem, we introduce the following notation.
Let $BC({\cal X})$ denote the set of bounded, continuous real-valued function $f$ defined on a metric space $({\cal X},d)$ endowed with the supremum norm $||f||=\sup_{x\in{\cal X}}|f(x)|$. A sequence of probability measures $\{P_\alpha:~\alpha=1,2,\ldots\}\subset{\cal M}_1({\cal X})$ is said to {\it converge weakly} to a probability measure $P\in{\cal M}_1({\cal X})$ if \cite{billingsley1999}
\begin{align}
\lim_{\alpha\rightarrow\infty}\int_{\cal X}f(x)dP_\alpha(x)=\int_{\cal X}f(x)dP(x),~\forall{f}\in{BC}({\cal X}).\nonumber  
\end{align}
Weak convergence of $\{P_\alpha:~\alpha=1,2,\ldots\}$ to $P$ is denoted by $P_\alpha\buildrel w \over \longrightarrow{P}$. A family of probability measures $M\subset{\cal M}_1({\cal X})$ is called {\it relatively compact or weakly compact} if every sequence in $M$ contains a weakly convergent subsequence that converges to ${\cal M}_1({\cal X})$ but not necessarily to $M$. Appendix~\ref{backround_material} summarizes well-known theorems of weak convergence, compactness, tightness, and Prohorov's theorem, which we invoke to derive the results of this section. \\
Throughout sequences of points in ${\cal X}^{\mathbb{N}_0}$ and ${\cal Y}^{\mathbb{N}_0}$ are denoted by ${\bf x}^{(\alpha)}\tri\{x^{(\alpha)}_0,x^{(\alpha)}_1,\ldots\}\in{\cal X}^{\mathbb{N}_0},$ ${\bf y}^{(\alpha)}\tri\{y^{(\alpha)}_0,y^{(\alpha)}_1,\ldots\}\in{\cal Y}^{\mathbb{N}_0},$ $\alpha=1,2,\ldots$ Moreover, a sequence of points ${\bf x}^{(\alpha)}\in{\cal X}^{\mathbb{N}_0},~\alpha=1,2,\ldots$ is said to converge to ${\bf x}^{(o)}\in{\cal X}^{\mathbb{N}_0}$ as $\alpha\longrightarrow\infty$, if $\lim_{\alpha\longrightarrow\infty}x^{(\alpha)}_n={x}^{(o)}_n~\mbox{for every}~n\in\mathbb{N}_0$. Sequences of such  points in ${\cal X}_{0,n}\tri\times^{n}_{i=0}{\cal X}_i$ and ${\cal Y}_{0,n}\tri \times^{n}_{i=0}{\cal Y}_i$ are denoted by $x^{n,(\alpha)}\tri\{x_0^{(\alpha)},x_1^{(\alpha)},\ldots,x_n^{(\alpha)}\}$ and $y^{n,(\alpha)}\tri\{{y_0^{(\alpha)},y_1^{(\alpha)},\ldots,y_n^{(\alpha)}}\},~\alpha=1,2,\ldots$.\\
\noi The next remark, is introduced to illustrate that in applications of weak convergence of probability distributions, weak continuity of probability distributions is natural, when analyzing conditional distributions with discontinuities, such as, distributions induced by mixture of discrete and continuous RVs. 

\begin{remark}(Weak continuity vs. Strong continuity){\ \\}
Let $q(\cdot|\cdot)\in{\cal Q}({\cal Y}|{\cal X})$ be a conditional distribution, and suppose there is a distribution $\mu(dx)\in{\cal M}_1({\cal X})$ such that for every $x\in{\cal X}$, $q(\cdot|x)$ has a density $\bar{q}(\cdot|x)$ with respect to $\mu(\cdot)$, i.e.,
\begin{align*}
q(B|x)=\int_{B}\bar{q}(y|x)\mu(dx),~\forall{B}\in{\cal B}({\cal Y}),~\forall{x}\in{\cal X}.
\end{align*}
For example, if ${\cal X}\in\mathbb{R}$ then $\mu(dx)=dx$ is the Lebesgue measure on $\mathbb{R}$. If $\bar{q}(y|\cdot)$ is continuous on ${\cal X}$ for every $y\in{\cal Y}$, then $q(\cdot|\cdot)\in{\cal Q}({\cal Y}|{\cal X})$ is strongly continuous \big(i.e., $q(B|\cdot)$ is continuous on ${\cal X}$ for every $B\in{\cal B}({\cal Y})$\big). Strong continuity of channel models is rather restrictive, because it rules out conditional distributions which have discontinuities, such as, additive noise channels, in which noise is a mixture of a continuous RV (i.e., Gaussian distributed RV) and a finite alphabet valued   RV.\\
Consider a channel model with feedback described by the nonlinear recursive equation
\begin{align*}
Y_n=h_n(Y^{n-1},X_n,V_n),~Y^{-1}=y^{-1},~n=0,1,\ldots
\end{align*}
where $\{h_n:{\cal Y}_{0,n-1}\times{\cal X}_n\times{\cal V}_n\longmapsto{\cal Y}_n: n=0,1, \ldots,\}$, is a sequence of measurable functions and $\{V_n:~n=0,1,\ldots\}$ is a sequence of $\{{\cal V}_n:~n=0,1,\ldots\}$-valued RV's, representing the channel noise.\\
Suppose the following condition holds.
\begin{align*}
P_{V_n|V^{n-1},X^n,Y^{n-1}}(dv_n|v^{n-1},x^n,y^{n-1})=P_{V_n}(v_n),~n=0,1,\ldots.
\end{align*}
Then the channel distribution induced by the above model is
\begin{align*}
q_n(B|y^{n-1},x^n)&=\mathbb{P}\{Y_n\in{B}|Y^{n-1}=y^{n-1},X^n=x^n\},~B\in{\cal B}({\cal Y})\\
&=\mathbb{P}\{h_n(Y^{n-1},X_n,V_n)\in{B}|Y^{n-1}=y^{n-1},X^n=x^n\},\\
&={\bf P}(\{v_n\in{V}_n:~h_n(y^{n-1},x_n,v_n)\in{B}\})=\int_{{\cal V}_n}I_B\big(h_n(y^{n-1},x_n,v_n)\big)P_{V_n}(dv_n)\\
&\equiv{q}_n(B|y^{n-1},x_n)
\end{align*}
where $I_B(\cdot)$ is the indicator function. If for each $n$, the function $h_n(\cdot,\cdot,v_n)$ is continuous on ${\cal Y}_{0,n-1}\times{\cal X}_n$ for every $v_n\in{\cal V}_n$, $n=0,1,\ldots$, then by bounded convergence theorem $\{q_n(\cdot|\cdot,\cdot)\in{\cal Q}({\cal Y}_n|{\cal Y}_{0,n-1}\times{\cal X}_n):~n=0,1,\ldots\}$ is weakly continuous (see Definition~\ref{App_SW}), i.e., for each sequence $\{(y^{n-1, (\alpha)}, x_n^{(\alpha)}): \alpha =1, \ldots\} \subset {\cal Y}_{0,n-1} \times {\cal X}_{n}$ such that $(y^{n-1, (\alpha)}, x_n^{(\alpha)}) \longrightarrow (y^{n-1, (o)}, x_n^{(o)})$, then $\lim_{\alpha \longrightarrow \infty}  \int_{{\cal Y}_n} g(y_n) q_n(dy_n|y^{n-1, (\alpha)},x_n^{(\alpha)}) =\int_{{\cal Y}_n} g(y_n) q_n(dy_n|y^{n-1, (o)},x_n^{(o)})$, for all bounded continuous functions $g(\cdot) \in BC({\cal Y}_n)$. Hence, no requirement is imposed on the distribution of $\{P_{V_n}(\cdot)\in{\cal M}_1({\cal V}_n):~n=0,1,\ldots\}$.\\
On the other hand, consider the special case of an additive channel,  of the form 
\begin{align*}
Y_n=\bar{h}_n(Y^{n-1},X_n)+V_n,~n=0,1,\ldots
\end{align*} 
where  $P_{V_n}(dv_n)$ is assumed to have a density, $\bar{p}(v_n)$, i.e.,  $P_{V_n}(dv_n)=\bar{p}(v_n)dv_n,~n=0,1,\ldots$. Then $\{q_n(\cdot|\cdot,\cdot)\in{\cal Q}({\cal Y}_n|{\cal Y}_{0,n-1}\times{\cal X}_n):~n=0,1,\ldots\}$ is strongly continuous if for each $n$, $\bar{h}(\cdot,\cdot)$ is continuous on ${\cal Y}_{0,n-1}\times{\cal X}_n$ and $\bar{p}(\cdot)$ is continuous on ${\cal V}_n$, for $n=0,1,\ldots$.\\
Clearly, when proving properties of mutual information or directed information, weak continuity is more general (less restrictive), compared to strong continuity, which by definition rules out many interesting application examples. 
\end{remark}

\noi Next, we state the main theorem which is also used to show lower semicontinuity of directed information. The theorem consists of two parts depending on whether, A) ${\cal Y}_{0,n}$ is compact and $p_n(dx_n|\cdot,\cdot)$ as a function of $(x^{n-1},y^{n-1})\in{\cal X}_{0,n-1}\times{\cal Y}_{0,n-1}$ is weakly continuous, and B) ${\cal X}_{0,n}$ is compact and $q_n(dy_n|\cdot,\cdot)$ as a function of $(y^{n-1},x^{n})\in{\cal Y}_{0,n-1}\times{\cal X}_{0,n}$ is weakly continuous. In applications of information theory either one of them or both maybe required, depending on the context of the application considered.
\begin{theorem}\label{weak_convergence}{\ \\}
{\bf Part A.} For each $n\in\mathbb{N}_0$, let ${\cal Y}_{0,n}$ be a compact Polish space, ${\cal X}_{0,n}$ a Polish space, and assume the collection of conditional distributions $\{p_n(\cdot|\cdot,\cdot)\in{\cal Q}({\cal X}_n|{\cal X}_{0,n-1}\times{\cal Y}_{0,n-1}):~n\in\mathbb{N}_0\}$ satisfy the following condition.
\vspace*{0.2cm}\\
\noi{\bf CA}: For all $g(\cdot){\in}BC({\cal X}_{0,n})$, the function
\begin{align}
(x^{n-1},y^{n-1})\in{\cal X}_{0,n-1}\times{\cal Y}_{0,n-1}\longmapsto\int_{{\cal X}_n}g(x)p_n(dx|x^{n-1},y^{n-1})\in\mathbb{R}\label{condition1}
\end{align}
is continuous jointly in the variables $(x^{n-1},y^{n-1})\in{\cal X}_{0,n-1}\times{\cal Y}_{0,n-1}$.\\
\vspace*{0.2cm}
\noi Then the following hold.
\item[A1)] Let ${\overleftarrow P}_{0,n}(\cdot|y^{n-1})\in{\cal M}_1^{\bf C1}({\cal X}_{0,n})$ and consider a sequence of forward channels $\big\{{\overrightarrow Q}_{0,n}^{\alpha}(\cdot|{ x^n}):\alpha=1,2,\ldots\big\}\subset{\cal M}_1^{\bf C2}({\cal Y}_{0,n})$. Then the sequence of joint measures $\{({\overleftarrow P}_{0,n}\otimes\overrightarrow{Q}^{\alpha}_{0,n}):~\alpha=1,2,\ldots\}$ converges weakly to a  joint measure $P^o(dx^n,dy^n)$, that is, 
\begin{align}
({\overleftarrow P}_{0,n}\otimes\overrightarrow{Q}^{\alpha}_{0,n})(dx^n,dy^n)\buildrel w \over\longrightarrow{P}^o(dx^n,dy^n)=({\overleftarrow P}_{0,n}\otimes\bar{Q}^o_{0,n})(dx^n,dy^n)\in{\cal M}_1({\cal X}_{0,n}\times{\cal Y}_{0,n})\label{theorem:weak:compactness:eq.1}
\end{align}
where the joint measure ${P}^o(dx^n,dy^n)$ corresponds to the same backward channel $\overleftarrow{P}_{0,n}(\cdot|y^{n-1})\in{\cal M}_1^{\bf C1}({\cal X}_{0,n})$ and a forward channel $\bar{Q}_{0,n}^o(\cdot|x^n)\in{\cal M}_1({\cal Y}_{0,n})$ (i.e., not necessarily in ${\cal M}_1^{\bf C2}({\cal Y}_{0,n})$). Equivalently, $\{({\overleftarrow P}_{0,n}\otimes\overrightarrow{Q}^{\alpha}_{0,n}):~\alpha=1,2,\ldots\}$ is relatively or weakly compact.\\
\noi Moreover, the corresponding sequence of marginal measures $\{\nu_{0,n}^{\alpha}(\cdot)\in{\cal M}_1({\cal Y}_{0,n}):\alpha=1,2,\ldots\}$ on ${\cal Y}_{0,n}$ and $\{\mu_{0,n}^{\alpha}(\cdot)\in{\cal M}_1({\cal X}_{0,n}):\alpha=1,2,\ldots\}$ on ${\cal X}_{0,n}$, converges weakly, that is, 
\begin{align*}
\nu_{0,n}^{\alpha}(dy^n)\buildrel w \over\longrightarrow\nu_{0,n}^o(dy^n)~\mbox{and}~\mu_{0,n}^{\alpha}(dx^n)\buildrel w \over\longrightarrow\mu_{0,n}^o(dx^n)
\end{align*}
where $\nu_{0,n}^o(\cdot)\in{\cal M}_1({\cal Y}_{0,n})$ and $\mu_{0,n}^o(\cdot)\in{\cal M}_1({\cal X}_{0,n})$ are the marginals of the joint measure in \eqref{theorem:weak:compactness:eq.1}.
\item[A2)] The  set of measures $\overleftarrow{P}_{0,n}(\cdot|y^{n-1})\in{\cal M}_1^{\bf C1}({\cal X}_{0,n})$ is uniformly tight.
\item[A3)] The  set of measures $\overrightarrow{Q}_{0,n}(\cdot|x^n)\in{\cal M}_1^{\bf C2}({\cal Y}_{0,n})$ is relatively compact.
\item[A4)] Let ${\overleftarrow P}_{0,n}(\cdot|y^{n-1})\in{\cal M}_1^{\bf C1}({\cal X}_{0,n}),$ $\big\{{\overrightarrow Q}_{0,n}^{\alpha}(\cdot|{ x^n}):\alpha=1,2,\ldots\big\}\subset{\cal M}_1^{\bf C2}({\cal Y}_{0,n})$, where $\{\nu_{0,n}^{\alpha}(\cdot)\in{\cal M}_1({\cal Y}_{0,n}):\alpha=1,2,\ldots\}$ are the marginals of $\big\{({\overleftarrow P}_{0,n}\otimes{\overrightarrow Q}_{0,n}^\alpha)(dx^n,dy^n)\in{\cal M}_1({\cal X}_{0,n}\times{\cal Y}_{0,n}):\alpha=1,2,\ldots\big\}.$  Then 
\begin{align*}
\overrightarrow{\Pi}_{0,n}^\alpha(dx^n,dy^n)\equiv{\overleftarrow P}_{0,n}(dx^n|dy^{n-1})\otimes\nu_{0,n}^{\alpha}(dy^n)\buildrel w \over\longrightarrow{\overleftarrow P}_{0,n}(dx^n|dy^{n-1})\otimes\nu_{0,n}^o(dy^n)\equiv\overrightarrow{\Pi}_{0,n}^o(dx^n,dy^n)
\end{align*}
where $\nu_{0,n}^o(\cdot)\in{\cal M}_1({\cal Y}_{0,n})$ is the weak limit of the marginal in \eqref{theorem:weak:compactness:eq.1}.
\vspace*{0.2cm}\\
\noi{\bf Part B.} For each $n\in\mathbb{N}_0$, let ${\cal X}_{0,n}$ be a compact Polish space, ${\cal Y}_{0,n}$ a Polish space, and assume the collection of conditional distributions $\{q_n(\cdot|\cdot,\cdot)\in{\cal Q}({\cal Y}_n|{\cal Y}_{0,n-1}\times{\cal X}_{0,n}):~n\in\mathbb{N}_0\}$ satisfy the following condition.
\vspace*{0.2cm}\\
\noi {\bf CB}: For all $h(\cdot){\in}BC({\cal Y}_{0,n})$, the function
\begin{align}
(x^{n},y^{n-1})\in{\cal X}_{0,n}\times{\cal Y}_{0,n-1}\longmapsto\int_{{\cal Y}_n}h(y)q_n(dy|y^{n-1},x^n)\in\mathbb{R}\label{condition2}
\end{align}
is continuous jointly in the variables $(x^{n},y^{n-1})\in{\cal X}_{0,n}\times{\cal Y}_{0,n-1}$.\\
\vspace*{0.2cm}
\noi Then the following hold.
\item[B1)] Let ${\overrightarrow Q}_{0,n}(\cdot|x^{n})\in{\cal M}_1^{\bf C2}({\cal Y}_{0,n})$ and consider a sequence of backward channels $\big\{{\overleftarrow P}_{0,n}^{\alpha}(\cdot| y^{n-1}):\alpha=1,2,\ldots\big\}\subset{\cal M}_1^{\bf C1}({\cal X}_{0,n})$. Then, the joint measures $\{(\overleftarrow{P}_{0,n}^{\alpha}\otimes{\overrightarrow Q}_{0,n}):~\alpha=1,2,\ldots\}$ converges weakly to a joint measure $P^o(dx^n,dy^n)$, that is,
\begin{align}
(\overleftarrow{P}_{0,n}^{\alpha}\otimes{\overrightarrow Q}_{0,n})(dx^n,dy^n)\buildrel w \over\longrightarrow{P}^o(dx^n,dy^n)=(\bar{P}_{0,n}^o\otimes{\overrightarrow Q}_{0,n})(dx^n,dy^n)\in{\cal M}_1({\cal X}_{0,n}\times{\cal Y}_{0,n})
\label{theorem:weak:compactness:eq.2}
\end{align}
where the joint measure $P^o(dx^n,dy^n)$ corresponds to the same forward channel ${\overrightarrow Q}_{0,n}(\cdot|x^{n})\in{\cal M}_1^{\bf C2}({\cal Y}_{0,n})$ and a backward channel $\bar{P}_{0,n}^o(\cdot|y^{n-1})\in{\cal M}_1({\cal X}_{0,n})$ (i.e., not necessarily in ${\cal M}_1^{\bf C1}({\cal X}_{0,n})$). Equivalently, $\{(\overleftarrow{P}^{\alpha}_{0,n}\otimes\overrightarrow{Q}_{0,n}):~\alpha=1,2,\ldots\}$ is relatively or weakly compact.\\
\noi  Moreover, the corresponding sequence of marginal measures $\{\nu_{0,n}^{\alpha}(\cdot)\in{\cal M}_1({\cal Y}_{0,n}):\alpha=1,2,\ldots\}$ on ${\cal Y}_{0,n}$ and $\{\mu_{0,n}^{\alpha}(\cdot)\in{\cal M}_1({\cal X}_{0,n}):\alpha=1,2,\ldots\}$ on ${\cal X}_{0,n}$, converges weakly, that is, 
\begin{align*}
\nu_{0,n}^{\alpha}(dy^n)\buildrel w \over\longrightarrow\nu_{0,n}^o(dy^n)~\mbox{and}~\mu_{0,n}^{\alpha}(dx^n)\buildrel w \over\longrightarrow\mu_{0,n}^o(dx^n)
\end{align*}
where $\nu_{0,n}^o(\cdot)\in{\cal M}_1({\cal Y}_{0,n})$ and $\mu_{0,n}^o(\cdot)\in{\cal M}_1({\cal X}_{0,n})$ are the marginals of \eqref{theorem:weak:compactness:eq.2}.
\item[B2)] The  set of measures $\overrightarrow{Q}_{0,n}(\cdot|x^n)\in{\cal M}_1^{\bf C2}({\cal Y}_{0,n})$ in uniformly tight.
\item[B3)] The  set of measures $\overleftarrow{P}_{0,n}(\cdot|y^{n-1})\in{\cal M}_1^{\bf C1}({\cal X}_{0,n})$ is relatively compact.
\item[B4)] Let ${\overrightarrow Q}_{0,n}(\cdot|x^{n})\in{\cal M}_1^{\bf C2}({\cal Y}_{0,n})$, $\big\{{\overleftarrow P}_{0,n}^{\alpha}(\cdot|{ y^{n-1}}):\alpha=1,2,\ldots\big\}\subset{\cal M}_1^{\bf C1}({\cal X}_{0,n})$, where $\{\mu_{0,n}^{\alpha}(\cdot)\in{\cal M}_1({\cal X}_{0,n}):\alpha=1,2,\ldots\}$ are the marginals of $\big\{({\overleftarrow P}_{0,n}^\alpha\otimes{\overrightarrow Q}_{0,n})(dx^n,dy^n)\in{\cal M}_1({\cal X}_{0,n}\times{\cal Y}_{0,n}):\alpha=1,2,\ldots\big\}$. Then
\begin{align*}
\overleftarrow{\Pi}^\alpha(dx^n,dy^n)\equiv{\overrightarrow Q}_{0,n}(dy^n|dx^{n})\otimes\mu_{0,n}^{\alpha}(dx^n)\buildrel w \over\longrightarrow{\overrightarrow Q}_{0,n}(dy^n|dx^{n})\otimes\mu_{0,n}^o(dx^n)\equiv\overleftarrow{\Pi}^o(dx^n,dy^n)
\end{align*}
where $\mu_{0,n}^o(\cdot)\in{\cal M}_1({\cal X}_{0,n})$ is the weak limit of the marginal in \eqref{theorem:weak:compactness:eq.2}. 
\end{theorem}
\begin{proof}
See Appendix~\ref{proof_weak_convergence}.
\end{proof}
\vspace*{.2cm}
\noi Note that additional conditions are required to show that the limiting joint distribution \eqref{theorem:weak:compactness:eq.1} (respectively, \eqref{theorem:weak:compactness:eq.2}) corresponds to a $\bar{Q}^o(\cdot|x^n)\in{\cal M}_1^{\bf C2}({\cal Y}_{0,n})$ (respectively, $\bar{P}^o(\cdot|y^{n-1})\in{\cal M}_1^{\bf C1}({\cal X}_{0,n})$). Conditions for this to hold are given in Section~\ref{section:applications}.\\
\noi Below, we illustrate analogies and differences between Theorem~\ref{weak_convergence} and currently known results regarding mutual information found in \cite{csiszar74,csiszar92}. To this end, consider {\bf Part B.}, B1). If we use mutual information \cite[Lemma 2]{csiszar92}, then the sequence of joint measures is defined by $P^{\alpha}_{X^n,Y^n}(dx^n,dy^n)\tri{P}_{Y^n|X^n}(dy^n|x^n)\otimes{P}^{\alpha}_{X^n}(dx^n)$, and showing weak convergence of this family is much simpler compared to the sequence of joint distributions $(\overleftarrow{P}_{0,n}^{\alpha}\otimes{\overrightarrow Q}_{0,n})(dx^n,dy^n)$, because ${P}_{X^n}(dx^n)$ is not conditioned on $y^n\in{\cal Y}_{0,n}$. Clearly, if the mapping $x^n\longrightarrow{P}_{Y^n|X^n}(\cdot|x^n)$ is weakly continuous (i.e., special case of \ref{condition2}), and  $P^{\alpha}_{X^n}(dx^n)$ converges weakly to $P^{o}_{X^n}(x^n)$, then $P^{\alpha}_{X^n,Y^n}(dx^n,dy^n)$ converges weakly to $P_{Y^n|X^n}(dy^n|x^n)\otimes{P}^{o}_{X^n}(dx^n)=P^o_{X^n,Y^n}(dx^n,dy^n)$, and so does its marginal on ${\cal Y}_{0,n}$. On the other hand, if we use directed information, then the joint measure $P_{X^n,Y^n}(dx^n,dy^n)\tri\otimes_{i=0}^n{P}_{Y_i|Y^{i-1},X^i}(dy_i|y^{i-1},x^i)\otimes{P}_{X_i|X^{i-1},Y^{i-1}}(dx_i|x^{i-1},y^{i-1})$ involves an $(n+1)$-fold compound probability distribution defined by (\ref{directed:information:section:introduction:equation1}), and ${P}_{X_i|X^{i-1},Y^{i-1}}(\cdot|\cdot,\cdot)$ is a function of $y^{n-1}\in{\cal Y}_{0,n-1}$, hence a significant level of additional complexity incurs, compared to mutual information. Nevertheless, condition {\bf CB} is the natural generalization to causally conditioned $(n+1)$-fold compound probability distributions of the weak continuity of the mapping $x^n\longrightarrow{P}_{Y^n|X^n}(\cdot|x^n)$, assumed for the mutual information by Csisz\'ar in \cite{csiszar92}.
\vspace*{0.2cm}\\
\noi Theorem~\ref{weak_convergence} is important for several extremum problems involving directed information. Such applications are discussed in the next section.

\subsection{Applications of Theorem~\ref{weak_convergence}}\label{section:applications}

\par In this section, we discuss applications of Theorem~\ref{weak_convergence} to the extremum problems of feedback capacity and nonanticipative RDF, defined by (\ref{equation1dd}) and (\ref{introduction:nrdf:equation1}), respectively.

\vspace*{0.2cm}

\noi{\bf Existence of optimal channel input distribution for channels with memory and feedback.} Consider extremum problems of capacity of channels with memory and feedback defined by (\ref{equation1dd}), without any transmission cost constraint. The aim is to show existence  of a channel input conditional distribution $\overleftarrow{P}(\cdot|y^{n-1})\in{\cal M}_1^{\bf C1}({\cal X}_{0,n}), y^{n-1} \in {\cal Y}_{0,n-1}$, which achieves the supremum of directed information. To show that such a conditional distribution exists, it is sufficient to show compactness of the set of channel input conditional distributions  (i.e., this set is closed and uniformly tight) and upper semicontinuity (or continuity) of $\mathbb{I}_{X^n\rightarrow{Y}^n}(\overleftarrow{P}_{0,n},\overrightarrow{Q}_{0,n})$, with respect to $\overleftarrow{P}(\cdot|y^{n-1})\in{\cal M}_1^{\bf C1}({\cal X}_{0,n})$ for a fixed channel $\overrightarrow{Q}_{0,n}(\cdot|x^n)\in{\cal M}_1^{\bf C2}({\cal Y}_{0,n})$. Since Theorem~\ref{continuity}, {\bf Part A.} A2) uniform tightness of $\overleftarrow{P}(\cdot|y^{n-1})\in{\cal M}_1^{\bf C1}({\cal X}_{0,n})$, it remains to show this set is closed. This is shown in the next lemma, by introducing additional assumptions.
\begin{lemma}(Compactness of $\overleftarrow{P}(\cdot|y^{n-1})\in{\cal M}_1^{\bf C1}({\cal X}_{0,n})$)\label{capacity:closedness:lemma}{\ \\}
Suppose the conditions of Theorem~\ref{weak_convergence}, {\bf Part A.} hold, and for each compact subset $K_{0,i-1}\subset{\cal X}_{0,i-1},$ and each $h_i(\cdot)\in{BC}({\cal X}_i)$,
\begin{align}
\lim_{\alpha\longrightarrow\infty}\sup_{x^{i-1}\in{K}_{0,i-1}}\Bigg{|}\int_{{\cal X}_i}h_i(x)p_i^{\alpha}(dx|x^{i-1},y^{i-1})-\int_{{\cal X}_i}h_i(x)p_i(dx|x^{i-1},y^{i-1})\Bigg{|}=0,~i=0,1,\ldots,n\label{capacity:applications:lemma:closedness}
\end{align}
Then, 
\begin{align}
\overleftarrow{P}^{\alpha}_{0,n}(dx^n|y^{n-1})\buildrel w \over\longrightarrow\overleftarrow{P}_{0,n}^o(dx^n|y^{n-1})~\mbox{for each}~y^{n-1}\in{\cal Y}_{0,n-1}\label{application:capacity:closedness:equation00}
\end{align}
i.e., the set $\overleftarrow{P}(\cdot|y^{n-1})\in {\cal M}_1^{\bf C1}({\cal X}_{0,n})$ is closed with respect to the topology of weak convergence, and  moreover, it is also  is compact (i.e., closed and tight).
\end{lemma}
\begin{proof}
See Appendix~\ref{section:proof:closedness:feedback:capacity}.
\end{proof}
\begin{remark}(Compactness of channel input distributions with transmission cost){\ \\}
In the presence of power constraints $\overleftarrow{P}(\cdot|y^{n-1})\in{\cal{P}}_{0,n}(P)\subset{\cal Q}^{\bf C1}({{\cal X}_{0,n}|{\cal Y}_{0,n-1}})$, by Prohorov's theorem (Appendix~\ref{backround_material}, Theorem~\ref{corollary_of_prohorov}), to show compactness of ${\cal{P}}_{0,n}(P)$, it is sufficient to show that this set is closed and uniformly tight. By invoking Lemma~\ref{capacity:closedness:lemma}, it suffices to show ${\cal{P}}_{0,n}(P)$ is a closed subset of the weakly compact set ${\cal M}^{\bf C1}({\cal X}_{0,n})$ (as a closed subset of a weakly compact set is weakly compact).
\end{remark}

\noi {\bf Existence of optimal reproduction distribution of nonanticipative RDF.} Consider a special case of extremum problems of nonanticipative RDF defined by (\ref{introduction:nrdf:equation1}), with distortion constraint defined by (\ref{introduction:nrdf:equation2}), when the source distribution is causally independent of past reproduction symbols, that is, $p_i(dx_i|x^{i-1},y^{i-1})=\mu_i(dx_i|x^{i-1}),-a.a. (x^{i-1},y^{i-1}),~i=0,1,\ldots,n$. Then, the finite time version of (\ref{introduction:nrdf:equation1}) is given by
\begin{align}
R^{na}_{0,n}(D)&=\inf_{\overrightarrow{Q}_{0,n}(dy^n|x^n)\in{\cal Q}_{0,n}(D)}\int_{{\cal X}_{0,n}\times{\cal Y}_{0,n}}\log\Big(\frac{d\overrightarrow{Q}_{0,n}(\cdot|x^n)}{d\nu_{0,n}(\cdot)}(y^n)\Big)\overrightarrow{Q}_{0,n}(dy^n|x^n)\otimes\mu_{0,n}(dx^n)\label{nrdf:application:equation1}\\
&\equiv\inf_{\overrightarrow{Q}_{0,n}(dy^n|x^n)\in{\cal Q}_{0,n}(D)}\mathbb{I}_{X^n\rightarrow{Y^n}}(\mu_{0,n},\overrightarrow{Q}_{0,n})\label{nrdf:application:equation2}
\end{align} 
where $\mu_{0,n}(dx^n)=\otimes_{i=0}^n\mu_i(dx_i|x^{i-1})$, $\nu_{0,n}(dy^n)=\int_{{\cal X}_{0,n}}\overrightarrow{Q}_{0,n}(dy^n|x^n)\otimes\mu_{0,n}(dx^n)$, and the fidelity constraint is defined by
\begin{align}
{\cal Q}_{0,n}(D)\tri&\bigg\{\overrightarrow{Q}_{0,n}(dy^n|x^n)\in{\cal M}_1^{\bf C2}({\cal Y}_{0,n}):\nonumber\\
&\qquad~\frac{1}{n+1}\int_{{\cal X}_{0,n}\times{\cal Y}_{0,n}}d_{0,n}(x^n,y^{n})\overrightarrow{Q}_{0,n}(dy^n|x^n)\otimes\mu_{0,n}(dx^n)\leq{D}\bigg\},~D\geq{0}\label{nrdf:application:equation3}
\end{align}
and $d_{0,n}:{\cal X}_{0,n}\times{\cal Y}_{0,n}\mapsto [0,\infty],~d_{0,n}(x^n,y^{n})\tri\sum_{i=0}^{n}{\rho}_{i}(x^i,y^{i})$ is a measurable function denoting the distortion function of reconstructing $x_i$ by $y_i$, $i=0,1,\ldots,n$.\\
The information nonanticipative RDF defined by (\ref{nrdf:application:equation1}), (\ref{nrdf:application:equation3}), is an equivalent notion to the nonanticipative epsilon entropy  investigated by Gorbunov and Pinsker \cite{gorbunov-pinsker1973} (see Charalambous et al. in \cite{charalambous-stavrou-ahmed2014ieeetac} for relations to filtering theory).\\
\noi The aim is to show existence of a conditional distribution $\overrightarrow{Q}_{0,n}(\cdot|x^{n})\in{\cal M}_1^{\bf C2}({\cal Y}_{0,n})$, which achieves infimum in (\ref{nrdf:application:equation1}). Since ${\cal Q}_{0,n}(D)\subset{\cal M}_1^{\bf C2}({\cal Y}_{0,n})$, to show such a conditional distribution exists, it is sufficient to show compactness of ${\cal M}_1^{\bf C2}({\cal Y}_{0,n})$ (closed and uniformly tight), the set ${\cal Q}_{0,n}(D)$ is a closed subset of ${\cal M}_1^{\bf C2}({\cal Y}_{0,n})$, and $\mathbb{I}_{X^n\rightarrow{Y}^n}(\overleftarrow{P}_{0,n},\overrightarrow{Q}_{0,n})$ is lower semicontinuous with respect to $\overrightarrow{Q}(\cdot|x^{n})\in{\cal M}_1^{\bf C2}({\cal Y}_{0,n})$, for a fixed $\mu_{0,n}(dx^n)\in{\cal M}_1({\cal X}_{0,n})$. This can be done by invoking a combination of the assumptions of Theorem~\ref{weak_convergence} {\bf Part A.} or {\bf Part B.}, depending on whether ${\cal Y}_{0,n}$ is compact and ${\cal X}_{0,n}$ is arbitrary or ${\cal X}_{0,n}$ is compact and ${\cal Y}_{0,n}$ is arbitrary, respectively. Since in general, ${\cal Y}_{0,n}\subseteq{\cal X}_{0,n}$, it is more appropriate to assume ${\cal Y}_{0,n}$ is compact.  
\begin{lemma}(Compactness of $\overrightarrow{Q}(\cdot|x^{n})\in{\cal{Q}}_{0,n}(D)$)\label{closedness:nrdf:lemma}{\ \\}
{\bf(1)} Suppose ${\cal X}_{0,n}$ are Polish spaces, and ${\cal Y}_{0,n}$ is compact, the sequence $\{q_n(\cdot|\cdot,\cdot)\in{\cal Q}({\cal Y}_n|{\cal Y}_{0,n-1}\times{\cal X}_{0,n}):~n\in\mathbb{N}_0\}$ is weakly continuous, i.e., it satisfies (\ref{condition2}), and for each compact subset $\Phi_{0,i-1}\subset{\cal Y}_{0,i-1}$, and each $h_i(\cdot)\in{BC}({\cal Y}_i)$,
\begin{align}
\lim_{\alpha\longrightarrow\infty}\sup_{y^{i-1}\in{\Phi}_{0,i-1}}\Bigg{|}\int_{{\cal Y}_i}h_i(x)q_i^{\alpha}(dy|y^{i-1},x^{i})-\int_{{\cal Y}_i}h_i(y)q_i(dy|y^{i-1},x^{i})\Bigg{|}=0,~\forall{x}^i\in{\cal X}_{0,i},~i=0,1,\ldots,n.\label{nrdf:applications:lemma:closedness}
\end{align}
Then, 
\begin{align*}
\overrightarrow{Q}^{\alpha}_{0,n}(dy^n|x^{n})\buildrel w \over\longrightarrow\overrightarrow{Q}_{0,n}^o(dy^n|x^{n})~\mbox{for each}~x^n\in{\cal X}_{0,n}
\end{align*}
i.e., the set ${\cal M}_1^{\bf C2}({\cal Y}_{0,n})$ is closed with respect to the topology of weak convergence. Moreover,  ${\cal M}_1^{\bf C2}({\cal Y}_{0,n})$ is compact (closed and tight).\\
{\bf(2)} In addition, suppose the distortion function $d_{0,n}(x^n,\cdot):{\cal X}_{0,n}\times{\cal Y}_{0,n}\longmapsto[0,\infty]$ is Borel measurable relative to ${\cal B}({\cal X}_{0,n})\otimes{\cal B}({\cal Y}_{0,n})$ and continuous on $y^n\in{\cal Y}_{0,n}$.\\
Then, the fidelity set ${\cal{Q}}_{0,n}(D)$ is compact (it is a closed subset of the compact set ${\cal M}_1^{\bf C2}({\cal Y}_{0,n})$).
\end{lemma}
\begin{proof}
See Appendix~\ref{proof:closedness:nrdf}.
\end{proof}
Theorem~\ref{weak_convergence} gives the flexibility of choosing either ${\cal X}_{0,n}$ or ${\cal Y}_{0,n}$ to be compact; it has several applications in other extremum problems of directed information. In the following remark, we discuss such applications.
\begin{remark}(Additional Applications)
\begin{itemize}
\item[{\bf(1)}] Consider extremum problems of capacity for a class of channels with memory and feedback, such as, arbitrary varying channels \cite{csiszar92}. Such problems are defined by the max-min operations of directed information, where the minimizer is over the class of channels \cite{lapidoth-narayan1998}. To investigate such capacity problems one has to establish coding theorems, and showing compactness over the class of channel conditional distributions, in addition to channel input distributions is very helpful. Theorem~\ref{weak_convergence}, {\bf Part B.}, B3) gives conditions of weak compactness of channels $\overrightarrow{Q}_{0,n}(\cdot|x^n)\in{\cal M}_{1}^{\bf C2}({\cal Y}_{0,n})$.
\item[{\bf(2)}] Consider extremum problems of sequential or nonanticipative lossy data compression for a class of sources. Then such problems are defined by mini-max operations of directed information, where the maximizer is over the class of source distributions \cite{sakrison1969}. To investigate such data compression problems, one has to establish coding theorems, and to show compactness over the class of source distributions, in addition to the reproduction distributions, Theorem~\ref{weak_convergence}, {\bf Part A.}, A3) is crucial. 
\end{itemize}
\end{remark}


\subsection{Lower Semicontinuity of Directed Information}

We are now ready to utilize the results of Theorem~\ref{weak_convergence}, to show lower semicontinuity of directed information $I(X^n\rightarrow{Y}^n)\equiv{\mathbb{I}}_{X^n\rightarrow{Y}^n}({\overleftarrow P}_{0,n},{\overrightarrow Q}_{0,n})$. This may be viewed as a generalization of lower semicontinuity of mutual information $I(X^n;{Y}^n)\equiv\mathbb{I}_{X^n;Y^n}(P_{X^n},Q_{Y^n|X^n})$, with respect to $P_{X^n}$ for fixed $Q_{Y^n|X^n}$, and with respect to $Q_{Y^n|X^n}$ for fixed $P_{X^n}$.
\begin{theorem}(Lower semicontinuity)\label{lower_semicontinuity}{\ \\}
1) Suppose the conditions in Theorem~\ref{weak_convergence}, {\bf Part A.}, hold.\\
For fixed ${\overleftarrow P}_{0,n}(\cdot|y^{n-1})\in{\cal M}_1^{\bf C1}({\cal X}_{0,n})$, if the family ${\cal M}_1^{\bf C2}({\cal Y}_{0,n})$ is closed \big(i.e., $\{\overrightarrow{Q}^{\alpha}_{0,n}(\cdot|x^n):~\alpha=1,2,\ldots\}\in{\cal M}_1^{\bf C2}({\cal Y}_{0,n})$ converges weakly to $\overrightarrow{Q}^o_{0,n}(\cdot|x^n)\in{\cal M}_1^{\bf C2}({\cal Y}_{0,n})$\big) then
\begin{align*}
\mathbb{I}_{X^n\rightarrow{Y}^n}(\overleftarrow{P}_{0,n},\overrightarrow{Q}^o_{0,n})\leq\liminf_{\alpha\longrightarrow\infty}\mathbb{I}_{X^n\rightarrow{Y}^n}(\overleftarrow{P}_{0,n},\overrightarrow{Q}^\alpha_{0,n})
\end{align*}
i.e., ${\mathbb{I}}_{X^n\rightarrow{Y^n}}(\cdot, {\overrightarrow Q}_{0,n})$ is lower semicontinuous on ${\overrightarrow Q}_{0,n}(\cdot|x^n)\in{\cal M}_1^{\bf C2}({\cal Y}_{0,n})$.\\
2) Suppose the conditions in Theorem~\ref{weak_convergence}, {\bf Part B.}, hold.\\ 
For fixed ${\overrightarrow Q}_{0,n}(\cdot|x^n)\in{\cal M}_1^{\bf C2}({\cal Y}_{0,n})$, if the family ${\cal M}_1^{\bf C1}({\cal X}_{0,n})$ is closed \big(i.e.,$\{\overleftarrow{P}^{\alpha}_{0,n}(\cdot|y^{n-1}):~\alpha=1,2,\ldots\}\in{\cal M}_1^{\bf C1}({\cal X}_{0,n})$ converges weakly to $\overleftarrow{P}^o_{0,n}(\cdot|y^{n-1})\in{\cal M}_1^{\bf C1}({\cal X}_{0,n})$\big) then
\begin{align*}
\mathbb{I}_{X^n\rightarrow{Y}^n}(\overleftarrow{P}^o_{0,n},\overrightarrow{Q}_{0,n})\leq\liminf_{\alpha\longrightarrow\infty}\mathbb{I}_{X^n\rightarrow{Y}^n}(\overleftarrow{P}^\alpha_{0,n},\overrightarrow{Q}_{0,n})
\end{align*}
i.e., ${\mathbb{I}}_{X^n\rightarrow{Y^n}}(\overleftarrow{P}_{0,n},\cdot)$ is lower semicontinuous on ${\overleftarrow P}_{0,n}(\cdot|y^{n-1})\in{\cal M}_1^{\bf C1}({\cal X}_{0,n})$.
\end{theorem}
\begin{proof}
See Appendix~\ref{proof:lower:semicontinuity}.
\end{proof}
\vspace*{0.2cm}
\noi Recall that conditions for the sets ${\cal M}_1^{\bf C1}({\cal X}_{0,n})$, ${\cal M}_1^{\bf C2}({\cal Y}_{0,n})$ to be closed are given in Lemma~\ref{capacity:closedness:lemma} and Lemma~\ref{closedness:nrdf:lemma}, respectively.\\
\noi Comparing Theorem~\ref{lower_semicontinuity}, 1), with the lower semicontinuity of mutual information $I(X^n;Y^n)\equiv\mathbb{I}_{X^n;Y^n}$  $(P_{X^n},Q_{Y^n|X^n})$, it is clear that directed information requires additional assumptions for its derivation (e.g., those given in Theorem~\ref{weak_convergence}).\\
\noi Theorem~\ref{weak_convergence} together with Theorem~\ref{lower_semicontinuity} are important to establish existence of the optimal reproduction distribution for the nonanticipative rate distortion functions defined by (\ref{introduction:nrdf:equation1}) \cite{charalambous-stavrou-ahmed2014ieeetac,stavrou-kourtellaris-charalambous2015ieeeit} (by utilizing Weierstrass' Theorem) and in general extremum problems of directed information involving minimization over $\overrightarrow{Q}_{0,n}(\cdot|x^n)$ in some subset of ${\cal M}_1^{\bf C2}({\cal Y}_{0,n})$. This is formally stated in the next theorem.
\begin{theorem}(Existence of information nonanticipative RDF)\label{nrdf:existence}{\ \\}
Under the conditions of Lemma~\ref{closedness:nrdf:lemma} and Theorem~\ref{lower_semicontinuity}, the infimum over ${\overrightarrow{Q}}_{0,n}(\cdot|x^{n})\in{\cal Q}_{0,n}(D)$ in $R^{na}_{0,n}(D)$, defined by (\ref{nrdf:application:equation1}), is achieved by some $\overrightarrow{Q}_{0,n}^*(\cdot|x^{n})\in{\cal Q}_{0,n}(D)$.
\end{theorem}


\subsection{Continuity of Directed Information}

\noi  Many problems in information theory involve extremum problems defined as maximizations of directed information, with respect to the feedback channels $\{p_i(dx_i|x^{i-1},y^{i-1})\in{\cal M}_1({\cal X}_i):~i=0,1,\ldots,n\}$, such as, extremum problems of feedback capacity of channels with memory with transmission cost constraint defined by (\ref{equation1dd}). For such problems it is desirable to have upper semicontinuity of directed information with respect to $\overleftarrow{P}_{0,n}(\cdot|y^{n-1})\in{\cal M}_1^{\bf C1}({\cal X}_{0,n})$. Since by Theorem~\ref{lower_semicontinuity}, directed  information is lower semicontinuous with respect to $\overleftarrow{P}_{0,n}(\cdot|y^{n-1})\in{\cal M}_1^{\bf C1}({\cal X}_{0,n})$, to investigate extremum problems involving feedback capacity (maximization problems), it is sufficient to show continuity of the functional ${\mathbb{I}}_{X^n\rightarrow{Y^n}}({\overleftarrow P}_{0,n}, {\overrightarrow Q}_{0,n})$ with respect to $\overleftarrow{P}_{0,n}(\cdot|y^{n-1})\in{\cal M}_1^{\bf C1}({\cal X}_{0,n})$ for a fixed $\overrightarrow{Q}_{0,n}(\cdot|x^n)\in{\cal M}_1^{\bf C2}({\cal Y}_{0,n})$. Continuity of mutual information based on single letter expression is shown in \cite[Lemma 7]{csiszar92}, and under weaker conditions in \cite[Theorem 3.2]{fozunbal}. Here, we show continuity of directed information by following the procedure in \cite{fozunbal}, generalized to the directed information functional ${\mathbb{I}}_{X^n\rightarrow{Y^n}}({\overleftarrow P}_{0,n}, {\overrightarrow Q}_{0,n})$. First, we shall need the following Lemma.
\begin{lemma}\label{helpful_lemma}{\ \\}
For a given $\overleftarrow{P}_{0,n}(\cdot|\cdot)\in{\cal Q}^{\bf C1}({\cal X}_{0,n}|{\cal Y}_{0,n-1})$ and $\overrightarrow{Q}_{0,n}(\cdot|\cdot)\in{\cal Q}^{\bf C2}({\cal Y}_{0,n}|{\cal X}_{0,n})$ define 
\begin{align}
\big{|}{\mathbb{I}}_{X^n\rightarrow{Y}^n}\big{|}({\overleftarrow P}_{0,n}, {\overrightarrow Q}_{0,n})\tri\int_{{\cal X}_{0,n} \times {\cal Y }_{0,n}}\Bigg{|}\log \Big( \frac{d({\overleftarrow P}_{0,n}\otimes {\overrightarrow Q}_{0,n})}{d ( {\overleftarrow P}_{0,n}\otimes \nu_{0,n} ) }\Big)\Bigg{|} d({\overleftarrow P}_{0,n}\otimes {\overrightarrow Q}_{0,n}).\nonumber
\end{align}
Then the following inequalities hold.
\begin{align}
{\mathbb{I}}_{X^n\rightarrow{Y}^n}({\overleftarrow P}_{0,n}, {\overrightarrow Q}_{0,n})\leq{|\mathbb{I}_{X^n\rightarrow{Y}^n}|}({\overleftarrow P}_{0,n}, {\overrightarrow Q}_{0,n})\leq{\mathbb{I}}_{X^n\rightarrow{Y}^n}({\overleftarrow P}_{0,n}, {\overrightarrow Q}_{0,n})+\frac{2}{e\ln2}.\label{equation66i}
\end{align}
\end{lemma}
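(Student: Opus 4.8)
The plan is to treat the two inequalities separately: the left one is immediate, and the right one reduces to a single scalar calculus estimate. Write $\mu\tri{\overleftarrow P}_{0,n}\otimes{\overrightarrow Q}_{0,n}$, take $\overrightarrow\Pi_{0,n}={\overleftarrow P}_{0,n}\otimes\nu_{0,n}$ as the reference measure, and set $f(x^n,y^n)\tri\log\big(\frac{d({\overleftarrow P}_{0,n}\otimes{\overrightarrow Q}_{0,n})}{d({\overleftarrow P}_{0,n}\otimes\nu_{0,n})}\big)(x^n,y^n)=\log\bar\xi_{0,n}(x^n,y^n)$, using the Radon-Nikodym derivative $\bar\xi_{0,n}$ already introduced. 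The left inequality ${\mathbb{I}}_{X^n\rightarrow{Y}^n}\leq|{\mathbb{I}}_{X^n\rightarrow{Y}^n}|$ follows at once from $\int f\,d\mu\leq\int|f|\,d\mu$, valid for any $\mu$-integrable (or $+\infty$-valued) $f$.

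For the right inequality I would use the elementary decomposition $|f|=f+2f^-$, where $f^-\tri\max\{-f,0\}\geq0$ is the negative part, so that integrating against $\mu$ gives
\begin{align}
|{\mathbb{I}}_{X^n\rightarrow{Y}^n}|({\overleftarrow P}_{0,n},{\overrightarrow Q}_{0,n})={\mathbb{I}}_{X^n\rightarrow{Y}^n}({\overleftarrow P}_{0,n},{\overrightarrow Q}_{0,n})+2\int_{{\cal X}_{0,n}\times{\cal Y}_{0,n}}f^-\,d\mu.\nonumber
\end{align}
It therefore suffices to bound $\int f^-\,d\mu$ by $\frac{1}{e\ln2}$. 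Since $f^-$ is supported on the set $\{\bar\xi_{0,n}<1\}$ and $d\mu=\bar\xi_{0,n}\,d\overrightarrow\Pi_{0,n}$, I would change to the reference measure and write, with logarithms to base $2$,
\begin{align}
\int_{{\cal X}_{0,n}\times{\cal Y}_{0,n}}f^-\,d\mu=\frac{1}{\ln2}\int_{\{\bar\xi_{0,n}<1\}}\big(-\bar\xi_{0,n}\ln\bar\xi_{0,n}\big)\,d\overrightarrow\Pi_{0,n}.\nonumber
\end{align}

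The key step is then the pointwise estimate $-t\ln t\leq 1/e$ for $t>0$, whose maximum is attained at $t=1/e$; applied on $\{\bar\xi_{0,n}<1\}$ this bounds the integrand by $1/e$, and since $\overrightarrow\Pi_{0,n}$ is a probability measure (by \textbf{P4}) its total mass is at most one, whence $\int f^-\,d\mu\leq\frac{1}{e\ln2}$ and hence (\ref{equation66i}). The only points needing care are that the identity $d\mu=\bar\xi_{0,n}\,d\overrightarrow\Pi_{0,n}$ presupposes ${\overleftarrow P}_{0,n}\otimes{\overrightarrow Q}_{0,n}\ll\overrightarrow\Pi_{0,n}$, which I would invoke from Lemma~\ref{absolute_contunuity}, and the degenerate case ${\mathbb{I}}_{X^n\rightarrow{Y}^n}=+\infty$, in which both sides of (\ref{equation66i}) are $+\infty$ and the statement holds trivially. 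I do not expect any genuine obstacle here: the entire content is the scalar bound $-t\ln t\leq 1/e$ combined with the fact that the negative part of the log-density is integrated against a probability measure.
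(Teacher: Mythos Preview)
Your proof is correct and follows essentially the same approach as the paper. The paper states the scalar inequality in the combined form $|x\log_2 x|\leq x\log_2 x+\frac{2}{e\ln2}$ (citing Pinsker) and integrates it against the reference probability measure $\overrightarrow\Pi_{0,n}$, whereas you unpack this via $|f|=f+2f^-$ and bound $f^-$ using $-t\ln t\leq 1/e$; since $|y|\leq y+c$ is equivalent to $y^-\leq c/2$, the two presentations are the same argument.
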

\begin{proof}
Recall directed information defined in Remark~\ref{equivalent1}. Then
\begin{align}
{\mathbb{I}}_{X^n\rightarrow{Y}^n}({\overleftarrow P}_{0,n}, {\overrightarrow Q}_{0,n})&=\int_{{\cal X}_{0,n} \times {\cal Y }_{0,n}}\log \Bigg( \frac{d({\overleftarrow P}_{0,n}\otimes {\overrightarrow Q}_{0,n})}{d ( {\overleftarrow P}_{0,n}\otimes \nu_{0,n} ) }\Bigg) d({\overleftarrow P}_{0,n}\otimes {\overrightarrow Q}_{0,n})\nonumber\\
&=\int_{{\cal X}_{0,n} \times {\cal Y }_{0,n}}\log \Bigg( \frac{d({\overleftarrow P}_{0,n}\otimes {\overrightarrow Q}_{0,n})}{d ( {\overleftarrow P}_{0,n}\otimes \nu_{0,n} ) }\Bigg)\Bigg(\frac{d({\overleftarrow P}_{0,n}\otimes {\overrightarrow Q}_{0,n})}{d ( {\overleftarrow P}_{0,n}\otimes \nu_{0,n})}\Bigg)d ( {\overleftarrow P}_{0,n}\otimes \nu_{0,n}).\label{equation110i}
\end{align}
The first inequality in (\ref{equation66i}) is obvious. To show the second inequality in (\ref{equation66i}), recall the inequality \cite[Section 2.3, p. 13]{pinsker-book} $-\frac{1}{e\ln2}\leq{x}\log_2{x}$, $x\in[0,\infty)$ ($0\log0$ is assumed to be $0$). Then,
\begin{align}
|x\log_{2}x|\leq{x\log_{2}x}+\frac{2}{e\ln2}.\label{equation155}
\end{align}
Using \eqref{equation155} in (\ref{equation110i}), with $x=\Big(\frac{d({\overleftarrow P}_{0,n}\otimes {\overrightarrow Q}_{0,n})}{d ( {\overleftarrow P}_{0,n}\otimes \nu_{0,n})}\Big)$, establishes the second inequality in (\ref{equation66i}).
\end{proof}
\vspace*{0.2cm}
\noi Now, we are ready to state the Theorem, which establishes continuity with respect to weak convergence of ${\mathbb{I}}_{X^n\rightarrow{Y^n}}({\overleftarrow P}_{0,n}, {\overrightarrow Q}_{0,n})$ for a fixed $\overrightarrow{Q}_{0,n}(\cdot|x^n)\in{\cal M}_1^{\bf C2}({\cal Y}_{0,n})$, as a functional of $\overleftarrow{P}_{0,n}(\cdot|y^{n-1})\in{\cal M}_1^{\bf C1}({\cal X}_{0,n})$.
\begin{theorem}(Continuity)\label{continuity}{\ \\}
Consider a forward channel ${\overrightarrow Q}_{0,n}(\cdot|x^n)\in{\cal M}_1^{\bf C2}({\cal Y}_{0,n})$, and a closed family of feedback channels $\overleftarrow{P}_{0,n}(\cdot|y^{n-1})\in{\cal M}_1^{{\bf C1},cl}({\cal X}_{0,n})\subseteq{\cal M}_1^{{\bf C1}}({\cal X}_{0,n})$. Suppose the following conditions hold.
\item[A)] There exists a measure $\bar{\nu}_{0,n}(dy^n)$ on ${\cal Y}_{0,n}$ such that ${\overrightarrow Q}_{0,n}(\cdot|x^n)\ll{\bar{\nu}}_{0,n}(dy^n)$ with RND or density $\xi_{\bar{\nu}_{0,n}}(x^n,y^n)\tri\frac{d{\overrightarrow Q}_{0,n}(\cdot|x^n)}{d\bar{\nu}_{0,n}(\cdot)}(y^n)$. 
\item[B)] The RND $\xi_{\bar{\nu}_{0,n}}(x^n,y^n)$ is continuous on ${\cal X}_{0,n}\times{\cal Y}_{0,n},$ and $\xi_{\bar{\nu}_{0,n}}(x^n,y^n)\log\xi_{\bar{\nu}_{0,n}}(x^n,y^n)$ is uniformly integrable over $\Big\{\big(\bar{\nu}_{0,n}\otimes{\overleftarrow P}_{0,n}\big)(dx^n,dy^n):{\overleftarrow P}_{0,n}(\cdot|y^{n-1})\in{\cal M}_1^{{\bf C1},cl}({\cal X}_{0,n})\Big\}.$
\item[C)] For a fixed $y^n\in{\cal Y}_{0,n},$ the RND $\xi_{\bar{\nu}_{0,n}}(x^n,y^n)$ is uniformly integrable over ${\cal M}_1^{{\bf C1},cl}({\cal X}_{0,n})$.\\
Then, ${\mathbb{I}}_{X^n\rightarrow{Y^n}}({\overleftarrow P}_{0,n},{\overrightarrow Q}_{0,n})$ as a functional of ${\overleftarrow P}_{0,n}(\cdot|\cdot)\in{\cal M}_1^{{\bf C1},cl}({\cal X}_{0,n})$ is bounded and weakly continuous over ${\cal M}_1^{{\bf C1},cl}({\cal X}_{0,n})$, for fixed ${\overrightarrow Q}_{0,n}(\cdot|x^n)\in{\cal M}_1^{\bf C2}({\cal Y}_{0,n})$.
\end{theorem}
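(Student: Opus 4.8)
The plan is to reduce the continuity of ${\mathbb{I}}_{X^n\rightarrow{Y^n}}({\overleftarrow P}_{0,n},{\overrightarrow Q}_{0,n})$ in ${\overleftarrow P}_{0,n}$ to the continuity of two more tractable functionals by factoring the relevant Radon--Nikodym derivative through the fixed reference measure $\bar\nu_{0,n}$. Since ${\overrightarrow Q}_{0,n}(\cdot|x^n)\ll\bar\nu_{0,n}$ with density $\xi_{\bar\nu_{0,n}}$, Lemma~\ref{absolute_contunuity} gives ${\overleftarrow P}_{0,n}\otimes{\overrightarrow Q}_{0,n}\ll{\overleftarrow P}_{0,n}\otimes\bar\nu_{0,n}$ with the same density, and the marginal satisfies $\nu_{0,n}\ll\bar\nu_{0,n}$ with density $\psi(y^n)\tri\frac{d\nu_{0,n}}{d\bar\nu_{0,n}}(y^n)=\int_{{\cal X}_{0,n}}\xi_{\bar\nu_{0,n}}(x^n,y^n){\overleftarrow P}_{0,n}(dx^n|y^{n-1})$. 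Writing $\frac{{\overrightarrow Q}_{0,n}(dy^n|x^n)}{\nu_{0,n}(dy^n)}=\xi_{\bar\nu_{0,n}}/\psi$ and substituting into (\ref{equation203}) yields the decomposition ${\mathbb{I}}_{X^n\rightarrow{Y^n}}=T_1-T_2$, where $T_1\tri\int\xi_{\bar\nu_{0,n}}\log\xi_{\bar\nu_{0,n}}\,d({\overleftarrow P}_{0,n}\otimes\bar\nu_{0,n})=\mathbb{D}({\overleftarrow P}_{0,n}\otimes{\overrightarrow Q}_{0,n}\|{\overleftarrow P}_{0,n}\otimes\bar\nu_{0,n})$ and $T_2\tri\int\psi\log\psi\,d\bar\nu_{0,n}=\mathbb{D}(\nu_{0,n}\|\bar\nu_{0,n})$.

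For boundedness I would observe that $T_1$ is bounded uniformly over $\overleftarrow{\cal Q}^{cl}({\cal X}_{0,n};{\cal Y}_{0,n-1})$ by the uniform integrability of $\xi_{\bar\nu_{0,n}}\log\xi_{\bar\nu_{0,n}}$ in condition A, while the elementary bound $x\log x\geq-\frac{1}{e\ln2}$ recalled in Lemma~\ref{helpful_lemma} gives $-T_2\leq\frac{1}{e\ln2}\bar\nu_{0,n}({\cal Y}_{0,n})$; together with ${\mathbb{I}}_{X^n\rightarrow{Y^n}}\geq0$ this confines the functional to a bounded interval, uniformly over the family.

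For continuity, take ${\overleftarrow P}_{0,n}^\alpha\buildrel w \over\longrightarrow{\overleftarrow P}_{0,n}^0$ in $\overleftarrow{\cal Q}^{cl}({\cal X}_{0,n};{\cal Y}_{0,n-1})$. The term $T_1$ is the integral of the fixed continuous (condition A) integrand $\xi_{\bar\nu_{0,n}}\log\xi_{\bar\nu_{0,n}}$ against ${\overleftarrow P}_{0,n}^\alpha\otimes\bar\nu_{0,n}$; since the latter converges weakly and $\xi_{\bar\nu_{0,n}}\log\xi_{\bar\nu_{0,n}}$ is uniformly integrable over this family, the generalized convergence theorem for weakly convergent measures (weak convergence plus uniform integrability of a continuous, possibly unbounded integrand) yields $T_1^\alpha\to T_1^0$. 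For $T_2$ I would first establish the pointwise limit $\psi^\alpha(y^n)\to\psi^0(y^n)$: for fixed $y^n$ the map $x^n\mapsto\xi_{\bar\nu_{0,n}}(x^n,y^n)$ is continuous, and condition B supplies the uniform integrability needed to pass ${\overleftarrow P}_{0,n}^\alpha(\cdot|y^{n-1})\buildrel w \over\longrightarrow{\overleftarrow P}_{0,n}^0(\cdot|y^{n-1})$ through this unbounded integrand. Jensen's inequality applied to the convex map $t\mapsto t\log t$ gives $\psi(y^n)\log\psi(y^n)\leq\int_{{\cal X}_{0,n}}\xi_{\bar\nu_{0,n}}\log\xi_{\bar\nu_{0,n}}\,{\overleftarrow P}_{0,n}(dx^n|y^{n-1})$, so the uniform integrability of $\xi_{\bar\nu_{0,n}}\log\xi_{\bar\nu_{0,n}}$ transfers to $\psi\log\psi$; combined with the pointwise convergence this gives $T_2^\alpha\to T_2^0$. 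Hence ${\mathbb{I}}_{X^n\rightarrow{Y^n}}^\alpha=T_1^\alpha-T_2^\alpha\to T_1^0-T_2^0={\mathbb{I}}_{X^n\rightarrow{Y^n}}^0$.

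The main obstacle is the marginal term $T_2$: unlike the single-letter mutual-information case treated in \cite{fozunbal}, the output density $\psi$ is an average of $\xi_{\bar\nu_{0,n}}$ against the \emph{causally conditioned} measure ${\overleftarrow P}_{0,n}(\cdot|y^{n-1})$, so proving its pointwise convergence demands weak convergence of the conditionals for (a.e.) fixed $y^{n-1}$ rather than merely weak convergence of the joint law, and this is exactly where the continuity of $\xi_{\bar\nu_{0,n}}$ and condition B become indispensable. A secondary technical point is justifying the limit passage under the unbounded integrands $\xi_{\bar\nu_{0,n}}\log\xi_{\bar\nu_{0,n}}$ and $\psi\log\psi$, for which the two uniform-integrability hypotheses are tailored; without condition B only the lower semicontinuity of $T_2=\mathbb{D}(\nu_{0,n}\|\bar\nu_{0,n})$ would be available, which together with the continuity of $T_1$ would deliver only upper semicontinuity of ${\mathbb{I}}_{X^n\rightarrow{Y^n}}$, falling short of full continuity.
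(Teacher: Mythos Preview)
Your proposal is correct and follows essentially the same route as the paper: the identical decomposition ${\mathbb I}_{X^n\rightarrow Y^n}=T_1-T_2$ via the reference measure $\bar\nu_{0,n}$, the same treatment of $T_1$ (weak convergence of $\bar\nu_{0,n}\otimes\overleftarrow P_{0,n}^\alpha$ combined with continuity and uniform integrability of $\xi\log\xi$), and the same treatment of $T_2$ (pointwise convergence of $\psi^\alpha$ via condition~B, then Jensen's inequality $\psi\log\psi\le\int\xi\log\xi\,d\overleftarrow P_{0,n}$ to dominate). The only cosmetic difference is that where you say ``uniform integrability of $\xi\log\xi$ transfers to $\psi\log\psi$'', the paper packages the same Jensen bound together with the convergence of $T_1^\alpha$ as an application of the generalized Lebesgue dominated convergence theorem (Theorem~\ref{generalized_lebesgues_dominated}); both arrive at $T_2^\alpha\to T_2^0$ by the same mechanism.
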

\begin{proof}
The derivation is shown in Appendix~\ref{continuity1}.
\end{proof}
\vspace*{0.2cm}
\noi Note that Theorem~\ref{weak_convergence} gives conditions for weak compactness of $\overleftarrow{P}_{0,n}(\cdot|y^{n-1})\in{\cal M}_1^{\bf C1}({\cal X}_{0,n})$, and Lemma~\ref{capacity:closedness:lemma} gives conditions for compactness of $\overleftarrow{P}_{0,n}(\cdot|y^{n-1})\in{\cal M}_1^{\bf C1}({\cal X}_{0,n})$. In addition, Theorem~\ref{continuity} gives conditions of weak continuity of $\mathbb{I}_{X^n\rightarrow{Y^n}}(\overleftarrow{P}_{0,n},\overrightarrow{Q}_{0,n})$ with respect to $\overleftarrow{P}_{0,n}(\cdot|y^{n-1})\in{\cal M}_1^{\bf C1}({\cal X}_{0,n})$, for fixed ${\overrightarrow Q}_{0,n}(\cdot|x^n)\in{\cal M}_1^{\bf C2}({\cal Y}_{0,n})$. Hence, sufficient conditions are identified to address existence of solution to the extremum problem of feedback capacity. This is stated in the next theorem.
\begin{theorem}(Existence of information feedback capacity without transmission cost constraint){\ \\}
Under the conditions of Lemma~\ref{capacity:closedness:lemma} and Theorem~\ref{continuity}, the supremum over $\overleftarrow{P}_{0,n}(\cdot|y^{n-1})\in{\cal M}_1^{\bf C1}({\cal X}_{0,n})$ in the extremum problem of information feedback capacity 
\begin{align}
C_{0,n}^{fb}\tri\sup_{\{P_{X_i|X^{i-1},Y^{i-1}}:~i=0,1,\ldots,n\}\in{\cal M}_1^{\bf C1}({\cal X}_{0,n})}\frac{1}{n+1}I(X^n\rightarrow{Y^n})\label{existence:extremum:capacity:feedback:eq.1}
\end{align}
is achieved by some $\overleftarrow{P}^*_{0,n}(\cdot|y^{n-1})\in{\cal M}_1^{\bf C1}({\cal X}_{0,n})$.
\end{theorem}

\subsection{Extension of Directed Information to Arbitrary  Number of Sequences of RV's}

\noi In this section, we demonstrate how the previous results are easily generalized to three, or more, sequences of RV's. These extensions have implications in communication networks, and in communication with side information at either the transmitter or the receiver \cite{kramer2007,gamal-kim2011}.\\
To facilitate the demonstration, first consider the following case.
\vspace*{0.2cm}\\
\noi{\bf Case~1}: The sequence of RV's $X^n\in{\cal X}_{0,n}$ is defined by $X^n=(X^{1,n},X^{2,n})\in{\cal X}^1_{0,n}\times{\cal X}^2_{0,n}\equiv{\cal X}_{0,n}$, where $X^{1,n}=\{X^1_i:~i=0,1,\ldots,n\}$ and $X^{2,n}=\{X^2_i:~i=0,1,\ldots,n\}$.
\vspace*{0.2cm}\\
\noi Then, the two sequences of conditional distributions are $\{p_i(dx^1_i,dx^2_i|x^{1,i-1},x^{2,i-1},y^{i-1}):i=0,1,\ldots,n\}$ and $\{q_i(dy^1_i|y^{1,i-1},x^{1,i},x^{2,i}):i=0,1,\ldots,n\}$, respectively. Consequently, the constructions of consistent families of conditional distributions, and the results obtained so far, extend naturally to directed information $\mathbb{I}_{(X^{1,n},X^{2,n})\rightarrow{Y^n}}(\overleftarrow{P}_{0,n},\overrightarrow{Q}_{0,n})$, where $\overleftarrow{P}_{0,n}(dx^{1,n},dx^{2,n}|y^{n-1})=\otimes_{i=0}^n{p}_i(dx_i^1,dx_i^2|x^{1,i-1},x^{2,i-i},y^{i-1})$, and $\overrightarrow{Q}_{0,n}(dy^n|x^{1,n},x^{2,n})=\otimes_{i=0}^n{q}_i(dy_i|y^{i-1},x^{1,i},x^{2,i})$.\\
Next, we consider the following case.
\vspace*{0.2cm}\\
\noi{\bf Case~2}: The sequence of RV's $Y^n\in{\cal Y}_{0,n}$ is defined by $Y^n\tri(Y^{1,n},Y^{2,n})\in{\cal Y}^1_{0,n}\times{\cal Y}^2_{0,n}\equiv{\cal Y}_{0,n}$, where $Y^{1,n}=\{Y^1_i:~i=0,1,\ldots,n\}$ and $Y^{2,n}=\{Y^2_i:~i=0,1,\ldots,n\}$.
\vspace*{0.2cm}\\
\noi Then, the two sequences of conditional distributions are $\{p_i(dx_i|x^{i-1},y^{1,i-1},y^{2,i-1}):i=0,1,\ldots,n\}$ and $\{q_i(dy^1_i,dy^2_i|y^{1,i-1},y^{2,i-1},x^{i}):i=0,1,\ldots,n\}$, respectively. Consequently, the constructions of consistent families of conditional distributions, and the results obtained so far, extend naturally to directed information $\mathbb{I}_{X^{n}\rightarrow(Y^{1,n},Y^{2,n})}(\overleftarrow{P}_{0,n},\overrightarrow{Q}_{0,n})$, where $\overleftarrow{P}_{0,n}(dx^{n}|y^{1,n-1},y^{2,n-1})=\otimes_{i=0}^n{p}_i(dx_i|x^{i-1},y^{1,i-i},y^{2,i-1})$, and $\overrightarrow{Q}_{0,n}(dy^{1,n},dy^{2,n}|x^{n})=\otimes_{i=0}^n{q}_i(dy_i^1,dy_i^2|y^{1,i-1},y^{2,i-1},x^{i})$.\\
Clearly, {\bf Case 1} and {\bf Case 2} can be generalized to an arbitrary number of sequences of RV's.

\section{Sequential Variational Equalities of Directed Information}\label{variational}

\par In this section we derive variational equalities including their sequential versions for directed information. Moreover, we illustrate an application of these variational equalities in feedback capacity computation, by developing the main ingredient of a sequential algorithm using dynamic programming.\\
\noi The variational equalities of directed information may be viewed as generalizations of the well-known variational equalities of mutual information $I(X^n;Y^n)\equiv\mathbb{I}_{X^n;Y^n}(P_{X^n},P_{Y^n|X^n})$, expressed as minimizations or maximizations of relative entropy functionals, as follows \cite{blahut1987}.
\vspace*{0.2cm}\\
\noi{\bf Min:} Given a channel distribution $P_{Y^n|X^n}(dy^n|x^n)$, a source distribution $P_{X^n}$, and any arbitrary distribution $V_{Y^n}(dy^n)$ on ${\cal Y}_{0,n}$ then
\begin{align}
\mathbb{I}_{X^n;Y^n}(P_{X^n},P_{Y^n|X^n})=\inf_{V_{Y^n}(dy^n)\in{\cal M}_1({\cal Y}_{0,n})}\int_{{\cal X}_{0,n}\times{\cal Y}_{0,n}}\log\bigg(\frac{dP_{Y^n|X^n}(\cdot|x^n)}{dV_{Y^n}(\cdot)}(y^n)\bigg)P_{Y^n|X^n}(dy^n|x^n)\otimes{P}_{X^n}(dx^n)\label{equation2d}
\end{align}
and the infimum is achieved at $V_{Y^n}(dy^n)\equiv{P}_{Y^n}(dy^n)$ given by
\begin{align}
P_{Y^n}(dy^n)=\int_{{\cal X}_{0,n}}P_{Y^n|X^n}(dy^n|x^n)\otimes{P}_{X^n}(dx^n).\label{section:variational_equalities:eq1}
\end{align}
\vspace*{0.2cm}\\
\noi{\bf Max:} Given a channel distribution $P_{Y^n|X^n}(dy^n|x^n)$, a source distribution $P_{X^n}(dx^n)$, and any arbitrary conditional distribution $V_{X^n|Y^n}(dx^n|y^n)$ on ${\cal X}_{0,n}$ parametrized by $y^n\in{\cal Y}_{0,n}$ then
\begin{align}
\mathbb{I}_{X^n;Y^n}(P_{X^n},P_{Y^n|X^n})=\sup_{\substack{V_{X^n|Y^n}(dx^n|y^n)\\ \in{\cal M}_1({\cal X}_{0,n})}}\int_{{\cal X}_{0,n}\times{\cal Y}_{0,n}}\log\bigg(\frac{dV_{X^n|Y^n}(\cdot|y^n)}{dP_{X^n}(\cdot)}(x^n)\bigg)P_{Y^n|X^n}(dy^n|x^n)\otimes{P}_{X^n}(dx^n)\label{equation2c}
\end{align}
\noi and the supremum is achieved at $V_{X^n|Y^n}(dx^n|y^n)\equiv{P}_{X^n|Y^n}(dx^n|y^n)$ given by
\begin{align}
P_{X^n|Y^n}(dx^n|y^n)=\frac{P_{Y^n|X^n}(dy^n|x^n)\otimes{P}_{X^n}(dx^n)}{\int_{{\cal X}_{0,n}}P_{Y^n|X^n}(dy^n|x^n)\otimes{P}_{X^n}(dx^n)}.\label{section:variational_equalities:eq2}
\end{align}
\vspace*{0.2cm}\\
\noi That is, in \eqref{equation2d} and \eqref{equation2c} the optimal distribution is generated by the joint distribution induced by $\{P_{Y^n|X^n},P_{X^n}\}$. Both variational equalities are used in the Blahut-Arimoto algorithm (BAA) \cite{blahut1987,blahut1972} to derive iterative computational schemes for channel capacity of memoryless channels, via max-max operations, and for RDF of memoryless sources via mini-min operations.\\
\noi Recently, a version of (\ref{equation2c}) is applied in \cite[eq. (9)]{naiss-permuter2013a} to develop a BAA for capacity of channels with memory and feedback, defined on finite alphabet spaces. Specifically, the authors in \cite{naiss-permuter2013a} consider causally conditioned probability mass functions, $P(x^n||y^{n-1})\tri\Pi_{i=0}^n{p}(x_i|x^{i-1},y^{i-1})$, $Q(y^n||x^n)\tri\Pi_{i=0}^n{q}(y_i|y^{i-1},x^{i})$, where $P(y^n)=\Pi_{i=0}^n{p}(y_i|y^{i-1})$ is generated by $P(x^n,y^n)\tri{P}(x^n||y^{n-1})\otimes{Q}(y^n||x^n)=\Pi_{i=0}^n{p}(x_i|x^{i-1},y^{i-1})\otimes{q}(y_i|y^{i-1},x^{i})$, and utilize the identity $P(x^n|y^n)=\Pi_{i=0}^n{p}(x_i|x^{i-1},y^n)$, to rewrite $\frac{Q(y^n||x^n)}{{P}(y^n)}=\frac{P(x^n|y^n)}{{P}(x^n||y^{n-1})}$, and to express (\ref{equation2c}) as follows.
\begin{align}
{I}(X^n\rightarrow{Y}^n)=\sup_{P(x^n|y^{n})}\sum_{(x^n,y^n)\in{\cal X}_{0,n}\times{\cal Y}_{0,n}}\log\bigg(\frac{P(x^n|y^n)}{P(x^n||y^{n-1})}\bigg)P(x^n||y^{n-1})\otimes{Q}(y^n||x^n).\label{section:variational_equalities:eq3}
\end{align}
Based on (\ref{section:variational_equalities:eq3}), the authors in \cite{naiss-permuter2013a} developed an algorithm, which computes the causally conditioned product $P^*(x^n||y^{n-1})$ that maximizes (\ref{section:variational_equalities:eq3}), similar to the BAA \cite{blahut1972,blahut1987}, over the product space ${\cal X}_{0,n}=\times_{i=0}^n{\cal X}_i$. The variational equalities introduced in this paper and the envisioned applications compliment \cite{naiss-permuter2013a}, in the sense that our emphasis is on generalizing classical variational equalities, by developing  sequential variational equalities, which can be used to develop sequential computational algorithms.  

\subsection{Variational Equalities of Directed Information}
\label{VE_DI}
In this section,  our emphasis is to develop variational equalities of directed information, and equivalent sequential variational equalities.\\
\noi The variational equalities of directed information are based on two families of distributions, similar to ${\bf P}(\cdot|\cdot)\in{\cal Q}^{\bf C1}({\cal X}^{\mathbb{N}_0}|{\cal Y}^{\mathbb{N}_0})$ and ${\bf Q}(\cdot|\cdot)\in{\cal Q}^{\bf C2}({\cal Y}^{\mathbb{N}_0}|{\cal X}^{\mathbb{N}_0})$, which are introduced below.\\
\noi Let $P_{0,n}(dx^n,dy^n)={\overleftarrow P}_{0,n}(dx^n|y^{n-1})\otimes{\overrightarrow Q}_{0,n}(dy^n|x^n)$ be the given distribution constructed from the basic feedback channel ${\bf P}(\cdot|{\bf y})\in{\cal M}_1^{\bf C1}({\cal X}^{\mathbb{N}_0})$ and forward channel ${\bf Q}(\cdot|{\bf x})\in{\cal M}_1^{\bf C2}({\cal Y}^{\mathbb{N}_0})$ (by projection onto finite number of coordinates).\\
\noi Let ${\bf S}(\cdot|{\bf x})$ be any probability measure on $({\cal Y}^{\mathbb{N}_0},{\cal B}({\cal Y}^{\mathbb{N}_0}))$ depending parametrically on ${\bf x}\in{\cal X}^{\mathbb{N}_0}$ satisfying the following consistency condition.
\vspace*{0.2cm}\\
\noi{\bf C3:} If $F\in{\cal B}({\cal Y}_{0,n}),$ then ${\bf S}(F|{\bf x})$ is a ${\cal B}({\cal X}_{0,n-1})-$measurable.
\vspace*{0.2cm}\\
\noi For fixed ${\bf x}\in{\cal X}^{\mathbb{N}_0}$, the set of measures on $({\cal Y}^{\mathbb{N}_0},{\cal B}({\cal Y}^{\mathbb{N}_0}))$ satisfying consistency condition {\bf C3} is denoted by ${\cal M}_1^{\bf C3}({\cal Y}^{\mathbb{N}_0})$ and the corresponding family by ${\cal Q}^{\bf C3}({\cal Y}^{\mathbb{N}_0}|{\cal X}^{\mathbb{N}_0})$.  By Remark~\ref{equivalent}, for any family of probability measures ${\bf S}(\cdot|{\bf x})$ on $({\cal Y}^{\mathbb{N}_0},{\cal B}({\cal Y}^{\mathbb{N}_0}))$ parametrized by ${\bf x}\in{\cal X}^{\mathbb{N}_0}$, satisfying consistency condition ${\bf C3}$, there exists a collection of stochastic kernels $\{s_n(\cdot|\cdot,\cdot)\in{\cal Q}({\cal Y}_n|{\cal Y}_{0,n-1}\times{\cal X}_{0,n-1}):n\in\mathbb{N}_0\}$ connected to ${\bf S}(\cdot|{\bf x})$ as follows.\\
\begin{align}
{\bf S}(D|{\bf x})&=\int_{D_0}s_0(dy_0)\int_{D_1}s_1(dy_1|y_0,x_0)\ldots\int_{D_n}s_n(dy_n|y^{n-1},x^{n-1})\equiv\overleftarrow{S}_{0,n}(\times_{i=0}^n{D_i}|x^{n-1})\label{equation110}
\end{align}
where
\begin{align}
D\tri\{{\bf y}\in{\cal Y}^{\mathbb{N}_0}:y_0\in{D}_0,y_1\in{D}_1,\ldots,y_n\in{D}_n\},~D_i\in{\cal B}({\cal Y}_i),~\forall{i}\in\mathbb{N}_0^{n}.\nonumber
\end{align}
Note that $\overleftarrow{S}_{0,n}(\cdot|x^{n-1})\in{\cal M}^{\bf C3}_1({\cal Y}_{0,n})$ is conditioned on $x^{n-1}\in{\cal X}_{0,n-1}$, unlike $\overrightarrow{Q}_{0,n}(\cdot|x^n)\in{\cal M}^{\bf C2}_1({\cal Y}_{0,n})$, which is conditioned on $x^n\in{\cal X}_{0,n}$.\\
\noi Let ${\bf R}(\cdot|{\bf y})$ be any family of probability measures on $({\cal X}^{\mathbb{N}_0},{\cal B}({\cal X}^{\mathbb{N}_0}))$ depending parametrically on ${\bf y}\in{\cal Y}^{\mathbb{N}_0}$ satisfying the following consistency condition.
\vspace*{0.2cm}\\
\noi{\bf C4:} If $E\in{\cal B}({\cal X}_{0,n}),$ then ${\bf R}(E|{\bf y})$ is a ${\cal B}({\cal Y}_{0,n})-$measurable.
\vspace*{0.2cm}\\
\noi For fixed ${\bf y}\in{\cal Y}^{\mathbb{N}_0}$, the set of measures on $({\cal Y}^{\mathbb{N}_0},{\cal B}({\cal Y}^{\mathbb{N}_0}))$ satisfying consistency condition {\bf C4} is denoted by ${\cal M}_1^{\bf C4}({\cal X}^{\mathbb{N}_0})$ and the corresponding family by ${\cal Q}^{\bf C4}({\cal X}^{\mathbb{N}_0}|{\cal Y}^{\mathbb{N}_0})$.  Similarly as before, by Remark~\ref{equivalent}, for any family of measures ${\bf R}(\cdot|{\bf y})$ on $({\cal X}^{\mathbb{N}_0},{\cal B}({\cal X}^{\mathbb{N}_0}))$ parametrized by ${\bf y}\in{\cal Y}^{\mathbb{N}_0}$ satisfying consistency condition {\bf C4}, there exists a collection of stochastic kernels $\{r_n(\cdot|\cdot,\cdot)\in{\cal Q}({\cal X}_n|{\cal X}_{0,n-1}\times{\cal Y}_{0,n}):n\in\mathbb{N}_0\}$ connected to ${\bf R}(\cdot|{\bf y})$ as follows.\\
\begin{align}
{\bf R}(E|{\bf y})&=\int_{E_0}r_0(dx_0|y_0)\int_{E_1}r_1(dx_1|x_0,y^1)\ldots\int_{E_n}r_n(dx_n|x^{n-1},y^{n})\equiv{\overrightarrow{R}}_{0,n}(\times_{i=0}^n{E_i}|y^n)\label{equation111}
\end{align}
where
\begin{align}
E\tri\{{\bf x}\in{\cal X}^{\mathbb{N}_0}:x_0\in{E}_0,x_1\in{E}_1,\ldots,x_n\in{E}_n\},~E_i\in{\cal B}({\cal X}_i),~\forall{i}\in\mathbb{N}_0^n.\nonumber
\end{align}
\noi The joint distribution on $\big{(}{\cal X}^{\mathbb{N}_0}\times{\cal Y}^{\mathbb{N}_0},\otimes_{n\in\mathbb{N}_0}{\cal B}({\cal X}_n)\otimes{\cal B}({\cal Y}_n)\big{)}$ constructed from ${\bf S}(\cdot|\cdot)\in{\cal Q}^{\bf C3}({\cal Y}^{\mathbb{N}_0}|{\cal X}^{\mathbb{N}_0})$ and ${\bf R}(\cdot|\cdot)\in{\cal Q}^{\bf C4}({\cal X}^{\mathbb{N}_0}|{\cal Y}^{\mathbb{N}_0})$, is defined uniquely for $D_i\in{\cal B}({\cal Y}_i),$ $E_i\in{\cal B}({\cal X}_i),$ $\forall{i}\in\mathbb{N}_0^n,$ by
\begin{align}
({\overleftarrow S}_{0,n}\otimes{\overrightarrow R}_{0,n})({\times^n_{i=0}}(D_i\times{E}_i))&=\int_{D_0}s_0(dy_0)\int_{E_0}r_0(dx_0|y_0)\ldots\nonumber\\
&\qquad\ldots\int_{D_n}s_n(dy_n|y^{n-1},x^{n-1})\int_{E_n}r_n(dx_n|x^{n-1},y^{n}).\label{variational:equalities:joint:equation1}
\end{align}
Formally, the $(n+1)$ fold compound joint distribution defined by (\ref{variational:equalities:joint:equation1}) is written as $({\overleftarrow S}_{0,n}\otimes{\overrightarrow R}_{0,n})(dx^n,dy^n)$.\\
Note the difference between the stochastic kernels $\{p_i(dx_i|x^{i-1},y^{i-1}): i\in\mathbb{N}_0\}$, $\{q_i(dy_i|y^{i-1},x^{i}): i\in\mathbb{N}_0\}$, which define $\overleftarrow{P}_{0,n}(dx^n|y^{n-1})$, $\overrightarrow{Q}_{0,n}(dy^n|x^n)$, respectively, as well as the joint measure $(\overleftarrow{P}_{0,n}\otimes\overrightarrow{Q}_{0,n})(dx^n,dy^n)$, and the stochastic kernels $\{r_i(dx_i|x^{i-1},y^i):~i\in\mathbb{N}^n_0\}$, $\{s_i(dy_i|y^{i-1},y^{i-1}):~i\in\mathbb{N}^n_0\}$  which define $\overrightarrow{R}_{0,n}(dx^n|y^n)$, $\overleftarrow{S}_{0,n}(dy^n|x^{n-1})$, respectively, and the joint measure $(\overleftarrow{S}\otimes\overrightarrow{R})(dx^n,dy^n)$. 
\vspace*{0.2cm}\\
\noi The following theorem gives two variational equalities of directed information, including their sequential versions, which are analogous to (\ref{equation2d}), (\ref{equation2c}).
\begin{theorem}(Variational equalities)\label{variational_equalities}{\ \\}
Let $\{{\cal X}_n:~n\in\mathbb{N}_0\}$ and $\{{\cal Y}_n:~n\in\mathbb{N}_0\}$ be Polish spaces. Let ${\bf P}(\cdot|\cdot)\in{\cal Q}^{\bf C1}({\cal X}^{\mathbb{N}_0}|{\cal Y}^{\mathbb{N}_0})$ and ${\bf Q}(\cdot|\cdot)\in{\cal Q}^{\bf C2}({\cal Y}^{\mathbb{N}_0}|{\cal X}^{\mathbb{N}_0})$, and for any $n\in\mathbb{N}_0$, construct from them the joint distribution $P_{0,n}(dx^n,dy^n)=({\overleftarrow{P}}_{0,n}\otimes{\overrightarrow{Q}}_{0,n})(dx^n,dy^n)$, and the distributions ${\nu_{0,n}}(dy^n)=P_{0,n}({\cal X}_{0,n},dy^n)=\otimes_{i=0}^n\nu_{i}(dy_i|y^{i-1})$, $\{\nu_{i}(dy_i|y^{i-1})\in{\cal M}_1({\cal Y}_i):~i=0,1,\ldots,n\}$,  $\overrightarrow{\Pi}(dx^n,dy^n)={\overleftarrow{P}}_{0,n}(dx^n|y^{n-1})\otimes{\nu_{0,n}}(dy^n)$, (defined by  (\ref{equation18}), (\ref{equation20a}), (\ref{equation21})).\\
\noi Then the following variational equalities hold.\\
{\bf Part A.} {\bf (i)} For any arbitrary distribution ${V}_{0,n}(dy^n)\in{\cal M}_1({\cal Y}_{0,n})$ we have
\begin{align}
I(X^n\rightarrow{Y^n})&={\mathbb{I}}_{X^n\rightarrow{Y^n}}({\overleftarrow P}_{0,n}, {\overrightarrow Q}_{0,n})\tri\mathbb{D}(P_{0,n}||{\overrightarrow\Pi}_{0,n})\nonumber\\
&=\inf_{V_{0,n}(dy^n)\in{\cal M}_1({\cal Y}_{0,n})}\mathbb{D}({\overleftarrow P}_{0,n} \otimes {\overrightarrow Q}_{0,n}||{\overleftarrow P}_{0,n} \otimes{V}_{0,n} )\label{equation11}\\
&=\inf_{V_{0,n}(dy^n)\in{\cal M}_1({\cal Y}_{0,n})}\Big\{\int_{{\cal X}_{0,n} \times {\cal Y }_{0,n}}\log \Big( \frac{d{\overrightarrow Q}_{0,n}(\cdot|x^n)}{dV_{0,n}(\cdot)}(y^n)\Big)({\overleftarrow P}_{0,n}\otimes {\overrightarrow Q}_{0,n})(dx^n,dy^n)\Big\}\label{equation10}
\end{align}
and the infimum is achieved at $V_{0,n}(dy^n)\equiv{\nu}_{0,n}(dy^n)\in{\cal M}_1({\cal Y}_{0,n})$ given by
\begin{align}
{\nu}_{0,n}(dy^n)=\int_{{\cal X}_{0,n}}({\overleftarrow P}_{0,n}\otimes {\overrightarrow Q}_{0,n})(dx^n,{dy^n}).\label{variational:equalities:minimization:eq.1}
\end{align}
\noi{\bf (ii)} For any arbitrary conditional distribution $V_{i}(dy_i|y^{i-1})\in{\cal M}_1({\cal Y}_{i}),~i=0,1,\ldots,n$, we have
\begin{align}
I(X^n\rightarrow{Y}^n)&\equiv{\mathbb{I}}_{X^n\rightarrow{Y^n}}({p_i},{q_i}:~i=0,1,\ldots,n)\nonumber\\
&=\inf_{\big\{V_{i}(dy_i|y^{i-1})\in{\cal M}_1({\cal Y}_i):i=0,1,\ldots,n\big\}}\sum^n_{i=0}\int_{ {\cal X}_{0,i}\times {\cal Y}_{0,i-1}}\log\Big(\frac{dq_i(\cdot|y^{i-1},x^i)}{dV_{i}(\cdot|y^{i-1})}(y_i)\Big)\nonumber\\
&\qquad{p}_i(dx_i|x^{i-1},y^{i-1}){\otimes}(\overleftarrow{P}_{0,i-1}\otimes\overrightarrow{Q}_{0,n-1})(dy^{i-1},dx^{i-1})\label{equation15}
\end{align}
and the infimum is achieved at $V_{i}(dy_i|y^{i-1})={\nu}_{i}(dy_i|y^{i-1})$ given by
\begin{align}
{\nu}_{i}(dy_i|y^{i-1})=\int_{{\cal X}_{0,i}}q_i(dy_i|y^{i-1},x^i)\otimes{p}_i(dx_i|x^{i-1},y^{i-1})\otimes(\overleftarrow{P}_{0,i-1}\otimes\overrightarrow{Q}_{0,i-1})(dx^{i-1},dy^{i-1}),~i=0,1,\ldots,n.\label{variational:equalities:sequential:versions:equation1}
\end{align}
{\bf Part B.} {\bf(i)} For any ${\bf S}(\cdot|\cdot)\in{\cal Q}^{\bf C3}({\cal Y}^{\mathbb{N}_0}|{\cal X}^{\mathbb{N}_0})$ and ${\bf R}(\cdot|\cdot)\in{\cal Q}^{\bf C4}({\cal X}^{\mathbb{N}_0}|{\cal Y}^{\mathbb{N}_0})$ then
\begin{align}
&{\mathbb{I}}_{X^n\rightarrow{Y^n}}({\overleftarrow P}_{0,n},{\overrightarrow Q}_{0,n})=\mathbb{D}(P_{0,n}||{\overrightarrow \Pi}_{0,n})\nonumber\\
&=\sup_{\substack{({\overleftarrow S}_{0,n}\otimes{\overrightarrow R}_{0,n})(dx^n,dy^n)\in{\cal M}_1({\cal X}_{0,n}\times{\cal Y}_{0,n}):\\{\overleftarrow S}_{0,n}(dy^n|x^{n-1})\in{\cal M}_1^{\bf C3}({\cal Y}_{0,n}), {\overrightarrow R}_{0,n}(dx^n|y^n)\in{\cal M}_1^{\bf C4}({\cal X}_{0,n})}}\int_{{\cal X}_{0,n} \times {\cal Y }_{0,n}}\log \Big( \frac{d({\overleftarrow S}_{0,n}\otimes {\overrightarrow R}_{0,n})}{d{\overrightarrow \Pi}_{0,n}}(x^n,y^n)\Big)\nonumber\\
&\qquad\qquad\qquad\qquad\qquad({\overleftarrow P}_{0,n}\otimes {\overrightarrow Q}_{0,n})(dx^n,dy^n)\label{equation34}
\end{align}
and the supremum is achieved at $({\overleftarrow S}_{0,n}\otimes {\overrightarrow R}_{0,n})(dx^n,dy^n)=({\overleftarrow P}_{0,n}\otimes {\overrightarrow Q}_{0,n})(dx^n,dy^n)$, given by the RND 
\begin{align}
\Lambda_{0,n}(x^n,y^n)\tri\frac{d({\overleftarrow P}_{0,n}\otimes {\overrightarrow Q}_{0,n})}{d({\overleftarrow S}_{0,n}\otimes {\overrightarrow R}_{0,n})}(x^n,y^n)=1-a.s.,~n\in\mathbb{N}_0.\label{equation105}
\end{align}
Equivalently,
\begin{align}
\lambda_i(x^i,y^i)\tri\frac{dp_i(\cdot|x^{i-1},y^{i-1})}{d{r}_i(\cdot|x^{i-1},y^i)}(x_i).\frac{d{q}_i(\cdot|y^{i-1},x^{i})}{ds_i(\cdot|y^{i-1},x^{i-1})}(y_i)=1-a.s.,~i=0,1,\ldots,n.\label{equation102}
\end{align}
Moreover, if $q_i(\cdot|y^{i-1},x^{i})\ll{s}_i(\cdot|y^{i-1},x^{i-1})$-a.a. $(y^{i-1},x^i)$ and $p_i(\cdot|x^{i-1},y^{i-1})\ll{r}_i(\cdot|x^{i-1},y^i)$-a.a.~$(x^{i-1},y^i)$, $i=0,1,\ldots,n$, then 
\begin{align}
\Pi^n_{i=0}\frac{d{q}_i(\cdot|y^{i-1},x^i)}{ds_i(\cdot|y^{i-1},x^{i-1})}(y_i)=\Pi^n_{i=0}
\bigg(\frac{d{p}_i(\cdot|x^{i-1},y^{i-1})}{dr_i(\cdot|x^{i-1},y^{i})}(x_i)\bigg)^{-1}-a.s.,~{n}\in\mathbb{N}_0\label{equation103}
\end{align}
or equivalently,
\begin{align}
\frac{d{q}_i(\cdot|y^{i-1},x^i)}{ds_i(\cdot|y^{i-1},x^{i-1})}(y_i)=\bigg(\frac{d{p}_i(\cdot|x^{i-1},y^{i-1})}
{dr_i(\cdot|x^{i-1},y^{i})}(x_i)\bigg)^{-1}-a.s.,~i=0,1,\ldots,n.\label{equation104}
\end{align}
\noi{\bf(ii)} For any arbitrary collection of stochastic kernels $\{r_{i}(\cdot|\cdot,\cdot)\in{\cal Q}({\cal X}_{i}|{\cal X}_{0,i-1}\times{\cal Y}_{0,i-1}),~i=0,1,\ldots,n\}$, and $\{s_{i}(\cdot|\cdot,\cdot)\in{\cal Q}({\cal Y}_{i}|{\cal Y}_{0,i-1}\times{\cal X}_{0,i-1}),~i=0,1,\ldots,n\}$, define
\begin{align*}
\mathbb{I}(p_i,q_i,s_i,{r}_i:~i=0,1,\ldots,n)&\tri\sum^n_{i=0}\int_{ {\cal X}_{0,i}\times {\cal Y}_{0,i}}\log\Bigg(\frac{d{r}_i(\cdot|x^{i-1},y^{i})}{dp_i(\cdot|x^{i-1},y^{i-1})}(x_i).\frac{ds_i(\cdot|y^{i-1},x^{i-1})}{d\nu_{i}(\cdot|y^{i-1})}(y_i)\Bigg)\\
&\qquad{\otimes}_{k=0}^i\big({p}_k(dx_k|x^{k-1},y^{k-1}){\otimes}q_k(dy_k|y^{k-1},x^k)\big).
\end{align*}
Then 
\begin{align}
I(X^n\rightarrow{Y}^n)&\equiv{\mathbb{I}}_{X^n\rightarrow{Y^n}}({p_i}(\cdot|\cdot,\cdot),{q_i}(\cdot|\cdot,\cdot):~i=0,1,\ldots,n)\nonumber\\
&=\sup_{\substack{\big\{s_i(dy_i|y^{i-1},x^{i-1})\otimes{r}_i(dx_i|x^{i-1},y^{i})\in{\cal M}({\cal X}_i\times{\cal Y}_i)\big\},~i=0,1,\ldots,n\\\big\{s_i(dy_i|y^{i-1},x^{i-1})\in{\cal M}_1({\cal Y}_i),~r_i(dx_i|x^{i-1},y^{i})\in{\cal M}_1({\cal X}_i)\big\}}}\mathbb{I}(p_i,q_i,s_i,{r}_i:~i=0,1,\ldots,n)\label{equation15a}
\end{align}
and the supremum is achieved when (\ref{equation102}) or (\ref{equation104}) hold.
\end{theorem}
\begin{proof}
{\bf Part A.} {\bf(i)} From Theorem~\ref{equivalent1}, then
\begin{align}
&\mathbb{D}({\overleftarrow P}_{0,n}\otimes{\overrightarrow Q}_{0,n}||{\overleftarrow P}_{0,n}\otimes{V_{0,n}})=\int_{{\cal X}_{0,n} \times {\cal Y }_{0,n}}\log \Big( \frac{d{\overrightarrow Q}_{0,n}(\cdot|x^n)}{dV_{0,n}(\cdot)}(y^n)\Big)({\overleftarrow P}_{0,n}\otimes {\overrightarrow Q}_{0,n})(dx^n,dy^n)\label{equation8}\\
&=\int_{{\cal X}_{0,n} \times {\cal Y }_{0,n}}\log \Big( \frac{d{\overrightarrow Q}_{0,n}(\cdot|x^n)}{d{\nu}_{0,n}(\cdot)}(y^n)\Big)({\overleftarrow P}_{0,n}\otimes {\overrightarrow Q}_{0,n})(dx^n,dy^n)+\mathbb{D}(\nu_{0,n}|| V_{0,n})\label{equation9}\\
&\geq\mathbb{D}({\overleftarrow P}_{0,n}\otimes{\overrightarrow Q}_{0,n}||{\overleftarrow P}_{0,n}\otimes{\nu_{0,n}}).\label{equation9i}
\end{align}
Moreover, equality holds in (\ref{equation9i}) when $V_{0,n}=\nu_{0,n}$ given by (\ref{variational:equalities:minimization:eq.1}). Hence, $\mathbb{D}(P_{0,n}||{\overrightarrow\Pi}_{0,n})$ in (\ref{equation33}) can be expressed via variational equality (\ref{equation10}).\\
\noi{\bf(ii)} The derivation of (\ref{equation15}) is similar to (\ref{equation11}), (\ref{equation10}), but it is done with respect to each component $V_{i}(dy_i|y^{i-1})\in{\cal M}_1({\cal Y}_i)$, starting at $i=n$ and moving sequentially backward to $i=0$.\\
{\bf Part B.} {\bf(i)} Consider the difference between $I(X^n\rightarrow{Y}^n)=\mathbb{D}({\overleftarrow P}_{0,n}\otimes{\overrightarrow Q}_{0,n}||{\overrightarrow\Pi}_{0,n})$ given by (\ref{equation33}) and the LHS of (\ref{equation34}) (without the supremum). Then
\begin{align}
&{\mathbb{I}}_{X^n\rightarrow{Y^n}}({\overleftarrow P}_{0,n}, {\overrightarrow Q}_{0,n}) -\int_{{\cal X}_{0,n} \times {\cal Y }_{0,n}}\log \Big( \frac{d({\overleftarrow S}_{0,n}\otimes {\overrightarrow R}_{0,n})}{d({\overleftarrow P}_{0,n}\otimes \nu_{0,n})}(x^n,y^n)\Big)({\overleftarrow P}_{0,n}\otimes {\overrightarrow Q}_{0,n})(dx^n,dy^n)\nonumber\\
&= \int_{{\cal X}_{0,n} \times {\cal Y }_{0,n}}\log \Big( \frac{d({\overleftarrow P}_{0,n}\otimes {\overrightarrow Q}_{0,n})}{d ( {\overleftarrow S}_{0,n}\otimes {\overrightarrow R}_{0,n} ) }(x^n,y^n)\Big) ({\overleftarrow P}_{0,n}\otimes {\overrightarrow Q}_{0,n})(dx^n,dy^n)\nonumber\\
&\geq\int_{{\cal X}_{0,n} \times {\cal Y }_{0,n}}\Big(1-\frac{d({\overleftarrow S}_{0,n}\otimes {\overrightarrow R}_{0,n})}{d({\overleftarrow P}_{0,n}\otimes {\overrightarrow Q}_{0,n})}(x^n,y^n)\Big)({\overleftarrow P}_{0,n}\otimes {\overrightarrow Q}_{0,n})(dx^n,dy^n)=0\label{equation25}
\end{align}
where (\ref{equation25}) follows from the inequality $\log{x}\geq{1-\frac{1}{x}},$ $x>0,$ which holds with equality if and only if $x=1.$ Furthermore, equality holds  in (\ref{equation25}), when the RND $\Lambda_{0,n}(x^n,y^n)\tri\frac{d({\overleftarrow P}_{0,n}\otimes {\overrightarrow Q}_{0,n})}{d({\overleftarrow S}_{0,n}\otimes {\overrightarrow R}_{0,n})}(x^n,y^n)=1,$ ${\overleftarrow S}_{0,n}\otimes {\overrightarrow R}_{0,n}-a.s.$ in $(x^n,y^n).$ Since $({\overleftarrow P}_{0,n}\otimes {\overrightarrow Q}_{0,n})({\cal X}_{0,n}\times{\cal Y}_{0,n})=({\overleftarrow S}_{0,n}\otimes {\overrightarrow R}_{0,n})({\cal X}_{0,n}\times{\cal Y}_{0,n})=1,$ this condition is equivalent to ${\overleftarrow P}_{0,n}\otimes {\overrightarrow Q}_{0,n}={\overleftarrow S}_{0,n}\otimes {\overrightarrow R}_{0,n}$. By conditioning (\ref{equation105}) on ${\cal B}({\cal X}_{0,n-1})\otimes{\cal B}({\cal Y}_{0,n-1})$ one obtains (\ref{equation102}). Furthermore, (\ref{equation103}) is obtained from (\ref{equation105}), while (\ref{equation104}) is obtained by conditioning.\\
\noi{\bf(ii)} The derivation of (\ref{equation15a}) is similar to (\ref{equation34}) but it is done with respect to each component $s_i\otimes{r}_i$,~starting at $i=n$ and moving backward sequentially to $i=0$.
\end{proof}
\vspace*{0.2cm}
\noi Note that Theorem~\ref{variational_equalities}, {\bf Part A. (ii)}, {\bf Part B. (ii)} are sequential versions of {\bf Part A. (i)}, {\bf Part B. (i)}, respectively. \\
\noi Next, we discuss the relation between the variational equality of directed information (\ref{equation34}) and the variational equality of mutual information (\ref{equation2c}). Clearly, (\ref{equation2c}) is also equivalent to 
\begin{align}
\sup_{V_{X^n|Y^n}\otimes{P}_{Y^n}}\int_{{\cal X}_{0,n}\times{\cal Y}_{0,n}}\log\bigg(\frac{d\big(V_{X^n|Y^n}(\cdot|y^n)\otimes{P}_{Y^n}(\cdot)\big)}{d\big(P_{X^n}(\cdot)\times{P}_{Y^n}(\cdot)\big)}(x^n,y^n)\bigg)P_{Y^n|X^n}(dy^n|x^n)\otimes{P}_{X^n}(dx^n) \label{equation112}
\end{align}
since the RND in (\ref{equation112}) is another version of the one in (\ref{equation2c}). Hence, (\ref{equation34}) is the analogue of (\ref{equation112}). Further, to obtain the analogue of the maximizing measure in (\ref{equation2c}), given by (\ref{section:variational_equalities:eq2}), suppose $q_i(\cdot|y^{i-1},x^i)\ll{s}_i(\cdot|y^{i-1},x^{i-1})-a.a. (x^i,y^{i-1})$, $i=0,1,\ldots,n$, and $\{{s}_i(\cdot|y^{i-1},x^{i-1}):~i=0,1,\ldots,n\}$ is fixed, and generated by $\overleftarrow{P}_{0,n}(\cdot|y^{n-1})\in{\cal M}^{\bf C1}({\cal X}_{0,n})$ and $\overrightarrow{Q}_{0,n}(\cdot|x^{n})\in{\cal M}^{\bf C2}({\cal Y}_{0,n})$. Then from (\ref{equation102}) we obtain 
\begin{align}
r_i(dx_i|x^{i-1},y^i)&=\bigg(\frac{dq_i(\cdot|y^{i-1},x^i)}{ds_i(\cdot|y^{i-1},x^{i-1})}(y_i)\bigg){p}_i(dx_i|x^{i-1},y^{i-1}),\hst i=0,1,\ldots,n\label{variationa:equalities:sequential:version:equation2}\\
&=\frac{q_i(dy_i|y^{i-1},x^i)}{\int_{{\cal X}_i}q_i(dy_i|y^{i-1},x^i)\otimes{p}_i(dx_i|x^{i-1},y^{i-1})}{p}_i(dx_i|x^{i-1},y^{i-1}),\hst~i=0,1,\ldots,n.\label{variationa:equalities:sequential:version:equation3}
\end{align}
\noi Obviously, for a fixed  $\{{s}_i(\cdot|y^{i-1},x^{i-1}):~i=0,1,\ldots,n\}$, (\ref{variationa:equalities:sequential:version:equation2}), (\ref{variationa:equalities:sequential:version:equation3}) are the sequential versions of maximizing distribution satisfying \eqref{equation105}, given by
\begin{align}
\overrightarrow{R}_{0,n}(dx^n|y^n)=\otimes_{i=0}^n\frac{q_i(dy_i|y^{i-1},x^i)}{\int_{{\cal X}_i}q_i(dy_i|y^{i-1},x^i)\otimes{p}_i(dx_i|x^{i-1},y^{i-1})}{p}_i(dx_i|x^{i-1},y^{i-1}),\hst n\in\mathbb{N}_0.\label{variationa:equalities:sequential:version:equation4}
\end{align}
Clearly, (\ref{variationa:equalities:sequential:version:equation4}) is the analogue of the maximizing distribution $P_{X^n|Y^n}$ in (\ref{equation2c}).\\
\noi Note that the optimization in (\ref{equation34}) can be done by keeping $\overleftarrow{S}_{0,n}(\cdot|x^{n-1})$ fixed, generated by ${\bf P}(\cdot|\cdot)\in{\cal Q}^{\bf C1}({\cal X}^{\mathbb{N}_0}|{\cal Y}^{\mathbb{N}_0})$ and ${\bf Q}(\cdot|\cdot)\in{\cal Q}^{\bf C2}({\cal Y}^{\mathbb{N}_0}|{\cal X}^{\mathbb{N}_0})$, and maximizing only over $\overrightarrow{R}_{0,n}(\cdot|y^n)\in{\cal M}_1({\cal X}_{0,n})$ as demonstrated above.
\vspace*{0.2cm}\\

\noi For extremum problems of directed information, such as,  the channel capacity with memory with and without feedback, it is desirable to invoke a sequential version of variational equalities, in order to derive sequential algorithms. This point is illustrated in the next section.

\subsection{Applications of Sequential Variational Equalities to  Feedback Capacity Computations}\label{section:variational:equality:sequential:applications}

\par Consider the extremum problem of feedback capacity  given by (\ref{equation1dd}), without transmission cost constraint. Expressed in terms of channel distributions $\{q_i(dy_i|y^{i-1},x^i)\in{\cal M}_1({\cal Y}_i):~i=0,1,\ldots,n\}$ and the channel input distributions $\{p_i(dx_i|x^{i-1},y^{i-1})\in{\cal M}_1({\cal X}_i):~i=0,1,\ldots,n\}$, then $C^{fb}\tri\liminf_{n\longrightarrow\infty}\frac{1}{n+1}C^{fb}_{0,n}$ where 
\begin{align}
C^{fb}_{0,n}\tri\sup_{\big\{p_i(dx_i|x^{i-1},y^{i-1})\in{\cal M}_1({\cal X}_i):~i=0,1,\ldots,n\big\}}\sum_{i=0}^n{I}(X^i;Y_i|Y^{i-1}).\label{variational:equality:sequential:applications:equation2}
\end{align}
Given a specific channel, Theorem~\ref{variational_equalities}, {\bf Part B. (ii)} can be used to develop a sequential alternating double maximization algorithm over appropriate sets of distributions, which computes $C^{fb}$ via (\ref{variational:equality:sequential:applications:equation2}) (i.e., $\frac{C_{0,n}^{fb}}{n+1}$), for large enough $n$, starting at $n$ and moving sequentially in time to $n-1,n-2,\ldots,0$. This is illustrated next, by considering a simple example.
\vspace*{0.2cm}\\
\noi{\bf Unit Memory Channel.} Consider a channel defined by $\{q_i(dy_i|y_{i-1},x_i)\in{\cal M}_1({\cal Y}_i):~i=0,1,\ldots,n\}$, called Unit Memory Channel Output (UMCO). Then, (\ref{variational:equality:sequential:applications:equation2}) reduces to 
\begin{align}
C^{fb,UMCO}_{0,n}\tri\sup_{\big\{p_i(dx_i|x^{i-1},y^{i-1})\in{\cal M}_1({\cal X}_i):~i=0,1,\ldots,n\big\}}\sum_{i=0}^n\mathbb{E}\bigg\{\log\Big(\frac{dq_i(\cdot|Y_{i-1},X_i)}{d\nu_{i}(\cdot|Y^{i-1})}(Y_i)\Big)\bigg\}.\label{variational:equality:sequential:applications:equation3}
\end{align}
It is conjectured by Chen and Berger \cite{chen-berger2005} (see also \cite{kourtellaris-charalambous2015itw,kourtellaris-charalambous-boutros2015isit}) that the optimal channel input distribution in (\ref{variational:equality:sequential:applications:equation3}) satisfies the conditional independence $p_i(dx_i|x^{i-1},y^{i-1})=\pi_i(dx_i|y_{i-1})-a.a.~(x^{i-1},y^{i-1})\in{\cal X}_{0,n-1}\times{\cal Y}_{0,n-1}$, which then implies the corresponding joint process $\{(X_i,Y_i):~i=0,1,\ldots,n\}$ is first order Markov, the output process $\{Y_i:~i=0,1,\ldots,n\}$ is first order Markov, and consequently, (\ref{variational:equality:sequential:applications:equation3}) reduces to the following expression\footnote{superscript $\pi$ on various distributions indicates their dependence on $\{\pi_i(dx_i|y_{i-1}):~i=0,1,\ldots,n\}$.}. 
\begin{align}
C^{fb,UMCO}_{0,n}&\tri\sup_{\big\{\pi_i(dx_i|y_{i-1})\in{\cal M}_1({\cal X}_i):~i=0,1,\ldots,n\big\}}\sum_{i=0}^n\int_{{\cal Y}_{i-1,i}\times{\cal X}_{i}}\log\Big(\frac{dq_i(\cdot|y_{i-1},x_i)}{d\nu^{\pi}_{i}(\cdot|y_{i-1})}(y_i)\Big)\nonumber\\
&\qquad\qquad\qquad{q}_i(dy_i|y_{i-1},x_i)\otimes\pi_i(dx_i|y_{i-1})\otimes\nu^{\pi}_{i}(dy_{i-1})\label{variational:equality:sequential:applications:equation4}\\
&=\sup_{\big\{\pi_i(dx_i|y_{i-1})\in{\cal M}_1({\cal X}_i):~i=0,1,\ldots,n\big\}}\sum_{i=0}^n{I}(X_i;Y_i|Y_{i-1})\label{variational:equality:sequential:applications:equation5}
\end{align} 
where
\begin{align}
\nu^{\pi}_{i}(\cdot|y_{i-1})=\int_{{\cal X}_{i}}{q}_i(dy_i|y_{i-1},x_i)\otimes\pi_i(dx_i|y_{i-1}),~i=0,1,\ldots,n.\label{variational:equality:sequential:applications:equation6}
\end{align}
The conjecture by Chen and Berger \cite{chen-berger2005} (i.e., \eqref{variational:equality:sequential:applications:equation4}-\eqref{variational:equality:sequential:applications:equation6}) is recently shown in \cite{kourtellaris-charalambous2015aieeeit}, by invoking the variational equality \eqref{equation15} in extremum problems of feedback capacity, to identify information structures of the optimal channel input distribution for general channels with finite memory.\\
\noi By Theorem~\ref{variational_equalities}, {\bf Part B. (ii)}, for  a fixed $\{\pi_i(dx_i|y_{i-1})\in{\cal M}_1({\cal X}_i):~i=0,1,\ldots,n\}$, the expression inside the maximization in (\ref{variational:equality:sequential:applications:equation4}) or (\ref{variational:equality:sequential:applications:equation5}) is expressed as
\begin{align}
\sum_{i=0}^n{I}(X_i;Y_i|Y_{i-1})=&\sup_{\big\{r_i(dx_i|y_{i-1},y_i)\in{\cal M}_1({\cal X}_i):~i=0,1,\ldots,n\big\}}\sum_{i=0}^n\int_{{\cal Y}_{i-1,i}\times{\cal X}_{i}}\log\Big(\frac{dr_i(\cdot|y_{i-1},y_i)}{d{\pi}_{i}(\cdot|y_{i-1})}(x_i)\Big)\nonumber\\
&\qquad\qquad{q}_i(dy_i|y_{i-1},x_i)\otimes\pi_i(dx_i|y_{i-1})\otimes\nu^{\pi}_{i}(dy_{i-1})
\label{variational:equality:sequential:applications:equation7}
\end{align}
where the supremum in (\ref{variational:equality:sequential:applications:equation7}) is achieved at
\begin{align}
r^{\pi}_i(dx_i|y_{i-1},y_i)=\Big(\frac{dq_i(\cdot|y_{i-1},x_i)}{d\nu^{\pi}_{i}(\cdot|y_{i-1})}(y_i)\Big){\pi}_{i}(dx_i|y_{i-1}),~i=0,1,\ldots,n.\label{variational:equality:sequential:applications:equation8}
\end{align}
Next, we convert $C^{fb,UMCO}_{0,n}$ into a sequential alternating maximization problem over appropriate sets of distributions, by using dynamic programming.\\
\noi Let $C_t: {\cal Y}_{t-1}\longmapsto[0,\infty)$ represent the maximum expected total pay-off in (\ref{variational:equality:sequential:applications:equation4}) on the future time horizon $\{t,t+1,\ldots,n\}$, given $Y_{t-1}=y_{t-1}$ at time $t-1$, defined by 
\begin{align}
C_t(y_{t-1})=&\sup_{\big\{\pi_i(dx_i|y_{i-1})\in{\cal M}_1({\cal X}_i):~i=t,t+1,\ldots,n\big\}}\mathbb{E}^{\pi}\bigg\{\sum_{i=t}^n\log\Big(\frac{dq_i(\cdot|y_{i-1},x_i)}{d{\nu}^{\pi}_{i}(\cdot|y_{i-1})}(y_i)\Big)q_i(dy_i|y_{i-1},x_i)\nonumber\\
&\qquad\qquad\otimes{\pi}_i(dx_i|y_{i-1})\Big{|}Y_{t-1}=y_{t-1}\bigg\}.\label{variational:equality:sequential:applications:equation9}
\end{align}
\noi By standard arguments (see \cite{bertsekas-shreve2007}), and in view of the Markov property of $\{Y_i:~i=0,1,\ldots,n\}$, it follows that (\ref{variational:equality:sequential:applications:equation9}) satisfies the following dynamic programming recursions.
\begin{align}
C_n(y_{n-1})&=\sup_{\pi_n(dx_n|y_{n-1})\in{\cal M}_1({\cal X}_n)}\int_{{\cal X}_n\times{\cal Y}_n}\log\Big(\frac{dq_n(\cdot|y_{n-1},x_n)}{d{\nu}^{\pi}_{n}(\cdot|y_{n-1})}(y_n)\Big)q_n(dy_n|y_{n-1},x_n)\otimes{\pi}_n(dx_n|y_{n-1})\label{variational:equality:sequential:applications:equation10}\\
C_t(y_{t-1})&=\sup_{\pi_t(dx_t|y_{t-1})\in{\cal M}_1({\cal X}_t)}\Big\{\int_{{\cal X}_t\times{\cal Y}_t}\log\Big(\frac{dq_t(\cdot|y_{t-1},x_t)}{d{\nu}^{\pi}_{t}(\cdot|y_{t-1})}(y_t)\Big)q_t(dy_t|y_{t-1},x_t)\otimes{\pi}_t(dx_t|y_{t-1})\nonumber\\
&+\int_{{\cal X}_t\times{\cal Y}_t}C_{t+1}(y_t)q_t(dy_t|y_{t-1},x_t)\otimes{\pi}_t(dx_t|y_{t-1})\Big\},~t=0,1,\ldots,n-1.\label{variational:equality:sequential:applications:equation11}
\end{align}
\noi It is well-known that the computation of the optimal channel input distribution in (\ref{variational:equality:sequential:applications:equation10}), (\ref{variational:equality:sequential:applications:equation11}) suffers from the so-called, curse of dimensionality (i.e., it is often computationally prohibitive, even for finite alphabet spaces). However, by applying Theorem~\ref{variational_equalities}, {\bf Part B. (ii)}, to the dynamic programming recursions (\ref{variational:equality:sequential:applications:equation10}), (\ref{variational:equality:sequential:applications:equation11}), we can show that these can be converted to equivalent alternating maximizations over convex sets. Consequently, \eqref{variational:equality:sequential:applications:equation4} can be expressed via sequential alternating maximizations, of concave functionals over convex sets, as stated in the next theorem. 
\begin{theorem}(Sequential double maximization of feedback capacity of UMCO)\label{applications:sequential:double maximization:theorem}{\ \\}
Consider the UMCO defined by $\{q_i(dy_i|y_{i-1},x_i)\in{\cal M}_1({\cal Y}_i):~i=0,1,\ldots,n\}$, and $C^{fb,UMCO}_{0,n}$ defined by (\ref{variational:equality:sequential:applications:equation4}), for a fixed $Prob\{Y_{-1}\in{d}y_{-1}\}\tri{\nu}_{{-1}}(dy_{i-1})$.\\
\noi{\bf Part A.} The dynamic programming recursions (\ref{variational:equality:sequential:applications:equation10}), (\ref{variational:equality:sequential:applications:equation11}) are equivalent to the following sequential double maximization dynamic programming recursions.
\begin{align}
C_n(y_{n-1})&=\sup_{\pi_n(dx_n|y_{n-1})\in{\cal M}_1({\cal X}_n)}\sup_{r_n(dx_n|y_{n-1},y_n)\in{\cal M}_1({\cal X}_n)}\Big\{\int_{{\cal X}_{n}\times{\cal Y}_{n}}\log\Big(\frac{dr_n(\cdot|y_{n-1},y_n)}{d{\pi}_{n}(\cdot|y_{n-1})}(x_n)\Big)\nonumber\\
&\qquad\qquad{q}_n(dy_n|y_{n-1},x_n)\otimes\pi_n(dx_n|y_{n-1})\Big\}\label{variational:equality:sequential:applications:equation12}\\
C_t(y_{t-1})&=\sup_{\pi_t(dx_t|y_{t-1})\in{\cal M}_1({\cal X}_t)}\sup_{r_t(dx_t|y_{t-1},y_t)\in{\cal M}_1({\cal X}_t)}\Big\{\int_{{\cal X}_t\times{\cal Y}_t}\log\Big(\frac{dr_t(\cdot|y_{t-1},y_t)}{d{\pi}_{t}(\cdot|y_{t-1})}(x_t)\Big)q_t(dy_t|y_{t-1},x_t)\nonumber \\
&\otimes{\pi}_t(dx_t|y_{t-1})
+\int_{{\cal X}_t\times{\cal Y}_t}C_{t+1}(y_t)q_t(dy_t|y_{t-1},x_t)\otimes{\pi}_t(dx_t|y_{t-1})\Big\},~t=0,1,\ldots,n-1.\label{variational:equality:sequential:applications:equation13}
\end{align}
and $C^{fb, UMCO}_{0,n}$ is given by
\begin{align*}
C^{fb, UMCO}_{0,n}=\int_{{\cal Y}_{-1}}C_0(y_{-1})\nu_{-1}(dy_{-1}).
\end{align*}
Moreover, the following hold.\\
\noi{\bf Maximizations in (\ref{variational:equality:sequential:applications:equation12}).}\\ \noi{\bf(i)} For a fixed $\pi_n(dx_n|y_{n-1})\in{\cal M}_1({\cal X}_n)$, the maximum in (\ref{variational:equality:sequential:applications:equation12}) over $r_n(dx_n|y_{n-1},y_n)\in{\cal M}_1({\cal X}_n)$ occurs at $r_n(\cdot|\cdot,\cdot)=r^{*,\pi}_n(\cdot|\cdot,\cdot)$ given by
\begin{align}
r^{*,\pi}_n(dx_n|y_{n-1},y_n)=\Big(\frac{dq_n(\cdot|y_{n-1},x_n)}{d\nu^{\pi}_{n}(\cdot|y_{n-1})}(y_n)\Big){\pi}_{n}(dx_n|y_{n-1}).
\label{variational:equality:sequential:applications:equation14}
\end{align}
\noi{\bf(ii)} For a fixed $r_n(dx_n|y_{n-1},y_n)\in{\cal M}_1({\cal X}_n)$, the maximum in (\ref{variational:equality:sequential:applications:equation12}) over $\pi_n(dx_n|y_{n-1})\in{\cal M}_1({\cal X}_n)$ occurs at $\pi_n(\cdot|\cdot)=\pi^{*,r}_n(\cdot|\cdot)$\footnote{superscript $r$ indicates the dependence on the distribution $\{r_i(dx_i|y_{i-1},y_i):~i=0,1,\ldots,n\}$.} given by
\begin{align}
\pi^{*,r}_n(dx_n|y_{n-1})&=\frac{\exp{\Big\{\int_{{\cal Y}_n}\log\Big(\frac{dr_n(\cdot|y_{n-1},y_n)}{d\pi_n(\cdot|y_{n-1})}(x_n)\Big)q_n(dy_n|y_{n-1},x_n)\Big\}}\pi_n(dx_n|y_{n-1})}{\int_{{\cal X}_n}\exp{\Big\{\int_{{\cal Y}_n}\log\Big(\frac{r^{\pi}_n(\cdot|y_{n-1},y_n)}{\pi_n(\cdot|y_{n-1})}(x_n)\Big)q_n(dy_n|y_{n-1},x_n)\Big\}}\pi_n(dx_n|y_{n-1})}\label{variational:equality:sequential:applications:equation14aa}
\end{align}
Moreover, when (\ref{variational:equality:sequential:applications:equation14aa}) is evaluated at $r_n(\cdot|\cdot,\cdot)=r^{*,\pi}_n(\cdot|\cdot,\cdot)$ given by (\ref{variational:equality:sequential:applications:equation14}) then
\begin{align}
\pi^{*,r^{*}}_n(dx_n|y_{n-1})=\frac{\exp{\Big\{\int_{{\cal Y}_n}\log\Big(\frac{dq_n(\cdot|y_{n-1},x_n)}{d\nu^{\pi}_{n}(\cdot|y_{n-1})}(y_n)\Big)q_n(dy_n|y_{n-1},x_n)\Big\}}\pi_n(dx_n|y_{n-1})}{\int_{{\cal X}_n}\exp{\Big\{\int_{{\cal Y}_n}\log\Big(\frac{dq_n(\cdot|y_{n-1},x_n)}{d\nu^{\pi}_{n}(\cdot|y_{n-1})}(y_n)\Big)q_n(dy_n|y_{n-1},x_n)\Big\}}\pi_n(dx_n|y_{n-1})}.\label{variational:equality:sequential:applications:equation15}
\end{align}
\noi{\bf Maximizations in (\ref{variational:equality:sequential:applications:equation13}).}\\ \noi {\bf(iii)} For a fixed $\pi_t(dx_t|y_{t-1})\in{\cal M}_1({\cal X}_t)$, the maximum in (\ref{variational:equality:sequential:applications:equation13}) over $r_t(dx_t|y_{t-1},y_t)\in{\cal M}_1({\cal X}_t)$ occurs at $r_t(\cdot|\cdot,\cdot)=r^{*,\pi}_t(\cdot|\cdot,\cdot)$ given by
\begin{align}
r^{*,\pi}_t(dx_t|y_{t-1},y_t)=\Big(\frac{dq_t(\cdot|y_{t-1},x_t)}{d\nu^{\pi}_{t}(\cdot|y_{t-1})}(y_t)\Big){\pi}_{t}(dx_t|y_{t-1}),~t=n-1,n-2,\ldots,0.
\label{variational:equality:sequential:applications:equation16}
\end{align}
\noi{\bf(iv)} For a fixed $r_t(dx_t|y_{n-1},y_t)\in{\cal M}_1({\cal X}_t)$, the maximum in (\ref{variational:equality:sequential:applications:equation13}) over $\pi_t(dx_t|y_{t-1})\in{\cal M}_1({\cal X}_t)$, occurs at $\pi_t(\cdot|\cdot)=\pi^{*,r}_t(\cdot|\cdot),~t=n-1,n-2,\ldots,0,$ given by
\begin{align}
\pi^{*,r}_t(dx_t|y_{t-1})&=\frac{\exp{\Big\{\int_{{\cal Y}_t}\Big\{\log\Big(\frac{dr_t(\cdot|y_{t-1},y_t)}{d{\pi}_{t}(\cdot|y_{t-1})}(x_t)\Big)+C_{t+1}(y_{t})\Big\}q_t(dy_t|y_{t-1},x_t)\Big\}}\pi_t(dx_t|y_{t-1})}{\int_{{\cal X}_t}\exp{\Big\{\int_{{\cal Y}_t}\Big\{\log\Big(\frac{dr_t(\cdot|y_{t-1},y_t)}{d{\pi}_{t}(\cdot|y_{t-1})}(x_t)\Big)+C_{t+1}(y_{t})\Big\}q_t(dy_t|y_{t-1},x_t)\Big\}}\pi_t(dx_t|y_{t-1})}.\label{variational:equality:sequential:applications:equation17aa}
\end{align}
Moreover, when (\ref{variational:equality:sequential:applications:equation17aa}) is evaluated at $r_t(\cdot|\cdot,\cdot)=r^{*,\pi}_t(\cdot|\cdot,\cdot),~t=n-1,n-2,\ldots,0,$ given by (\ref{variational:equality:sequential:applications:equation16}) then
\begin{align}
\pi^{*,r^{*}}_t(dx_t|y_{t-1})=\frac{\exp{\bigg\{\int_{{\cal Y}_t}\Big\{\log\Big(\frac{dq_t(\cdot|y_{t-1},x_t)}{d\nu^{\pi}_{t}(\cdot|y_{t-1})}(y_t)\Big)+C_{t+1}(y_{t})\Big\}q_t(dy_t|y_{t-1},x_t)\bigg\}}\pi_t(dx_t|y_{t-1})}{\int_{{\cal X}_t}\exp{\bigg\{\int_{{\cal Y}_t}\Big\{\log\Big(\frac{dq_t(\cdot|y_{t-1},x_t)}{d\nu^{\pi}_{t}(\cdot|y_{t-1})}(y_t)\Big)+C_{t+1}(y_{t})\Big\}q_t(dy_t|y_{t-1},x_t)\bigg\}}\pi_t(dx_t|y_{t-1})}.\label{variational:equality:sequential:applications:equation17}
\end{align}
\noi{\bf Part B.} The extremum problem $C^{fb,UMCO}_{0,n}$ defined by (\ref{variational:equality:sequential:applications:equation4}) is equivalent to the following sequential double maximization problem.
\begin{align}
C^{fb,UMCO}_{0,n}&=\sup_{\pi_0(dx_0|y_{-1})\in{\cal M}_1({\cal X}_0)}\sup_{r^{\pi}_0(dx_0|y_{-1},y_0)\in{\cal M}_1({\cal X}_0)}\ldots\sup_{\pi_n(dx_n|y_{n-1})\in{\cal M}_1({\cal X}_n)}\sup_{r^{\pi}_n(dx_n|y_{n-1},y_n)\in{\cal M}_1({\cal X}_n)}\nonumber\\
&\bigg\{\sum_{i=0}^n\int_{{\cal Y}_{i-1,i}\times{\cal X}_{i}}\log\Big(\frac{dr^{\pi}_i(\cdot|y_{i-1},y_i)}{d\pi_{i}(\cdot|y_{i-1})}(x_i)\Big){q}_i(dy_i|y_{i-1},x_i)\otimes\pi_i(dx_i|y_{i-1})\otimes\nu^{\pi}_{i}(dy_{i-1})\bigg\}\label{variational:equality:sequential:applications:equation18}
\end{align}
and statements {\bf (i)-(iv)} hold.
\end{theorem}
\begin{proof}
{\bf Part A. (i)} (\ref{variational:equality:sequential:applications:equation12}) and (\ref{variational:equality:sequential:applications:equation14}) follow directly from (\ref{variational:equality:sequential:applications:equation10}). {\bf (ii)} (\ref{variational:equality:sequential:applications:equation14aa}) is obtained as follows. Fix $r_n(dx_n|y_{n-1},y_n)\in{\cal M}_1({\cal X}_n)$, calculate the G\^ateaux differential inside the maximization in (\ref{variational:equality:sequential:applications:equation12}) at $\pi^{*,r}_n(dx_n|y_{n-1})$ in the direction $\pi^r_n(dx_n|y_{n-1})-\pi^{*,r}_n(dx_n|y_{n-1})$, i.e., $\pi^{\epsilon,r}_n(dx_n|y_{n-1})\tri\pi^{*,r}_n(dx_n|y_{n-1})-\epsilon\big\{\pi^r_n(dx_n|y_{n-1})-\pi^{*,r}_n(dx_n|y_{n-1})\big\},~\epsilon\in[0,1]$,  by incorporating the constraint $\int_{{\cal X}_n}\pi^r_n(dx_n|y_{n-1})=1$ via a Lagrange multiplier $\lambda_n(y_{n-1})$. The G\^ateaux differential gives (\ref{variational:equality:sequential:applications:equation14aa}). Then substitute (\ref{variational:equality:sequential:applications:equation14}) into (\ref{variational:equality:sequential:applications:equation14aa}) to obtain (\ref{variational:equality:sequential:applications:equation15}). {\bf(iii)} For fixed $\pi_t(dx_t|y_{t-1})\in{\cal M}_1({\cal X}_t)$, the second RHS term in (\ref{variational:equality:sequential:applications:equation11}) is a function of the channel distribution, hence (\ref{variational:equality:sequential:applications:equation13}) and (\ref{variational:equality:sequential:applications:equation16}) follow directly as in {\bf (i)}. {\bf(iv)} To show (\ref{variational:equality:sequential:applications:equation17aa}), (\ref{variational:equality:sequential:applications:equation17}), compute the G\^ateux differential as in {\bf (ii)}, by tracking the additional  second RHS term in (\ref{variational:equality:sequential:applications:equation13}).\\
\noi{\bf Part B.} Since $\nu_{-1}(dy_{-1})\in{\cal M}_1({\cal Y}_{-1})$ is fixed, then (\ref{variational:equality:sequential:applications:equation18}) follows directly from {\bf Part A.}, and the definition of $C_t(y_{t-1})$ evaluated at $t=0$.
\end{proof}
\noi Theorem~\ref{applications:sequential:double maximization:theorem}, specifically (\ref{variational:equality:sequential:applications:equation15}), (\ref{variational:equality:sequential:applications:equation17}), are the equations, which should be used to derive a sequential algorithm to compute numerically the optimal channel input distribution.\\
\noi Below, we discuss applications of Theorem~\ref{applications:sequential:double maximization:theorem}, and identify generalizations, and directions for future research.
\begin{remark}(Sequential algorithms for feedback capacity)
\begin{itemize}
\item[(1)]For the UMCO, Theorem~\ref{applications:sequential:double maximization:theorem} provides all necessary ingredients to derive a sequential algorithm at each time step, $t=n,n-1,\ldots,0$, similar to the BAA. It remains to show at each time step, $t=n,n-1,\ldots,0$, that (\ref{variational:equality:sequential:applications:equation15}), (\ref{variational:equality:sequential:applications:equation17aa}) have fixed points corresponding to the optimal channel input distribution, and to derive upper and lower bounds on $C_t(y_{t-1}),~t=n,n-1,\ldots,0$, to stop the iterations at each time step of the algorithm. For finite alphabet spaces $\{({\cal X}_i, {\cal Y}_i):~i=0,1,\ldots,n\}$, these additional steps can be carried out using Theorem~\ref{applications:sequential:double maximization:theorem} and the procedure in \cite{blahut1972}.
\item[(2)] For the UMCO, if the alphabet spaces ${\cal X}_i\equiv{\cal X}$, ${\cal Y}_i\equiv{\cal Y},~i=0,1,\ldots$, and the joint process $\{(X_i,Y_i):~i=0,1,\ldots\}$ is stationary ergodic or directed information stable, then the per unit time limiting version of dynamic programming recursive equations (\ref{variational:equality:sequential:applications:equation10}), (\ref{variational:equality:sequential:applications:equation11}) can be derived \cite{lerma-lasserre1996}, and these involve only a single stage maximization over $\pi(dx_i|y_{i-1})\in{\cal M}_1({\cal X}),~\forall{i}$. Hence, a theorem similar to Theorem~\ref{applications:sequential:double maximization:theorem} can be derived.
\item[(3)] For general channels, it is possible to derive the analogue of Theorem~\ref{applications:sequential:double maximization:theorem}, provided the set of optimal channel input distributions, which maximize $\sum_{i=0}^n{I}(X^i;Y_i|Y^{i-1})$ is identified, as in the case of UMCO (see \cite{kourtellaris-charalambous2015aieeeit}).
\end{itemize}
\end{remark}

\section{Conclusion}\label{conclusion}
In this paper we derive functional and topological properties of directed information, for  abstract alphabet spaces (i.e., complete separable metric spaces). These include, convexity of the set of consistent family of distributions, which uniquely define causally conditioned compound distributions, convexity and concavity of directed information with respect to consistent family of  distributions, and  a general theorem on weak compactness of causally conditioned distributions, their joint distributions, and marginals, which are utilized to define directed information. Further, we use this main theorems  to show lower semicontinuity of directed information as a functional of two causally conditioned distributions, and under additional conditions continuity of directed information. In addition, we derive sequential  variational  equalities for directed information. Throughout the paper, we discuss application examples  in the context of  extremum problems of directed information, such as, in feedback capacity, nonanticipative RDF, and in developing sequential  computational algorithms, similar to the Blahut-Arimoto algorithm \cite{blahut1972}.


\appendices

\section{Background material}\label{backround_material}

In this section, we introduce some of the basic analytical concepts which are used throughout the paper.
\vspace*{0.2cm}\\
\noi{\bf Weak Convergence and Compactness.}\\
The main notions discussed are weak convergence of probability measures, the relation to convergence with respect to Prohorov metric, tightness of a family of probability measures and relative compactness\cite{billingsley1999}.\\
Let $({\cal X},d)$ be a metric space, ${\cal B}({\cal X})$ the $\sigma-$algebra of Borel subsets of ${\cal X}$, and ${\cal M}_1({\cal X})$ the family of probability measures on ${\cal X}$. Let $BC({\cal X})$ denote the set of bounded, continuous real-valued function $f$ on $({\cal X},d)$, endowed with the supremum norm $||f||=\sup_{x\in{\cal X}}|f(x)|$.  A sequence of probability measures $\{P_n:n=1,2,\ldots\}\subset{\cal M}_1({\cal X})$ is said to {\it converge weakly} to a probability measure $P\in{\cal M}_1({\cal X})$ if
\begin{align}
\lim_{n\rightarrow\infty}\int_{\cal X}f(x)dP_n(x)=\int_{\cal X}f(x)dP(x),~\forall{f}\in{BC}({\cal X}).\nonumber  
\end{align}
Weak convergence of $\{P_n:n=1,2,\ldots\}$ to $P$ is denoted by $P_n\buildrel w \over \longrightarrow{P}$. The space of probability measures ${\cal M}_1({\cal X})$ is metrizable with respect to the Prohorov metric (see \cite{dupuis-ellis97}).\\

\noi A crucial result for the characterization of compact subsets of ${\cal M}_1({\cal X})$ is the next theorem due to Prohorov, which relates compactness and tightness of a family of measures.
\begin{definition} $($Tightness and Relative Compactness$)$\cite[p. 308]{dupuis-ellis97}{\ \\}
Let $M\subset{\cal M}_1({\cal X})$ be a family of probability measures on a metric space $({\cal X}, d)$. $M$ is said to be
\item[1)] {\it tight} or {\it uniformly tight} if for every $\epsilon>0$ there exists a compact set $K^{(\epsilon)}\subset{\cal X}$ such that
$\inf_{P\in{M}}P(K^{(\epsilon)})\geq{1-\epsilon}$;
\item[2)] relatively compact or weakly compact if every sequence in $M$ contains a weakly convergent subsequence, that is, for every sequence $\{P_n:n=1,2,\ldots\}$ in $M$ there is a subsequence $\{P_{n_{i}}:i\in\{1,2,\ldots\}\}$ and a $P\in{\cal M}_1({\cal X})$ such that $P_{n_i}\buildrel w \over \longrightarrow{P}$. Here, the limit $P$ is not required to belong to $M$, but all is required is to belong to ${\cal M}_1({\cal X})$.
\end{definition}
\noi Prohorov states that for $({\cal X},d)$ a metric space and ${\cal X}$ compact, then any sequence $\{P_n:n=1,2,\ldots\}$ of probability measures on ${\cal X}$ possess a convergent subsequence. The following theorem due to Prohorov, relates weak compactness and tightness of a family of probability measures.
\begin{theorem}$($Prohorov's Theorem$)$\label{prohorov}\cite[Theorem A.3.15,~p.~309]{dupuis-ellis97}{\ \\}
Let $M\subset{\cal M}_1({\cal X})$ be a family of probability measures on a metric space $({\cal X},d)$.
\item[1)] If $M$ is tight, then it is relative compact.
\item[2)] Suppose ${\cal X}$ is separable and complete. If $M$ is relatively compact, then it is tight.
\end{theorem}
\noi Thus, a family of probability measures $M\subset{\cal M}_1({\cal X})$ on a complete separable metric space $({\cal X},d)$ is weakly compact or relatively compact with respect to weak convergence if and only if it is tight. Moreover, if $P_n \buildrel w \over \longrightarrow{P}$, then the family $\{P_n:n=1,2,\ldots\}$ is tight.\\
Finally, we give another version due to Prohorov for a family of measures $M\subset{\cal M}_1({\cal X})$ to be compact.
\begin{theorem}(Corollary of Prohorov's Theorem)\label{corollary_of_prohorov}{\ \\}
Let $({\cal X},d)$ be a separable metric and $M\subset{\cal M}_1({\cal X})$ a set of measures. The following hold.
\item[(a)] If $M$ is closed and tight, then $M$ is compact.
\item[(b)] Suppose ${\cal X}$ is complete. If $M$ is compact then $M$ is closed and tight.
\end{theorem}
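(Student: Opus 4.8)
The plan is to derive both statements from Prohorov's theorem (Theorem~\ref{prohorov}) together with the fact that $({\cal M}_1({\cal X}),{\cal L})$ is itself a metric space, so that within it compactness and sequential compactness coincide. The two parts exploit opposite halves of Prohorov's theorem, which is precisely why part~(a) needs only separability while part~(b) needs completeness in addition.

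For part~(a), I would begin from the observation (recorded in the discussion following Theorem~\ref{prohorov}) that on an arbitrary metric space tightness of $M$ already implies relative compactness of $M$ with respect to the Prohorov metric. Thus every sequence $\{P_n:n\geq1\}\subset{M}$ admits a subsequence $\{P_{n_i}\}$ with $P_{n_i}\buildrel w \over \longrightarrow P$ for some $P\in{\cal M}_1({\cal X})$. Relative compactness by itself only guarantees $P\in{\cal M}_1({\cal X})$; but since $M$ is assumed closed in $({\cal M}_1({\cal X}),{\cal L})$, the limit $P$ must in fact lie in $M$. Hence every sequence in $M$ has a subsequence converging to a point of $M$, i.e.\ $M$ is sequentially compact, and because $({\cal M}_1({\cal X}),{\cal L})$ is a metric space, sequential compactness is equivalent to compactness, so $M$ is compact.

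For part~(b), assume in addition that ${\cal X}$ is complete and that $M$ is compact. Closedness is immediate, since a compact subset of the metric space $({\cal M}_1({\cal X}),{\cal L})$ is closed. For tightness, I would note that compactness of $M$ forces every sequence in $M$ to possess a subsequence converging in the Prohorov metric to a limit lying in $M\subset{\cal M}_1({\cal X})$, so $M$ is relatively compact. Because $({\cal X},d)$ is now complete and separable, Prohorov's theorem (Theorem~\ref{prohorov}) applies in its full strength and gives that relative compactness of $M$ implies tightness of $M$, which completes part~(b).

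The only genuinely delicate point is the asymmetry in the hypotheses. The implication \textit{tight}~$\Rightarrow$~\textit{relatively compact} holds on any metric space, so it suffices for part~(a) under mere separability; the converse implication \textit{relatively compact}~$\Rightarrow$~\textit{tight}, which is what part~(b) relies on, is exactly the nontrivial direction of Prohorov's theorem and genuinely requires completeness of ${\cal X}$. The remaining bookkeeping—pulling the limit back inside $M$ via closedness in part~(a), and extracting it directly from $M$ via compactness in part~(b)—is routine once one works within the metric space $({\cal M}_1({\cal X}),{\cal L})$.
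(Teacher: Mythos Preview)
Your argument is correct and follows the standard route: tightness $\Rightarrow$ relative compactness (valid on any metric space) plus closedness gives sequential compactness, hence compactness, in the metric space $({\cal M}_1({\cal X}),{\cal L})$; and conversely, compactness gives closedness trivially and tightness via the nontrivial direction of Prohorov's theorem, which needs completeness. The asymmetry you highlight in the hypotheses is exactly the point.

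However, note that the paper does not actually prove this theorem. It is stated in Appendix~\ref{backround_material} as background material, alongside Theorem~\ref{3.1} and Theorem~\ref{prohorov}, with no proof given; the paper simply records it and the explanatory sentence that follows it. So there is nothing to compare your proposal against---your write-up is a correct proof of a result the paper quotes without proof.
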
 

\noi In what follows, we give the definition of weak continuity of conditional distributions, which is often associated with proving results using weak convergence of probability distributions, and we distinguish it from strong continuity.

\begin{definition}(Strong and weak continuity){\ \\}
\label{App_SW}
Let $({\cal X},d)$, $({\cal Y},d')$ be metric spaces, $Q(\cdot|\cdot)\in{\cal Q}({\cal Y}|{\cal X})$ a conditional distribution, and define by $BM({\cal Y})$ the set of bounded measurable functions on ${\cal Y}$. Then $Q(\cdot|\cdot)\in{\cal Q}({\cal Y}|{\cal X})$ is said to be\\
1) strongly continuous if the function mapping
\begin{align*}
x\longmapsto\int_{\cal Y}f(y)Q(dy|x)\in{BC}({\cal Y})
\end{align*}
whenever $f(\cdot)\in{BM}({\cal Y})$,\\
2) weakly continuous if the function mapping 
\begin{align*}
x\longmapsto\int_{\cal Y}f(y)Q(dy|x)\in{BC}({\cal Y})
\end{align*}
whenever $f(\cdot)\in{BC}({\cal Y})$.
\end{definition}
\noi It can be shown that strong continuity is equivalent to  $Q(B|\cdot)$ is continuous on ${\cal Y}$ for every set $B\in{\cal B}({\cal Y})$ (i.e., its conditional distribution is continuous), and this is much stronger than weak continuity of  $Q(\cdot|\cdot)\in{\cal Q}({\cal Y}|{\cal X})$.\\ 
\noi{\bf Uniform Integrability.}\\
\noi In this paper we shall also need stronger sufficient conditions to verify convergence of a sequence of integrals using the concept of uniform integrability. We state this next. 
\begin{definition}$($Uniform Integrability of RV's$)$\label{Uniform-Integrability}\cite[Definition 4, p.~188]{shiryaev1984}{\ \\}
Let $(\Omega,{\cal F},\mathbb{P})$ be a probability space. A sequence of RV's $\{X_n:n\in\mathbb{N}_1\}$, $\mathbb{N}_1\triangleq\{1,2,\ldots\}$, is said to be uniformly $\mathbb{P}$-integrable if
\begin{align}
\lim_{c\rightarrow{\infty}}\sup_{n\in\mathbb{N}_1}\int_{\{\omega:|X_n(\omega)|\geq{c}\}}|X_n(\omega)|d\mathbb{P}(\omega)={0}.\nonumber
\end{align}
\end{definition}
\noi Note that if $\{X_n:n\in\mathbb{N}_1\}$ satisfy $|X_n|\leq{Y}$ and $\mathbb{E}\{Y\}<\infty$, then the sequence $\{X_n:n\in\mathbb{N}_1\}$ is uniformly integrable. \\
The following theorem gives some properties for a family of uniformly integrable RV's.
\begin{theorem}$($Uniform Integrability of RV's$)$\label{uniform_integrability_1}\cite[Theorem 4, pp.~188-189]{shiryaev1984}{\ \\}
Let $(\Omega,{\cal F},\mathbb{P})$ be a probability space and $\{X_n:n\in\mathbb{N}_1\}$ a uniformly $\mathbb{P}$-integrable family of RV's. Then
\begin{enumerate}
\item[(a)] $\mathbb{E}\liminf_{n}X_n\leq\liminf_{n}\mathbb{E}X_n\leq\limsup_{n}\mathbb{E}X_n\leq\mathbb{E}\limsup_{n}X_n$.
\item[(b)]If $X_n\buildrel a.s. \over\Longrightarrow{X}$, then $\mathbb{E}|X|<\infty$, $\lim_{n\rightarrow\infty}\mathbb{E}|X_n|=\mathbb{E}|X|$ and $\lim_{n\rightarrow\infty}\mathbb{E}\Big\{|X_n-X|\Big\}={0}$.
\end{enumerate}
\end{theorem}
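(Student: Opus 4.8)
The plan is to prove part (a) by a truncation argument combined with Fatou's lemma, using the uniform integrability hypothesis to control the truncation error \emph{uniformly} in $n$, and then to deduce part (b) as a corollary by observing that almost sure convergence forces the $\liminf$ and $\limsup$ in (a) to coincide. Throughout I would write $\epsilon(c)\tri\sup_{n\geq1}\int_{\{|X_n|\geq{c}\}}|X_n|\,d\mathbb{P}$, so that uniform integrability of $\{X_n:n\geq1\}$ is precisely the statement $\epsilon(c)\rightarrow0$ as $c\rightarrow\infty$.

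First I would establish the left-most inequality in (a), namely $\mathbb{E}\liminf_n X_n\leq\liminf_n\mathbb{E}X_n$. Fix $c>0$ and introduce the truncated variables $Y_n\tri\max(X_n,-c)$, which satisfy $Y_n\geq-c$ and $Y_n\geq{X}_n$ pointwise. Applying Fatou's lemma to the nonnegative family $\{Y_n+c:n\geq1\}$ gives $\mathbb{E}\liminf_n Y_n\leq\liminf_n\mathbb{E}Y_n$. Since $0\leq{Y}_n-X_n=(|X_n|-c)I_{\{X_n<-c\}}\leq|X_n|I_{\{|X_n|\geq{c}\}}$, one obtains $\mathbb{E}Y_n\leq\mathbb{E}X_n+\epsilon(c)$ for every $n$, hence $\liminf_n\mathbb{E}Y_n\leq\liminf_n\mathbb{E}X_n+\epsilon(c)$. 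Combining these with the pointwise bound $\liminf_n Y_n\geq\liminf_n X_n$ yields $\mathbb{E}\liminf_n X_n\leq\liminf_n\mathbb{E}X_n+\epsilon(c)$, and letting $c\rightarrow\infty$ closes the inequality. The right-most inequality $\limsup_n\mathbb{E}X_n\leq\mathbb{E}\limsup_n X_n$ then follows by applying the result just proved to $\{-X_n:n\geq1\}$ (again uniformly integrable), while the middle inequality is immediate. This proves (a).

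For part (b), the almost sure convergence $X_n\rightarrow X$ gives $\liminf_n X_n=\limsup_n X_n=X$ a.s., so the chain in (a) collapses to $\mathbb{E}X\leq\liminf_n\mathbb{E}X_n\leq\limsup_n\mathbb{E}X_n\leq\mathbb{E}X$, forcing $\lim_n\mathbb{E}X_n=\mathbb{E}X$. To see $\mathbb{E}|X|<\infty$, I note that for any fixed $c$ with $\epsilon(c)<\infty$ one has $\sup_n\mathbb{E}|X_n|\leq{c}+\epsilon(c)<\infty$, whence Fatou applied to $\{|X_n|\}$ (using $|X_n|\rightarrow|X|$ a.s.) gives $\mathbb{E}|X|\leq\liminf_n\mathbb{E}|X_n|<\infty$. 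Finally, for the two $L^1$ statements I would observe that uniform integrability is inherited by $\{|X_n|:n\geq1\}$ (whose tail integrals coincide with those of $\{X_n\}$) and by $\{|X_n-X|:n\geq1\}$ (because $|X_n-X|\leq|X_n|+|X|$, and a family dominated by the sum of a uniformly integrable family and a single integrable variable is itself uniformly integrable). Since $|X_n|\rightarrow|X|$ and $|X_n-X|\rightarrow0$ almost surely, applying the expectation-convergence conclusion just obtained to each of these families yields $\lim_n\mathbb{E}|X_n|=\mathbb{E}|X|$ and $\lim_n\mathbb{E}|X_n-X|=\mathbb{E}0=0$.

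The main obstacle I anticipate is the bookkeeping in the truncation step of part (a): one must verify the pointwise identity $Y_n-X_n=(|X_n|-c)I_{\{X_n<-c\}}$ and bound it by $|X_n|I_{\{|X_n|\geq{c}\}}$ so that the single quantity $\epsilon(c)$ dominates $\mathbb{E}(Y_n-X_n)$ uniformly in $n$; this uniformity is exactly what the hypothesis supplies and is what permits sending the error to zero after taking $\liminf$. A secondary point requiring care is confirming that uniform integrability is preserved under passage to $\{|X_n|\}$ and to the differences $\{|X_n-X|\}$, which must be checked before the already-established convergence of expectations can be invoked to deliver the two $L^1$ conclusions in (b).
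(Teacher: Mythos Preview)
Your proof is correct and follows the standard truncation-plus-Fatou argument that one finds in the cited source \cite[Theorem~4, pp.~188--189]{shiryaev1984}. Note, however, that the paper does not give its own proof of this statement: it is listed in Appendix~\ref{backround_material} purely as background material, with the proof deferred to Shiryaev's textbook. So there is no in-paper proof to compare against; your write-up is essentially a reconstruction of the standard textbook argument and is sound as stated.
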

\noi The next definition of uniform integrability is with respect to a family of probability measures for a fixed integrand.
\begin{definition}\label{Uniform-Integrability-2}$($Uniform Integrability for a family of probability measures$)${\ \\}
Let ${M}\subset{\cal M}_1({\cal X})$ be a family of probability measures on $\big{(}{\cal X},{\cal B}({\cal X})\big{)}$. A measurable function $f$ on ${\cal X}$ is said to be uniformly integrable over ${M}$ if
\begin{align*}
\lim_{c\rightarrow\infty}\sup_{P\in{M}}\int_{\{x\in{\cal X}:|f(x)|>c\}}|f(x)|d{P}(x)={0}.
\end{align*}
\end{definition} 
\noi A sufficient condition for the convergence of a sequence of integrals of a function with respect to a weakly convergent sequence of measures is the following.
\begin{theorem}\label{continuity-uniform_integrability}\cite[Appendix, Theorem~A.2, p.~3084]{fozunbal}{\ \\}
Let $M\subset{\cal M}_1({\cal X})$ be a closed family of probability measures on $\big{(}{\cal X},{\cal B}({\cal X})\big{)}$, and let $\{P_n:n\in\mathbb{N}_1\}\subset{M}$ be a weakly convergent sequence in $M$. If $f$ is a continuous function on ${\cal X}$ and uniformly integrable over $\{P_n:n\in\mathbb{N}_1\}$ then $\lim_{n\rightarrow\infty}\int{f}(x)dP_n(x)=\int{f}(x)dP(x)$. 
\end{theorem}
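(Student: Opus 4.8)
The plan is a truncation argument that uses weak convergence to control the bounded part of $f$ and the uniform integrability hypothesis to control its tail. Since $f$ is merely continuous (not bounded), the weak convergence $P_n\buildrel w \over\longrightarrow P$ --- whose limit $P$ lies in $M$ because $M$ is closed --- cannot be applied to $f$ directly, as it applies only to elements of $BC({\cal X})$. So first I would introduce, for each $c>0$, the continuous clipping $\phi_c:\mathbb{R}\to\mathbb{R}$, $\phi_c(t)=\max(-c,\min(t,c))$, and set $f_c\tri\phi_c\circ f$. Then $f_c\in BC({\cal X})$ with $\|f_c\|\leq c$, $|f_c|=\min(|f|,c)$, and $|f-f_c|=(|f|-c)^{+}\leq|f|\,\mathbf{1}_{\{|f|>c\}}$.

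The core estimate is the triangle inequality
\[
\Big|\int f\,dP_n-\int f\,dP\Big|\leq\int_{\{|f|>c\}}|f|\,dP_n+\Big|\int f_c\,dP_n-\int f_c\,dP\Big|+\int_{\{|f|>c\}}|f|\,dP .
\]
I would dispatch the three terms as follows. The middle term tends to $0$ as $n\to\infty$ for each fixed $c$, directly by weak convergence since $f_c\in BC({\cal X})$. The first term is handled by the assumption that $f$ is uniformly integrable over $\{P_n\}$ (Definition~\ref{Uniform-Integrability-2}): $\sup_n\int_{\{|f|>c\}}|f|\,dP_n\to0$ as $c\to\infty$, and this supremum is nonincreasing in $c$.

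The main obstacle is the third term, the tail of $f$ at the limit measure $P$, since uniform integrability is assumed only over the sequence $\{P_n\}$ and not over $P$ itself. I would first establish $f\in L^{1}(P)$. Uniform integrability yields a uniform bound $C\tri\sup_n\int|f|\,dP_n<\infty$: choosing $c_0$ with $\sup_n\int_{\{|f|>c_0\}}|f|\,dP_n\leq1$ gives $\int|f|\,dP_n\leq c_0+1$ for all $n$. Since $\min(|f|,c)=|f_c|\in BC({\cal X})$, weak convergence gives $\int\min(|f|,c)\,dP=\lim_n\int\min(|f|,c)\,dP_n\leq C$ for every $c$, and letting $c\to\infty$ via the monotone convergence theorem yields $\int|f|\,dP\leq C<\infty$. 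With $f\in L^{1}(P)$ secured, Lebesgue's dominated convergence theorem (Theorem~\ref{lebesgues_dominated}, dominating function $|f|$) forces $\int_{\{|f|>c\}}|f|\,dP\to0$ as $c\to\infty$.

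Finally I would assemble the $\varepsilon/3$ argument: given $\varepsilon>0$, first pick $c$ so large that both tail terms are below $\varepsilon/3$ --- the first uniformly in $n$, using its monotonicity in $c$ --- and then pick $N$ so that the middle term is below $\varepsilon/3$ for all $n\geq N$; this gives $|\int f\,dP_n-\int f\,dP|<\varepsilon$ for $n\geq N$, establishing the claim. The closedness of $M$ plays only the role of guaranteeing that the weak limit $P$ belongs to $M$, so that $\int f\,dP$ is well defined within the stated framework; the convergence itself requires only that $P$ be the (automatically probability-measure-valued) weak limit of $\{P_n\}$.
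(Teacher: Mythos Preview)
Your argument is correct and is the standard truncation/$\varepsilon/3$ proof of this fact. Note, however, that the paper does not supply its own proof of this statement: Theorem~\ref{continuity-uniform_integrability} appears in the background-material appendix and is simply quoted from \cite[Appendix, Theorem~A.2]{fozunbal}, so there is no in-paper proof to compare against. Your proof would serve perfectly well as a self-contained substitute for that citation; the only remark is that your observation about the role of closedness of $M$ is exactly right---it is used only so that the weak limit $P$ lies in $M$, and the analytic content of the argument does not depend on it.
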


\noi{\bf Absolute Continuity of Probability Measures.}\label{regular}\\
Let $({\Omega},{\cal F})$ be a measurable space. Given two probability measures $P, Q$ on $({\Omega},{\cal F}),$ $Q$ is said to be {\it absolutely continuous} with respect to $P$ (denoted $P\ll{Q}$) if for every $A\in{\cal F}$ such that $P(A)=0$ then $Q(A)=0$. If ${Q}\ll{P},$ by Radon-Nikodym Derivative theorem, there exists a $P-$integrable and ${\cal F}-$measurable function $f$ such that for every $A\in{\cal F},$ $Q(A)=\int_Af(\omega)dP(\omega)$. Let $({\Omega},{\cal F},\mathbb{P})$ be a probability space and ${\cal G}$ be a sub-$\sigma$-algebra of ${\cal F}.$ A {\it regular conditional probability distribution} $P(\cdot|{\cal G})$ on $({\Omega},{\cal F})$ exist, when ${\cal G}$ is generated by a countable partition of $\Omega$. Moreover, if $(\Omega,d)$ is a metric space which is complete and separable (Polish space), and ${\cal F}$ is a Borel $\sigma-$algebra, then for any probability measure $P$ on $(\Omega,{\cal F})$ and any sub-$\sigma$-algebra ${\cal G}\subseteq{\cal F},$ a regular conditional probability measure of $P$ given ${\cal G}$ always exists.\\
The next lemma summarizes certain relationships between the absolute continuity of probability measures. 
\begin{lemma}\label{absolute_contunuity}(Absolute Continuity of Probability Measures)\cite[Lemma 4.4.7, pp. 149-150]{deuschel-stroock1989}
\item[a)] Suppose $Q_{\cal G}\ll{P}_{\cal G}.$ If $Q(\cdot|{\cal G})(\omega)\ll{P}(\cdot|{\cal G})(\omega),~Q_{\cal G}-a.s.,$ then $Q\ll{P}.$
\item[b)] Conversely, if $Q\ll{P},$ then $Q(\cdot|{\cal G})(\omega)\ll{P}(\cdot|{\cal G})(\omega),~{P}(\cdot|{\cal G})(\omega)-a.s.$
\end{lemma}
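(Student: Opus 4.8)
The plan is to reduce both implications to the two disintegration identities that define a regular conditional probability measure, namely $P(A)=\int_{\Omega}P(A|\mathcal{G})(\omega)\,P_{\mathcal{G}}(d\omega)$ and $Q(A)=\int_{\Omega}Q(A|\mathcal{G})(\omega)\,Q_{\mathcal{G}}(d\omega)$ for every $A\in\mathcal{F}$ (Definition~\ref{regular_conditional_distribution}(c)), combined with the Radon--Nikodym theorem. Throughout I work under the standing Polish-space assumption, so that the regular conditional probabilities exist and $\mathcal{F}$ is countably generated.

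For part a) I would argue pointwise in $\omega$. Fix $A\in\mathcal{F}$ with $P(A)=0$. Since $P(A|\mathcal{G})(\cdot)\geq 0$ and its $P_{\mathcal{G}}$-integral equals $P(A)=0$, we get $P(A|\mathcal{G})(\omega)=0$ for $P_{\mathcal{G}}$-a.e.\ $\omega$; the exceptional set is $P_{\mathcal{G}}$-null, hence $Q_{\mathcal{G}}$-null because $Q_{\mathcal{G}}\ll P_{\mathcal{G}}$, so in fact $P(A|\mathcal{G})(\omega)=0$ for $Q_{\mathcal{G}}$-a.e.\ $\omega$. Feeding this into the hypothesis $Q(\cdot|\mathcal{G})(\omega)\ll P(\cdot|\mathcal{G})(\omega)$ ($Q_{\mathcal{G}}$-a.s.) turns $P(A|\mathcal{G})(\omega)=0$ into $Q(A|\mathcal{G})(\omega)=0$ for $Q_{\mathcal{G}}$-a.e.\ $\omega$, and integrating against $Q_{\mathcal{G}}$ gives $Q(A)=0$. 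As $A$ was arbitrary, $Q\ll P$.

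For part b) I would build the conditional density explicitly. Put $f\triangleq dQ/dP$ and note that restricting $Q\ll P$ to $\mathcal{G}$ yields $Q_{\mathcal{G}}\ll P_{\mathcal{G}}$ with density $g\triangleq dQ_{\mathcal{G}}/dP_{\mathcal{G}}=\mathbb{E}_{P}[f|\mathcal{G}]$ ($P_{\mathcal{G}}$-a.s.), the last identity following by testing both sides against indicators of $\mathcal{G}$-sets. On $\{g>0\}$, which carries full $Q_{\mathcal{G}}$-measure since $Q_{\mathcal{G}}(\{g=0\})=\int_{\{g=0\}}g\,dP_{\mathcal{G}}=0$, I would define $\nu_{\omega}(A)\triangleq g(\omega)^{-1}\int_{A}f(\eta)\,P(d\eta|\mathcal{G})(\omega)$. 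Using the disintegration of conditional expectation $\int_{A}f(\eta)\,P(d\eta|\mathcal{G})(\omega)=\mathbb{E}_{P}[f\mathbf{1}_{A}|\mathcal{G}](\omega)$ and $Q_{\mathcal{G}}(d\omega)=g(\omega)P_{\mathcal{G}}(d\omega)$, both $\int_{G}Q(A|\mathcal{G})\,dQ_{\mathcal{G}}$ and $\int_{G}\nu_{\bullet}(A)\,dQ_{\mathcal{G}}$ collapse to $\int_{A\cap G}f\,dP$ for every $G\in\mathcal{G}$, whence $Q(A|\mathcal{G})(\omega)=\nu_{\omega}(A)$ for $Q_{\mathcal{G}}$-a.e.\ $\omega$, for each fixed $A$. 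Each $\nu_{\omega}$ is a probability measure with $\nu_{\omega}\ll P(\cdot|\mathcal{G})(\omega)$ and density $f/g(\omega)$, so $P(A|\mathcal{G})(\omega)=0$ forces $Q(A|\mathcal{G})(\omega)=0$, which is the asserted conditional absolute continuity.

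The hard part will be upgrading the per-$A$ almost-sure identity $Q(A|\mathcal{G})(\omega)=\nu_{\omega}(A)$ to a single co-null set on which $Q(\cdot|\mathcal{G})(\omega)$ and $\nu_{\omega}$ agree \emph{as measures}, since the exceptional null set a priori depends on $A$ and $\mathcal{F}$ may be uncountably generated. Here I would invoke the Polish-space hypothesis: choose a countable algebra $\mathcal{A}$ generating $\mathcal{F}$, form the $Q_{\mathcal{G}}$-null union $N\triangleq\bigcup_{A\in\mathcal{A}}N_{A}$ of the countably many exceptional sets, and for $\omega\notin N$ extend the agreement on $\mathcal{A}$ to all of $\mathcal{F}$ by a $\pi$--$\lambda$ (monotone class) argument, both measures being probability measures. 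Finally I would remark that the correct exceptional-set qualifier is $Q_{\mathcal{G}}$-a.s., not $P_{\mathcal{G}}$-a.s., because on $\{g=0\}$ (possibly of positive $P_{\mathcal{G}}$-measure but $Q_{\mathcal{G}}$-null) the regular conditional version $Q(\cdot|\mathcal{G})(\omega)$ need not be absolutely continuous with respect to $P(\cdot|\mathcal{G})(\omega)$.
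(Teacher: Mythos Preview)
The paper does not supply its own proof of this lemma: it is stated in the background-material appendix with a bare citation to Deuschel--Stroock \cite[Lemma~4.4.7]{deuschel-stroock1989}, so there is nothing in the paper to compare against. Your argument is the standard one and is correct in both parts.

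A few remarks. Your proof of part~a) is clean and complete as written. For part~b), the construction of the conditional density $f/g(\omega)$ relative to $P(\cdot|\mathcal{G})(\omega)$ is exactly the right idea, and you correctly flag the monotone-class step needed to pass from a per-$A$ null set to a single null set; invoking the countable generation of $\mathcal{F}$ under the Polish hypothesis is the standard fix. Your closing observation about the almost-sure qualifier is also well taken: the statement in the paper reads ``$P(\cdot|\mathcal{G})(\omega)$-a.s.'', which is at best ambiguous notation; the natural and correct qualifier is $Q_{\mathcal{G}}$-a.s.\ (equivalently $P_{\mathcal{G}}$-a.s.\ restricted to $\{g>0\}$), precisely because on $\{g=0\}$ the conditional version of $Q$ is unconstrained. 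This does not affect any of the paper's applications of the lemma, which only ever use it on sets of full $Q_{\mathcal{G}}$-measure.
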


\noi If $Y:(\Omega,{\cal F})\longmapsto({\cal Y},{\cal A})$ is a RV on $(\Omega,{\cal F})$ into a measurable space $({\cal Y},{\cal A})$ and ${\cal Y}$ is a Polish space, then a regular conditional distribution for $Y$ given the sub-$\sigma$-algebra ${\cal G}$ of ${\cal F}$ denoted by $P(dy|{\cal G})(\omega)$, always exists. Additionally, if $X:(\Omega,{\cal F})\longmapsto({\cal X},{\cal B})$ is a RV on $(\Omega,{\cal F})$ into a measurable space $({\cal X},{\cal B}),$ and ${\cal G}$ is the sub-$\sigma$-algebra of ${\cal F}$ generated by X, then $P(dy|X)(\omega)$ is called the {\it regular conditional distribution} of $Y$ given $X.$ One can go one step further to define an equivalent definition of a regular conditional distribution for $Y$ given $X=x$ as a quantity $P(dy|X=x)$ called stochastic kernel.


\section{Proof of Theorem~\ref{convexity1}}\label{convexity_of_functionals}
1) Fix ${\overleftarrow P}_{0,n}(\cdot|y^{n-1})\in{\cal M}_1^{\bf C1}({\cal X}_{0,n})$ and let ${\overrightarrow Q}^1_{0,n}(\cdot|x^n)$, ${\overrightarrow Q}^2_{0,n}(\cdot|x^n)\in{\cal M}_1^{\bf C2}({\cal Y}_{0,n})$. Then, the joint distributions corresponding  to ${\overrightarrow Q}^1_{0,n}(\cdot|x^n)$, ${\overrightarrow Q}^2_{0,n}(\cdot|x^n)$ are
\begin{align}
({\overleftarrow P}_{0,n}\otimes{\overrightarrow Q}^1_{0,n})(dx^n,dy^n)~\mbox{and}~({\overleftarrow P}_{0,n}\otimes{\overrightarrow Q}^2_{0,n})(dx^n,dy^n),\nonumber
\end{align}
and the marginals are
\begin{align}
\nu_{0,n}^1(dy^n)=({\overleftarrow P}_{0,n}\otimes{\overrightarrow Q}^1_{0,n})({\cal X}_{0,n},dy^n),\hst\nu_{0,n}^2(dy^n)=({\overleftarrow P}_{0,n}\otimes{\overrightarrow Q}^2_{0,n})({\cal X}_{0,n},dy^n).\nonumber
\end{align}
Since the set ${\cal M}_1^{\bf C2}({\cal Y}_{0,n})$ is convex, given $\lambda\in(0,1)$ there exists a probability measure $\tilde {\bf P}$ on $({\cal X}^{\mathbb{N}_0}\times{\cal Y}^{\mathbb{N}_0},{\cal B}({\cal X}^{\mathbb{N}_0})\otimes{\cal B}({\cal Y}^{\mathbb{N}_0}))$ whose regular conditional measure ${\bf Q}(\cdot|{\bf x})\in{\cal M}_1({\cal Y}^{\mathbb{N}_0})$ satisfies
\begin{align}
{\overrightarrow Q}_{0,n}(\cdot|x^n)=\lambda{\overrightarrow Q}_{0,n}^1(\cdot|x^n)+(1-\lambda){\overrightarrow Q}_{0,n}^2(\cdot|x^n),~{\bar {\bf P}}\big{|}_{{\cal B}({\cal X}_{0,n})}-a.e.~x^n\nonumber
\end{align}
and ${\bf C1}$ holds. Define
\begin{align}
\nu_{0,n}(dy^n)&={\lambda}\nu_{0,n}^1(dy^n)+(1-\lambda)\nu_{0,n}^2(dy^n).\nonumber
\end{align}
Introduce the RNDs
${\Lambda}_{0,n}^i(x^n,y^n)=\frac{d{\overrightarrow Q}^i_{0,n}(\cdot|x^n)}{d{\nu}^i_{0,n}(\cdot)}(y^n)$, ${\Psi}_{0,n}^i(x^n,y^n)=\frac{d{\overrightarrow Q}^i_{0,n}(\cdot|x^n)}{d{\nu}_{0,n}(\cdot)}(y^n)$, ${K}_{0,n}^i(y^n)=\frac{d{\nu}^i_{0,n}(\cdot)}{d{\nu}_{0,n}(\cdot)}(y^n)$ and ${\Lambda}_{0,n}(x^n,y^n)=\frac{d{\overrightarrow Q}_{0,n}(\cdot|x^n)}{d{\nu}_{0,n}(\cdot)}(y^n),~i=1,2$. Then,
\begin{align}
\lambda{\Psi}_{0,n}^1(x^n,y^n)+(1-\lambda){\Psi}_{0,n}^2(x^n,y^n)&=\lambda\frac{d{\overrightarrow Q}^1_{0,n}(\cdot|x^n)}{d{\nu}_{0,n}(\cdot)}(y^n)+(1-\lambda)\frac{d{\overrightarrow Q}^2_{0,n}(\cdot|x^n)}{d{\nu}_{0,n}(\cdot)}(y^n)\nonumber\\
&=\frac{d\big({\lambda}{\overrightarrow Q}^1_{0,n}(\cdot|x^n)+(1-\lambda){\overrightarrow Q}^2_{0,n}(\cdot|x^n)\big)}{d\big({\lambda}{\nu}^1_{0,n}(\cdot)+(1-\lambda){\nu}^2_{0,n}(\cdot)\big)}(y^n)=\Lambda_{0,n}(x^n,y^n)\nonumber
\end{align}
and
\begin{align}
\lambda{K}_{0,n}^1(y^n)+(1-\lambda){K}_{0,n}^2(y^n)&=\lambda\frac{d{\nu}^1_{0,n}(\cdot)}{d{\nu}_{0,n}(\cdot)}(y^n)+(1-\lambda)\frac{d{\nu}^2_{0,n}(\cdot)}{d{\nu}_{0,n}(\cdot)}(y^n)=\frac{d\big({\lambda}{\nu}^1_{0,n}(\cdot)+(1-\lambda){\nu}^2_{0,n}(\cdot)\big)}{d\big({\lambda}{\nu}^1_{0,n}(\cdot)+(1-\lambda){\nu}^2_{0,n}(\cdot)\big)}(y^n)=1.\nonumber
\end{align}
Applying the log-sum formula \cite[Theorem 2.7.1, p.~31]{cover-thomas} yields
\begin{align}
&\lambda{\Psi}_{0,n}^1(x^n,y^n)\log\Lambda_{0,n}^1(x^n,y^n)+(1-\lambda){\Psi}_{0,n}^2(x^n,y^n)\log\Lambda_{0,n}^2(x^n,y^n)\nonumber\\
&=\lambda{\Psi}_{0,n}^1(x^n,y^n)\log\Bigg(\frac{\frac{d{\overrightarrow Q}^1_{0,n}(\cdot|x^n)}{d{\nu}_{0,n}(\cdot)}(y^n)}{\frac{d{\nu}^1_{0,n}(\cdot)}{d{\nu}_{0,n}(\cdot)}(y^n)}\Bigg)
+(1-\lambda){\Psi}_{0,n}^2(x^n,y^n)\log\Bigg(\frac{\frac{d{\overrightarrow Q}^2_{0,n}(\cdot|x^n)}{d{\nu}_{0,n}(\cdot)}(y^n)}{\frac{d{\nu}^2_{0,n}(\cdot)}{d{\nu}_{0,n}(\cdot)}(y^n)}\Bigg)\nonumber\\
&=\lambda\frac{d{\overrightarrow Q}^1_{0,n}(\cdot|x^n)}{d{\nu}_{0,n}(\cdot)}(y^n)\log\Bigg(\frac{\lambda\frac{d{\overrightarrow Q}^1_{0,n}(\cdot|x^n)}{d{\nu}_{0,n}(\cdot)}(y^n)}{\lambda\frac{d{\nu}^1_{0,n}(\cdot)}{d{\nu}_{0,n}(\cdot)}(y^n)}\Bigg)
+(1-\lambda)\frac{d{\overrightarrow Q}^2_{0,n}(\cdot|x^n)}{d{\nu}_{0,n}(\cdot)}(y^n)\log\Bigg(\frac{(1-\lambda)\frac{d{\overrightarrow Q}^2_{0,n}(\cdot|x^n)}{d{\nu}_{0,n}(\cdot)}(y^n)}{(1-\lambda)\frac{d{\nu}^2_{0,n}(\cdot)}{d{\nu}_{0,n}(\cdot)}(y^n)}\Bigg)\nonumber\\
&\geq\Bigg(\lambda\frac{d{\overrightarrow Q}^1_{0,n}(\cdot|x^n)}{d{\nu}_{0,n}(\cdot)}(y^n)+(1-\lambda)\frac{d{\overrightarrow Q}^2_{0,n}(\cdot|x^n)}{d{\nu}_{0,n}(\cdot)}(y^n)\Bigg)\log\Bigg(\frac{\lambda\frac{d{\overrightarrow Q}^1_{0,n}(\cdot|x^n)}{d{\nu}_{0,n}(\cdot)}(y^n)+(1-\lambda)\frac{d{\overrightarrow Q}^2_{0,n}(\cdot|x^n)}{d{\nu}_{0,n}(\cdot)}(y^n)}{\lambda\frac{d{\nu}^1_{0,n}(\cdot)}{d{\nu}_{0,n}(\cdot)}(y^n)+(1-\lambda)\frac{d{\nu}^2_{0,n}(\cdot)}{d{\nu}_{0,n}(\cdot)}(y^n)}\Bigg)\nonumber\\
&=\frac{d{\overrightarrow Q}_{0,n}(\cdot|x^n)}{d{\nu}_{0,n}(\cdot)}(y^n)\log\frac{d{\overrightarrow Q}_{0,n}(\cdot|x^n)}{d{\nu}_{0,n}(\cdot)}(y^n).\nonumber
\end{align}
Integrating the above with respect to $\nu_{0,n}(dy^n)\otimes{\overleftarrow P}_{0,n}(dx^n|y^{n-1})$ yields:
\begin{align}
&\int_{{\cal X}_{0,n}\times{\cal Y}_{0,n}}\log\Big(\frac{d{\overrightarrow Q}_{0,n}(\cdot|x^n)}{d{\nu}_{0,n}(\cdot)}(y^n)\Big)\frac{d{\overrightarrow Q}_{0,n}(\cdot|x^n)}{d{\nu}_{0,n}(\cdot)}(y^n)\big(\nu_{0,n}(dy^n)\otimes{\overleftarrow P}_{0,n}(dx^n|y^{n-1})\big)\nonumber\\
&=\int_{{\cal X}_{0,n}\times{\cal Y}_{0,n}}\log\Big(\frac{d{\overrightarrow Q}_{0,n}(\cdot|x^n)}{d{\nu}_{0,n}(\cdot)}(y^n)\Big)({\overrightarrow Q}_{0,n}\otimes{\overleftarrow P}_{0,n})(dx^n,dy^{n})\nonumber\\
&\leq\lambda\int_{{\cal X}_{0,n}\times{\cal Y}_{0,n}}\log\Big(\frac{d{\overrightarrow Q}^1_{0,n}(\cdot|x^n)}{d{\nu}^1_{0,n}(\cdot)}(y^n)\Big)({\overrightarrow Q}_{0,n}^1\otimes{\overleftarrow P}_{0,n})(dx^n,dy^{n})\nonumber\\
&+(1-\lambda)\int_{{\cal X}_{0,n}\times{\cal Y}_{0,n}}\log\Big(\frac{d{\overrightarrow Q}^2_{0,n}(\cdot|x^n)}{d{\nu}^2_{0,n}(\cdot)}(y^n)\Big)({\overrightarrow Q}_{0,n}^2\otimes{\overleftarrow P}_{0,n})(dx^n,dy^{n}).\nonumber
\end{align}
Hence,
\begin{align}
{\mathbb{I}}_{X^n\rightarrow{Y^n}}\big({\overleftarrow P}_{0,n},\lambda{\overrightarrow Q}_{0,n}^1+(1-\lambda){\overrightarrow Q}_{0,n}^2\big)\leq\lambda{\mathbb{I}}_{X^n\rightarrow{Y^n}}\big({\overleftarrow P}_{0,n},{\overrightarrow Q}_{0,n}^1\big)+(1-\lambda){\mathbb{I}}_{X^n\rightarrow{Y^n}}\big({\overleftarrow P}_{0,n},{\overrightarrow Q}_{0,n}^2\big).\nonumber
\end{align}
This completes the derivation of 1).\\
2) Fix ${\overrightarrow Q}_{0,n}(\cdot|x^n)\in{\cal M}_1^{\bf C2}({\cal Y}_{0,n})$ and let ${\overleftarrow P}^1_{0,n}(\cdot|y^{n-1})$, ${\overleftarrow P}^2_{0,n}(\cdot|y^{n-1})\in{\cal M}_1^{\bf C1}({\cal X}_{0,n})$. Then, the joint distributions corresponding to ${\overleftarrow P}^1_{0,n}(\cdot|y^{n-1})$, ${\overleftarrow P}^2_{0,n}(\cdot|y^{n-1})$ are
\begin{align}
({\overleftarrow P}^1_{0,n}\otimes{\overrightarrow Q}_{0,n})(dx^n,dy^n)~\mbox{and}~({\overleftarrow P}^2_{0,n}\otimes{\overrightarrow Q}_{0,n})(dx^n,dy^n).\nonumber
\end{align}
The marginals corresponding to ${\overleftarrow P}_{0,n}^1(\cdot|y^{n-1})$, ${\overleftarrow P}_{0,n}^2(\cdot|y^{n-1})$ are
\begin{align}
\nu_{0,n}^1(dy^n)&=({\overleftarrow P}^1_{0,n}\otimes{\overrightarrow Q}_{0,n})({\cal X}_{0,n},dy^n),~\nu_{0,n}^2(dy^n)=({\overleftarrow P}^2_{0,n}\otimes{\overrightarrow Q}_{0,n})({\cal X}_{0,n},dy^n).\nonumber
\end{align}
Since the set ${\cal M}_1^{\bf C1}({\cal X}_{0,n})$ is convex, given $\lambda\in(0,1)$ there exists a probability measure $\tilde {\bf P}$ on $({\cal X}^{\mathbb{N}_0}\times{\cal Y}^{\mathbb{N}_0},{\cal B}({\cal X}^{\mathbb{N}_0})\otimes{\cal B}({\cal Y}^{\mathbb{N}_0}))$ whose regular conditional measure ${\bf P}(\cdot|{\bf y})\in{\cal M}_1({\cal X}^{\mathbb{N}_0})$ satisfies
\begin{align}
{\overleftarrow P}_{0,n}(\cdot|y^{n-1})=\lambda{\overleftarrow P}^1_{0,n}(\cdot|y^{n-1})+(1-\lambda){\overleftarrow P}^2_{0,n}(\cdot|y^{n-1}),~\bar{\bf P}\big{|}_{{\cal B}({\cal Y}_{0,n-1})}-a.e.~y^{n-1}\nonumber
\end{align}
and ${\bf C2}$ holds. Then, corresponding to ${\overleftarrow P}_{0,n}(\cdot|y^{n-1})$ and ${\overrightarrow Q}_{0,n}(\cdot|x^n)$ we have
\begin{align}
\nu_{0,n}(dy^n)&=\int_{{\cal X}_{0,n}}\big(\lambda{\overleftarrow P}^1_{0,n}(dx^n|y^{n-1})+(1-\lambda){\overleftarrow P}^1_{0,n}(dx^n|y^{n-1})\big)\otimes{\overrightarrow Q}_{0,n}(dy^n|x^n)\nonumber\\
&=\lambda({\overleftarrow P}^1_{0,n}\otimes{\overrightarrow Q}_{0,n})({\cal X}_{0,n},dy^n)+(1-\lambda)({\overleftarrow P}^2_{0,n}\otimes{\overrightarrow Q}_{0,n})({\cal X}_{0,n},dy^n)=\lambda\nu_{0,n}^1(dy^n)+(1-\lambda)\nu_{0,n}^2(dy^n).\nonumber
\end{align}
Pick any measure $U_{0,n}(dy^n)\in{\cal M}_1({\cal Y}_{0,n})$ with $\mathbb{D}(\nu_{0,n}||U_{0,n})<\infty,$ e.g., such that $\nu_{0,n}(\cdot){\ll}U_{0,n}(\cdot).$ Since ${\overrightarrow Q}(\cdot|x^n){\ll}\nu_{0,n}(\cdot)$, for almost all $x^n\in{\cal X}_{0,n}$, and $\nu_{0,n}(\cdot){\ll}U_{0,n}(\cdot)$, then ${\overrightarrow Q}_{0,n}(\cdot|x^n){\ll}U_{0,n}(\cdot)$, for almost all $x^n\in{\cal X}_{0,n}$. Consider
\begin{align}
{\mathbb{I}}_{X^n\rightarrow{Y^n}}&\big({\overleftarrow P}_{0,n},{\overrightarrow Q}_{0,n}\big)=\int_{{\cal X}_{0,n}\times{\cal Y}_{0,n}}\log\Big(\frac{d{\overrightarrow Q}_{0,n}(\cdot|x^n)}{d{\nu}_{0,n}(\cdot)}(y^n)\Big)({\overrightarrow Q}_{0,n}\otimes{\overleftarrow P}_{0,n})(dx^n,dy^{n})\nonumber\\
&=\int_{{\cal X}_{0,n}\times{\cal Y}_{0,n}}\log\Big(\frac{d\big({\overrightarrow Q}_{0,n}(\cdot|x^n){\times}U_{0,n}(\cdot)\big)}{d\big({\nu}_{0,n}(\cdot)\times{U}_{0,n}(\cdot)\big)}(y^n)\Big)({\overrightarrow Q}_{0,n}\otimes{\overleftarrow P}_{0,n})(dx^n,dy^{n})\nonumber\\
&=\int_{{\cal X}_{0,n}\times{\cal Y}_{0,n}}\log\Big(\frac{d{\overrightarrow Q}_{0,n}(\cdot|x^n)}{dU_{0,n}(\cdot)}(y^n)\Big)({\overrightarrow Q}_{0,n}\otimes{\overleftarrow P}_{0,n})(dx^n,dy^{n})\nonumber\\
&\qquad-\int_{{\cal X}_{0,n}\times{\cal Y}_{0,n}}\log\Big(\frac{d{\nu}_{0,n}(\cdot)}{dU_{0,n}(\cdot)}(y^n)\Big)({\overrightarrow Q}_{0,n}\otimes{\overleftarrow P}_{0,n})(dx^n,dy^{n})\nonumber\\
&=\int_{{\cal X}_{0,n}\times{\cal Y}_{0,n}}\log\Big(\frac{d{\overrightarrow Q}_{0,n}(\cdot|x^n)}{dU_{0,n}(dy^n)}(y^n)\Big)({\overrightarrow Q}_{0,n}\otimes{\overleftarrow P}_{0,n})(dx^n,dy^{n})\nonumber\\
&\qquad-\int_{{\cal Y}_{0,n}}\log\Big(\frac{d\nu_{0,n}(\cdot)}{dU_{0,n}(\cdot)}(y^n)\Big)\Bigg(\int_{{\cal X}_{0,n}}({\overrightarrow Q}_{0,n}\otimes{\overleftarrow P}_{0,n})(dx^n,dy^{n})\Bigg)\nonumber\\
&=\int_{{\cal X}_{0,n}\times{\cal Y}_{0,n}}\log\Big(\frac{d{\overrightarrow Q}_{0,n}(\cdot|x^n)}{dU_{0,n}(\cdot)}(y^n)\Big)({\overrightarrow Q}_{0,n}\otimes{\overleftarrow P}_{0,n})(dx^n,dy^{n})-\int_{{\cal Y}_{0,n}}\log\Big(\frac{d\nu_{0,n}(\cdot)}{dU_{0,n}(\cdot)}(y^n)\Big)\nu_{0,n}(dy^n).\nonumber
\end{align}
Hence,
\begin{align}
&{\mathbb{I}}_{X^n\rightarrow{Y^n}}\big(\lambda{\overleftarrow P}^1_{0,n}+(1-\lambda){\overleftarrow P}^2_{0,n},{\overrightarrow Q}_{0,n}\big)=\int_{{\cal X}_{0,n}\times{\cal Y}_{0,n}}\log\Big(\frac{d{\overrightarrow Q}_{0,n}(\cdot|x^n)}{dU_{0,n}(\cdot)}(y^n)\Big)\nonumber\\
&\times{\overrightarrow Q}_{0,n}(dy^n|x^n)\otimes\big(\lambda{\overleftarrow P}^1_{0,n}(dx^n|y^{n-1})+(1-\lambda){\overleftarrow P}^2_{0,n}(dx^n|y^{n-1})\big)-\int_{{\cal Y}_{0,n}}\log\Bigg(\frac{d\nu_{0,n}(\cdot)}{dU_{0,n}(\cdot)}(y^n)\Bigg)\nu_{0,n}(dy^n).\nonumber
\end{align}
Moreover, relative entropy is convex in both arguments \big{(}e.g., $\mathbb{D}(\cdot||U_{0,n})$ is convex for fixed $U_{0,n}$\big{)}, hence
\begin{align}
{\mathbb{I}}_{X^n\rightarrow{Y^n}}\big(\lambda{\overleftarrow P}^1_{0,n}+(1-\lambda){\overleftarrow P}^2_{0,n},{\overrightarrow Q}_{0,n}\big)&\geq\lambda\int_{{\cal X}_{0,n}\times{\cal Y}_{0,n}}\log\Big(\frac{d{\overrightarrow Q}_{0,n}(\cdot|x^n)}{dU_{0,n}(\cdot)}(y^n)\Big)({\overrightarrow Q}_{0,n}\otimes{\overleftarrow P}^1_{0,n})(dx^n,dy^{n})\nonumber\\
&-\lambda\int_{{\cal Y}_{0,n}}\log\Big(\frac{d\nu^1_{0,n}(\cdot)}{dU_{0,n}(\cdot)}(y^n)\Big)\nu^1_{0,n}(dy^n)\nonumber\\
&+(1-\lambda)\int_{{\cal X}_{0,n}\times{\cal Y}_{0,n}}\log\Big(\frac{d{\overrightarrow Q}_{0,n}(\cdot|x^n)}{dU_{0,n}(\cdot)}(y^n)\Big)({\overrightarrow Q}_{0,n}\otimes{\overleftarrow P}^2_{0,n})(dx^n,dy^{n})\nonumber\\
&-(1-\lambda)\int_{{\cal Y}_{0,n}}\log\Big(\frac{d\nu^2_{0,n}(\cdot)}{dU_{0,n}(\cdot)}\Big)\nu^2_{0,n}(dy^n).\nonumber
\end{align}
Finally, since $\nu^1_{0,n}(\cdot){\ll}U_{0,n}(\cdot)$ and $\nu^2_{0,n}(\cdot){\ll}U_{0,n}(\cdot)$ by substituting the following versions\\
$\frac{d\big(\overrightarrow{Q}_{0,n}(\cdot|x^n)\times\nu_{0,n}^i(\cdot)\big)}{d\big(U_{0,n}(\cdot)\times\nu_{0,n}^i(\cdot)\big)}(y^n)$,~$i=1,2$, of the RND for $\frac{d\overrightarrow{Q}_{0,n}(\cdot|x^n)}{dU_{0,n}(\cdot)}(y^n)$ in the first and third RHS expression in the preceding equations yields
\begin{align}
{\mathbb{I}}_{X^n\rightarrow{Y^n}}\big(\lambda{\overleftarrow P}^1_{0,n}+(1-\lambda){\overleftarrow P}^2_{0,n},{\overrightarrow Q}_{0,n}\big)\geq\lambda{\mathbb{I}}_{X^n\rightarrow{Y^n}}\big({\overleftarrow P}^1_{0,n},{\overrightarrow Q}_{0,n}\big)+(1-\lambda){\mathbb{I}}_{X^n\rightarrow{Y^n}}\big({\overleftarrow P}^2_{0,n},{\overrightarrow Q}_{0,n}\big).\nonumber
\end{align}
This completes the derivation of 2).\\
3) Here, it will be shown that for ${\overrightarrow Q}_{0,n}^1(\cdot|x^n),$ ${\overrightarrow Q}_{0,n}^2(\cdot|x^n)\in{\cal M}_1^{\bf C2}({\cal Y}_{0,n})$ such that ${\overrightarrow Q}_{0,n}^1(\cdot|x^n)\neq{\overrightarrow Q}_{0,n}^2(\cdot|x^n)$, and $\lambda\in(0,1)$, then ${\mathbb{I}}_{X^n\rightarrow{Y^n}}\big({\overleftarrow P}_{0,n},\lambda{\overrightarrow Q}_{0,n}^1+(1-\lambda){\overrightarrow Q}_{0,n}\big)<\lambda{\mathbb{I}}_{X^n\rightarrow{Y^n}}\big({\overleftarrow P}_{0,n},{\overrightarrow Q}_{0,n}^1\big)+(1-\lambda){\mathbb{I}}_{X^n\rightarrow{Y^n}}\big({\overleftarrow P}_{0,n}$, ${\overrightarrow Q}_{0,n}^2\big)$, for a fixed~${\overleftarrow P}_{0,n}(\cdot|y^{n-1})\in{\cal M}_1^{\bf C1}({\cal X}_{0,n})$.\\
It is already known that ${\mathbb{I}}_{X^n\rightarrow{Y^n}}({\overleftarrow P}_{0,n},{\overrightarrow Q}_{0,n})$ is a convex functional on ${\overrightarrow Q}_{0,n}(\cdot|x^n)\in{\cal M}_1^{\bf C2}({\cal Y}_{0,n})$ for a fixed ${\overleftarrow P}_{0,n}(\cdot|y^{n-1})\in{\cal M}_1^{\bf C1}({\cal X}_{0,n})$. All is required to show in order to have strict convexity is that ${\mathbb{I}}_{X^n\rightarrow{Y^n}}({\overleftarrow P}_{0,n},{\overrightarrow Q}_{0,n})<\infty$.
This can be easily obtained from part 1) since ${\overleftarrow P}_{0,n}\otimes{\overrightarrow Q}_{0,n}\ll{\overleftarrow P}_{0,n}\otimes\nu_{0,n}$ if and only if ${\overrightarrow Q}_{0,n}(\cdot|x^n)\ll\nu_{0,n}(\cdot),$ for $\mu_{0,n}-$almost all $x^n\in{\cal X}_{0,n}.$ Hence, from the strict convexity of the function $s{\log}s,~s\in[0,\infty)$, and the expression of directed information as a functional of $\{{\overleftarrow P}_{0,n}(\cdot|y^{n-1}),{\overrightarrow Q}_{0,n}(\cdot|x^n)\}\in{\cal M}_1^{\bf C1}({\cal X}_{0,n})\times{\cal M}_1^{\bf C2}({\cal Y}_{0,n})$, with $\overrightarrow{Q}_{0,n}(\cdot|x^n)=\lambda\overrightarrow{Q}_{0,n}^1(\cdot|x^n)+(1-\lambda){\overrightarrow Q}_{0,n}^2(\cdot|x^n)$ it follows that
\begin{align}
{\mathbb{I}}_{X^n\rightarrow{Y^n}}({\overleftarrow P}_{0,n},\overrightarrow{Q}_{0,n})&=\int_{{\cal X}_{0,n}\times{\cal Y}_{0,n}}\log\Big(\frac{d({\overleftarrow P}_{0,n}\otimes{\overrightarrow Q}_{0,n})(\cdot,\cdot)}{{\overrightarrow\Pi}(\cdot,\cdot)}(x^n,y^n)\Big)({\overrightarrow Q}_{0,n}\otimes{\overleftarrow P}_{0,n})(dx^n,dy^n)\nonumber\\
&=\int_{{\cal X}_{0,n}\times{\cal Y}_{0,n}}\log\Big(\frac{d{\overrightarrow Q}_{0,n}(\cdot|x^n)}{d{\nu}_{0,n}(\cdot)}(y^n)\Big)({\overrightarrow Q}_{0,n}\otimes{\overleftarrow P}_{0,n})(dx^n,dy^n)\nonumber\\
&\leq\int_{{\cal X}_{0,n}\times{\cal Y}_{0,n}}\lambda\log\Big(\frac{d{\overrightarrow Q}^1_{0,n}(\cdot|x^n)}{d{\nu}_{0,n}(\cdot)}(y^n)\Big)({\overrightarrow Q}_{0,n}^1\otimes{\overleftarrow P}_{0,n})(dx^n,dy^n)\nonumber\\
&\qquad+\int_{{\cal X}_{0,n}\times{\cal Y}_{0,n}}(1-\lambda)\log\Big(\frac{d{\overrightarrow Q}^2_{0,n}(\cdot|x^n)}{d{\nu}_{0,n}(\cdot)}(y^n)\Big)({\overrightarrow Q}_{0,n}^2\otimes{\overleftarrow P}_{0,n})(dx^n,dy^n)\nonumber\\
&=\lambda{\mathbb{I}}_{X^n\rightarrow{Y^n}}\big({\overleftarrow P}_{0,n},{\overrightarrow Q}_{0,n}^1\big)+(1-\lambda){\mathbb{I}}_{X^n\rightarrow{Y^n}}\big({\overleftarrow P}_{0,n},{\overrightarrow Q}_{0,n}^2\big)<\infty.\nonumber
\end{align}
This completes the derivation of 3).

\section{Proof of Theorem~\ref{weak_convergence}}\label{proof_weak_convergence}

{\bf Part~A.} Let $\overrightarrow{Q}^{\alpha}_{0,n}(\cdot|\cdot)\in{\cal Q}^{\bf C2}({\cal Y}_{0,n}|{\cal X}_{0,n}),~\alpha=1,2,\ldots$, be a sequence of forward channels and $(X^{n,(\alpha)},Y^{n,(\alpha)}),~\alpha=1,2,\ldots$ a sequence of the basic joint process corresponding to the backward channel ${\overleftarrow P}_{0,n}(\cdot|\cdot)\in{\cal Q}^{\bf C1}({\cal X}_{0,n}|{\cal Y}_{0,n-1})$ and the sequence of forward channels ${\overrightarrow Q}^{\alpha}_{0,n}(\cdot|\cdot)\in{\cal Q}^{\bf C2}({\cal Y}_{0,n}|{\cal X}_{0,n}),~\alpha=1,2,\ldots$. The important steps for the derivation of A1) are outlined in \cite{gihman-skorohod1979} for stochastic control problems with randomized controls. Since we shall use A1) and parts of its derivation to show A2)--A4), we give the details of the derivation. 
\vspace*{0.2cm}\\
A1) First, it is shown that the joint distribution of the basic joint process $\{(X^{(\alpha)}_i,Y^{(\alpha)}_i):i\in\mathbb{N}_0\}$ converges as $\alpha\longrightarrow\infty$ to the joint distribution of a joint process $\{(X^{(o)}_i,Y^{(o)}_i):i\in\mathbb{N}_0\}$ and secondly, that this limiting joint process $\{(X^{(o)}_i,Y^{(o)}_i):i\in\mathbb{N}_0\}$ is also a basic joint process corresponding to the backward channel ${\overleftarrow P}_{0,n}(\cdot|\cdot)\in{\cal Q}^{\bf C1}({\cal X}_{0,n}|{\cal Y}_{0,n-1})$, that is, $({\overleftarrow{P}}_{0,n}\otimes{\overrightarrow{Q}}_{0,n}^{\alpha})(dx^n,dy^n)\buildrel w \over\longrightarrow ({\overleftarrow{P}}_{0,n}\otimes{\bar{Q}}_{0,n}^o)(dx^n,dy^n)\in{\cal M}_1({\cal X}_{0,n}\times{\cal Y}_{0,n}) $ and that $({\overleftarrow{P}}_{0,n}\otimes{\bar{Q}}_{0,n}^o)(dx^n,dy^n)$ has backward channel ${\overleftarrow{P}}_{0,n}(\cdot|\cdot)\in{\cal Q}^{\bf C1}({\cal X}_{0,n}|{\cal Y}_{0,n-1})$, but $\bar{Q}^o_{0,n}(\cdot|x^n)\in{\cal M}_1({\cal Y}_{0,n})$ is not necessarily an element of ${\cal M}_1^{\bf C2}({\cal Y}_{0,n})$.\\
\noi For any $g(\cdot)\in{BC}({\cal X}_n)$, by condition {\bf CA}, the function
\begin{align*}
f:~{\cal X}_{0,n-1}\times{\cal Y}_{0,n-1}\longmapsto\mathbb{R},~f(x^{n-1},y^{n-1})\tri\int_{{\cal X}_n}g(x)p_n(dx_n|x^{n-1},y^{n-1})
\end{align*}
is continuous, and hence for any compact sets $K_i\in{\cal X}_i,~i=0,1,\ldots,n-1$, and by the compactness of ${\cal Y}_{0,n-1}$, the image of $f(\cdot,\cdot)$ under $K_{0,n-1}\times{\cal Y}_{0,n-1}\tri{K}_0\times{K}_1\times\ldots\times{K}_{n-1}\times{\cal Y}_{0,n-1}$, $f(K_{0,n-1}\times{\cal Y}_{0,n-1})={\cal R}\subset\mathbb{R}$, and ${\cal R}$ is compact (since the image of any real-valued continuous function on a compact set is compact). Thus, by condition ${\bf CA}$ and the compactness of $\{{\cal Y}_i:~i\in\mathbb{N}_0^n\}$, for any compact sets $K_0\in{\cal X}_0, K_1\in{\cal X}_{1},\ldots,K_{n-1}\in{\cal X}_{n-1}$ the family of distributions $\{p_n(\cdot|x^{n-1},y^{n-1}):x_0{\in}K_0,x_1{\in}K_1,\ldots,x_{n-1}{\in}K_{n-1},y^{n-1}\in{\cal Y}_{0,n-1}\}$ is compact. Indeed, given any sequence $\{x_0^{(\alpha)},\ldots,x_{n-1}^{(\alpha)},$ $y_0^{(\alpha)},\ldots,y_{n-1}^{(\alpha)}\}$, by selecting a subsequence $\alpha_i$ such that the subsequence $\{x_0^{(\alpha_i)},\ldots,x_{n-1}^{(\alpha_i)},$ $y_0^{(\alpha_i)},\ldots,y_{n-1}^{(\alpha_i)}\}$ converges  to $\{x_0^{(o)},\ldots,x_{n-1}^{(o)},y_0^{(o)},\ldots,y_{n-1}^{(o)}\}$, a weakly convergent subsequence of measures \\$p_n(\cdot|x_0^{(\alpha_i)},\ldots,x_{n-1}^{(\alpha_i)},y_0^{(\alpha_i)},\ldots,y_{n-1}^{(\alpha_i)})$ is obtained. Utilizing Prohorov's theorem (see Theorem~\ref{prohorov}), we verify that for any sequence of compact sets $K_0\subset{\cal X}_0,K_1\subset{\cal X}_1,\ldots,K_{n-1}\subset{\cal X}_{n-1},$ and $\epsilon_1>0$ a compact set $K_n\subset{\cal X}_n$ can be constructed such that $p_n(K_n|x^{n-1},y^{n-1})\geq1-\epsilon_1$, for any $y^{n-1}\in{\cal Y}_{0,n-1}$. To this end, pick $\epsilon_1>0$ and construct the compact sets as follows. Choose compact set $K_0\subset{\cal X}_0$ such that $p_0(K_0)\geq1-\frac{\epsilon_1}{2},$ compact set $K_1\subset{\cal X}_1$ such that $p_1(K_1|x_0,y_0)\geq1-\frac{\epsilon_1}{2^2}$, for any $x_0\in{K}_0,y_0\in{\cal Y}_0$, compact set $K_2\subset{\cal X}_2$ such that $p_2(K_2|x_0,x_1,y_0,y_1)\geq1-\frac{\epsilon_1}{2^3},$ for any $x_0\in{K}_0,x_1\in{K}_1,y_0\in{\cal Y}_0,y_1\in{\cal Y}_1$, and compact set $K_n$ such that 
\begin{align}
p_n(K_n|x^{n-1},y^{n-1})\geq1-\frac{\epsilon_1}{2^{n+1}}.\label{equation39}
\end{align}
Utilizing (\ref{equation39}) then
\begin{align}
\mathbb{P}&\Big\{X_0^{(\alpha)}\in{K}_0,\ldots,X_n^{(\alpha)}\in{K}_n\Big\}=\mathbb{P}\Big\{X_0^{(\alpha)}\in{K}_0,\ldots,X_{n}^{(\alpha)}\in{K}_{n},Y_0^{(\alpha)}\in{\cal Y}_{0},\ldots,Y_{n-1}^{(\alpha)}\in{\cal Y}_{n-1}\Big\}\nonumber\\
&=\int_{\times_{i=0}^n{K}_i}\int_{{\cal Y}_{0,n-1}}\mathbb{P}\Big\{X_n^{(\alpha)}\in{K}_n|X_0^{(\alpha)}=x_0,\ldots,X_{n-1}^{(\alpha)}=x_{n-1},Y_0^{(\alpha)}=y_0,\ldots,Y_{n-1}^{(\alpha)}=y_{n-1}\Big\}\nonumber\\
&\qquad\mathbb{P}\Big\{X_0^{(\alpha)}\in{dx}_0,\ldots,X_{n-1}^{(\alpha)}\in{dx}_{n-1},Y_0^{(\alpha)}\in{dy}_0,\ldots,Y_{n-1}^{(\alpha)}\in{dy}_{n-1}\Big\}\nonumber\\
&\geq\Bigg(1-\frac{\epsilon_1}{2^{n+1}}\Bigg)\int_{\times_{i=0}^{n-1}{K}_i}\mathbb{P}\Big\{X_0^{(\alpha)}\in{dx}_0,\ldots,X_{n-1}^{(\alpha)}\in{dx}_{n-1}\Big\}\nonumber\\
&=\Bigg(1-\frac{\epsilon_1}{2^{n+1}}\Bigg)\mathbb{P}\Big\{X_0^{(\alpha)}\in{K}_0,\ldots,X_{n-1}^{(\alpha)}\in{K}_{n-1}\Big\}\nonumber\\
&\geq\Bigg(1-\frac{\epsilon_1}{2^{n+1}}\Bigg)\Bigg(1-\frac{\epsilon_1}{2^{n}}\Bigg)\mathbb{P}\Big\{X_0^{(\alpha)}\in{K}_0,\ldots,X_{n-2}^{(\alpha)}\in{K}_{n-2}\Big\}\nonumber\\
&=\Bigg(1-\frac{\epsilon_1}{2^{n+1}}-\frac{\epsilon_1}{2^{n}}+\frac{\epsilon^2_1}{2^{2n+1}}\Bigg)\mathbb{P}\Big\{X_0^{(\alpha)}\in{K}_0,\ldots,X_{n-2}^{(\alpha)}\in{K}_{n-2}\Big\}\nonumber\\
&\geq\Bigg(1-\frac{\epsilon_1}{2^{n+1}}-\frac{\epsilon_1}{2^{n}}\Bigg)\mathbb{P}\Big\{X_0^{(\alpha)}\in{K}_0,\ldots,X_{n-2}^{(\alpha)}\in{K}_{n-2}\Big\}\nonumber
\end{align}
\begin{align}
&\geq\Bigg(1-\frac{\epsilon_1}{2^{n+1}}-\frac{\epsilon_1}{2^{n}}\Bigg)\Bigg(1-\frac{\epsilon_1}{2^{n-1}}\Bigg)
\mathbb{P}\Big\{X_0^{(\alpha)}\in{K}_0,\ldots,X_{n-3}^{(\alpha)}\in{K}_{n-3}\Big\}\nonumber\\
&=\Bigg(1-\frac{\epsilon_1}{2^{n+1}}-\frac{\epsilon_1}{2^{n}}-\frac{\epsilon_1}{2^{n-1}}+\frac{\epsilon^2_1}{2^{2n}}+\frac{\epsilon^2_1}{2^{2n-1}}\Bigg)
\mathbb{P}\Big\{X_0^{(\alpha)}\in{K}_0,\ldots,X_{n-3}^{(\alpha)}\in{K}_{n-3}\Big\}\nonumber\\
&\geq\Bigg(1-\frac{\epsilon_1}{2^{n+1}}-\frac{\epsilon_1}{2^{n}}-\frac{\epsilon_1}{2^{n-1}}\Bigg)
\mathbb{P}\Big\{X_0^{(\alpha)}\in{K}_0,\ldots,X_{n-3}^{(\alpha)}\in{K}_{n-3}\Big\}.\label{appendix:proof:weak_convergence:eq1}
\end{align}
Iterating the RHS of (\ref{appendix:proof:weak_convergence:eq1}) we obtain
\begin{align}
\mathbb{P}\Big\{X_0^{(\alpha)}\in{K}_0,\ldots,X_n^{(\alpha)}\in{K}_n\Big\}&\geq1-\frac{\epsilon_1}{2^{n+1}}-\frac{\epsilon_1}{2^{n}}-\frac{\epsilon_1}{2^{n-1}}-\ldots-\frac{\epsilon_1}{2^1}=1-\epsilon_1\sum^n_{i=1}\frac{1}{2^{i+1}}\nonumber\\
&\geq1-\epsilon_1,\hst~\mbox{for all}~{\alpha=1,2,\ldots},~\mbox{and any}~n\in\mathbb{N}_0.\label{appendix:proof:weak_convergence:eq2}
\end{align}
By (\ref{appendix:proof:weak_convergence:eq2}), the family of marginal distributions of the joint process $\{(X_i^{(\alpha)},Y_i^{(\alpha)}):i\in\mathbb{N}_0\},~\alpha=1,2,\ldots$ on ${\cal X}_{0,n}$ is uniformly tight, and by Prohorov's theorem \cite{parthasarathy1967} it has a weakly convergent subsequence. On the other hand, since $\{{\cal Y}_i:~i\in\mathbb{N}_0^n\}$ are compact metric spaces, the family of marginal distributions of the joint sequence $\{(X_i^{(\alpha)},Y_i^{(\alpha)}):i\in\mathbb{N}_0\}$ on ${\cal Y}_{0,n}$ is uniformly tight. Utilizing the uniform tightness of the marginal distribution of the joint process $\{(X_i^{(\alpha)},Y_i^{(\alpha)}):i\in\mathbb{N}_0\}$, then the family of joint distributions of the joint process $\{(X_i^{(\alpha)},Y_i^{(\alpha)}):i\in\mathbb{N}_0\}$ is uniformly tight. By Prohorov's theorem \cite{parthasarathy1967}, the sequence of joint distribution of the joint process $\{(X_i^{(\alpha)},Y_i^{(\alpha)}):i\in\mathbb{N}_0\}$ possess a weakly convergent subsequence to a joint process $\{(X_i^{(o)},Y_i^{(o)}):i\in\mathbb{N}_0\}$. A restatement of Prohorov's theorem states that, if ${\cal Z}$ is a separable metric space then every uniformly tight sequence of measures $\{\gamma^{\alpha}:~\alpha=1,2,\ldots\}$ on ${\cal Z}$ has a subsubsequence which is weakly convergent. Moreover, by \cite{parthasarathy1967}, if each subsequence $\{\gamma^{\alpha_i}:~i=1,2,\ldots\}$ of $\{\gamma^{\alpha}:~\alpha=1,2,\ldots\}$ contains a further subsequence $\{\gamma^{\alpha_{i_m}}:~m=1,2,\ldots\}$ such that $\gamma^{\alpha_{i_m}}\buildrel w \over\longrightarrow\gamma^o$ as $m\longrightarrow\infty$, then $\gamma^{\alpha}\buildrel w \over\longrightarrow\gamma^o$ as $\alpha\longrightarrow\infty$. Utilizing these facts, then the joint distribution of the joint process $\{(X^{(\alpha)}_i,Y^{(\alpha)}_i):i\in\mathbb{N}_0\}$ converges weakly to 
a joint process $\{(X^{(o)}_i,Y^{(o)}_i):i\in\mathbb{N}_0\}$. Next, we show that the limiting joint process $\{(X_i^{(o)},Y_i^{(o)}):i\in\mathbb{N}_0\}$ is a basic joint process with the same backward channel ${\overleftarrow P}(\cdot|\cdot)\in{\cal Q}^{\bf C1}({\cal X}_{0,n}|{\cal Y}_{0,n-1}).$
For any $n\in\mathbb{N}_0$, consider bounded and continuous real-valued functions $g_n(\cdot)\in{BC}({\cal X}_n)$ and $\Psi_{0,n-1}(\cdot,\cdot)\in{BC}({\cal X}_{0,n-1}\times{\cal Y}_{0,n-1}).$ By the weak convergence of the joint measures corresponding to $\{(X_i^{(\alpha)},Y_i^{(\alpha)}):i\in\mathbb{N}_0\}$ to the joint measures corresponding to $\{(X_i^{(o)},Y_i^{(o)}):i\in\mathbb{N}_0\}$ denoted by $(\overleftarrow{P}_{0,n}\otimes\overrightarrow{Q}_{0,n}^{\alpha})(dx^n,dy^n)\buildrel w \over\longrightarrow{P}_{0,n}^{o}(dx^n,dy^n)$, the continuity of $g_n(\cdot)$ and the continuity of the function mapping $(x^{n-1},y^{n-1})\in{\cal X}_{0,n-1}\times{\cal Y}_{0,n-1}\longmapsto\int_{{\cal X}_n}g_n(x)p_n(dx|x^{n-1},y^{n-1})\in\mathbb{R}$, given $\epsilon>0$ there exists $N\in\mathbb{N}_0$ such that for all $\alpha\geq{N}$
\begin{align}
\Bigg|&\int_{{\cal X}_{0,n-1}\times{\cal Y}_{0,n-1}}\Bigg(\int_{{\cal X}_n}g_n(x)p_n(dx|x^{n-1},y^{n-1})\Bigg)\Psi_{0,n-1}(x^{n-1},y^{n-1}){P}_{0,n-1}^{o}(dx^{n-1},dy^{n-1})\nonumber\\
&-\int_{{\cal X}_{0,n-1}\times{\cal Y}_{0,n-1}}\Bigg(\int_{{\cal X}_n}g_n(x)p_n(dx|x^{n-1},y^{n-1})\Bigg)\Psi_{0,n-1}(x^{n-1},y^{n-1}){P}_{0,n-1}^{\alpha}(dx^{n-1},dy^{n-1})\Bigg|\leq\epsilon.\nonumber
\end{align}
Since $\epsilon>0$ is arbitrary, then
\begin{align}
\lim_{\alpha\rightarrow\infty}{\mathbb{E}}\bigg\{g_n(X_n^{(\alpha)})&\Psi(X_0^{(\alpha)},\ldots,X_{n-1}^{(\alpha)},Y_0^{(\alpha)},\ldots,Y_{n-1}^{(\alpha)})\bigg\}={\mathbb{E}}\bigg\{g_n(X_n^{(o)})\Psi(X_0^{(o)},\ldots,X_{n-1}^{(o)},Y_0^{(o)},\ldots,Y_{n-1}^{(o)})\bigg\}.\label{equation4c}
\end{align}
Moreover, for all $\alpha=1,2,\ldots$, then
\begin{align}
{\mathbb{E}}&\bigg\{g_n(X_n^{(\alpha)})\Psi(X_0^{(\alpha)},\ldots,X_{n-1}^{(\alpha)},Y_0^{(\alpha)},\ldots,Y_{n-1}^{(\alpha)})\bigg\}\nonumber\\
&={\mathbb{E}}\bigg\{\Psi(X_0^{(\alpha)},\ldots,X_{n-1}^{(\alpha)},Y_0^{(\alpha)},\ldots,Y_{n-1}^{(\alpha)})\mathbb{E}\bigg\{g_n(X_n^{(\alpha)})|
X_0^{(\alpha)},\ldots,X_{n-1}^{(\alpha)},Y_0^{(\alpha)},\ldots,Y_{n-1}^{(\alpha)})\bigg\}\bigg\}\nonumber\\
&={\mathbb{E}}\bigg\{\bigg(\int_{{\cal X}_n}g_n(x)p_n(dx|X_0^{(\alpha)},\ldots,X_{n-1}^{(\alpha)},Y_0^{(\alpha)},\ldots,Y_{n-1}^{(\alpha)})\bigg)\Psi(X_0^{(\alpha)},\ldots,X_{n-1}^{(\alpha)},
Y_0^{(\alpha)},\ldots,Y_{n-1}^{(\alpha)})\bigg\}.\nonumber
\end{align}
Hence, (\ref{equation4c}) is equivalent to
\begin{align}
\lim_{\alpha\rightarrow\infty}&{\mathbb{E}}\bigg\{\int_{{\cal X}_n}g_n(x)p_n(dx|X_0^{(\alpha)},\ldots,X_{n-1}^{(\alpha)},Y_0^{(\alpha)},\ldots,Y_{n-1}^{(\alpha)})\Psi(X_0^{(\alpha)},\ldots,X_{n-1}^{(\alpha)},Y_0^{(\alpha)},\ldots,Y_{n-1}^{(\alpha)})\bigg\}\nonumber\\
&={\mathbb{E}}\bigg\{\int_{{\cal X}_{n}}g_n(x)p_n(dx|X_0^{(o)},\ldots,X_{n-1}^{(o)},Y_0^{(o)},\ldots,Y_{n-1}^{0})\Psi(X_0^{(o)},\ldots,X_{n-1}^{(o)},Y_0^{(o)},\ldots,Y_{n-1}^{(o)})\bigg\}.\nonumber
\end{align}
From the previous equality, the following identity is obtained.
\begin{align}
{\mathbb{E}}&\bigg\{g_n(X_n^{(o)})|X_0^{(o)},\ldots,X_{n-1}^{(o)},Y_0^{(o)},\ldots,Y_{n-1}^{(o)})\bigg\}=\int_{{\cal X}_n}g_n(x)p_n(dx|X_0^{(o)},\ldots,X_{n-1}^{(o)},Y_0^{(o)},\ldots,Y_{n-1}^{(o)})-a.s.\label{appendix:proof:weak_convergence:eq3}
\end{align}
Since for any indicator function $I_{E},$ $E\in{\cal B}({\cal X}_n)$ there exists a sequence $\{g_{n,j}:~j=1,2,\ldots\}\subset{BC}({\cal X}_n)$ which is nondecreasing  such that $g_{n,j}\uparrow{I}_E$, by utilizing such a sequence in (\ref{appendix:proof:weak_convergence:eq3}), and by invoking Lebesgue's monotone convergence theorem then
\begin{align}
\mathbb{P}\bigg\{X_n^{(o)}\in{E}|X_0^{(o)},\ldots,X_{n-1}^{(o)},Y_0^{(o)},\ldots,Y_{n-1}^{(o)}\bigg\}=p_n(E|X_0^{(o)},\ldots,X_{n-1}^{(o)},Y_0^{(o)},\ldots,Y_{n-1}^{(o)}).\nonumber
\end{align}
This shows that the limiting joint process $\{(X_i^{(o)},Y_i^{(o)}):i\in\mathbb{N}_0\}$ is a basic process corresponding to the backward channel ${\overleftarrow P}_{0,n}(\cdot|y^{n-1})\in{\cal M}_1^{\bf C1}({\cal X}_{0,n})$ and a forward channel $\bar{Q}_{0,n}^o(\cdot|x^n)\in{\cal M}_1({\cal Y}_{0,n})$. Moreover, the marginal distributions of the basic joint process $\{(X_i^{(\alpha)},Y_i^{(\alpha)}):i\in\mathbb{N}_0\}$ converge to the marginal distributions of the basic joint process $\{(X_i^{(o)},Y_i^{(o)}):i\in\mathbb{N}_0\}$ corresponding to the backward channel ${\overleftarrow P}_{0,n}(\cdot|y^{n-1})\in{\cal M}_1^{\bf C1}({\cal X}_{0,n})$ and a forward channel $\bar{Q}^o_{0,n}(\cdot|x^{n})\in{\cal M}_1({\cal Y}_{0,n}).$ This completes the derivation of A1).
\vspace*{0.2cm}\\
\noi A2) By consistency condition ${\bf C1}$, any ${\overleftarrow P}_{0,n}(\cdot|\cdot)\in{\cal Q}^{\bf C1}({\cal X}_{0,n}|{\cal Y}_{0,n-1})$ uniquely defines a family $\{p_i(\cdot|\cdot,\cdot)\in{\cal Q}({\cal X}_i|{\cal X}_{0,i-1}\times{\cal Y}_{0,i-1}),i\in\mathbb{N}_0^n\}$ via (\ref{equation2}). Hence, (\ref{equation2}) can be used to relate tightness of  $p_i(\cdot|x^{i-1},y^{i-1})\in{\cal M}_1({\cal X}_i), ~(x^{i-1},y^{i-1})\in{\cal X}_{0,i-1}\times{\cal Y}_{0,i-1},~i\in\mathbb{N}_0^n$, to tightness  of $\overleftarrow{P}_{0,n}(\cdot|y^{n-1})\in{\cal M}_1^{\bf C1}({\cal X}_{0,n}), ~y^{n-1}\in{\cal Y}_{0,n-1}$. \\
By recalling the derivation A1), condition (\ref{equation39}), for $K_{0,n}=\times_{i=0}^n{K_i},$ $K_i\in{\cal B}({\cal X}_i)$ compact sets, $i\in\mathbb{N}_0^n$, then
\begin{align}
{\bf P}(K_{0,n}|{\bf y})&\tri\int_{K_0}p_0(dx_0)\int_{K_1}p_1(dx_1|x^{0},y^0)\ldots\int_{K_n}p_n(dx_n|x^{n-1},y^{n-1})\nonumber\\
&\geq\bigg(1-\frac{\epsilon_1}{2^{n+1}}\bigg)\int_{K_0}p_0(dx_0)\int_{K_1}p_1(dx_1|x^{0},y^0)\ldots\int_{K_{n-1}}p_{n-1}(dx_{n-1}|x^{n-2},y^{n-2})\nonumber\\
&\geq\bigg(1-\frac{\epsilon_1}{2^{n+1}}\bigg)\bigg(1-\frac{\epsilon_1}{2^{n}}\bigg)\int_{K_0}p_0(dx_0)\int_{K_1}p_1(dx_1|x^{0},y^0)\ldots\int_{K_{n-2}}p_{n-2}(dx_{n-2}|x^{n-3},y^{n-3})\nonumber\\
&=\Bigg(1-\frac{\epsilon_1}{2^{n+1}}-\frac{\epsilon_1}{2^{n}}+\frac{\epsilon^2_1}{2^{2n+1}}\Bigg)\int_{K_0}p_0(dx_0)\int_{K_1}p_1(dx_1|x^{0},y^0)\ldots\int_{K_{n-2}}p_{n-2}(dx_{n-2}|x^{n-3},y^{n-3})\nonumber\\
&\geq\Bigg(1-\frac{\epsilon_1}{2^{n+1}}-\frac{\epsilon_1}{2^{n}}\Bigg)\int_{K_0}p_0(dx_0)\int_{K_1}p_1(dx_1|x^{0},y^0)\ldots\int_{K_{n-2}}p_{n-2}(dx_{n-2}|x^{n-3},y^{n-3}).\nonumber
\end{align}
By repeating the above procedure the following bound is obtained.
\begin{align}
{\bf P}({K}_{0,n}|{\bf y})&\geq1-\frac{\epsilon_1}{2^{n+1}}-\frac{\epsilon_1}{2^{n}}-\frac{\epsilon_1}{2^{n-1}}-\ldots-\frac{\epsilon_1}{2^1}=1-\epsilon_1\sum^n_{i=1}\frac{1}{2^{i+1}}\nonumber\\
&\geq1-\epsilon_1\nonumber,\hst~\mbox{for any}~n\in\mathbb{N}_0~\mbox{and for every}~{\bf y}\in{\cal Y}^{\mathbb{N}_0}.
\end{align}
Since $\{K_i:i=0,1,\ldots,n\}$ are compact, from the last inequality it follows that the family of measures $\overleftarrow{P}_{0,n}(\cdot|y^{n-1})\in{\cal M}_1^{\bf C1}({\cal X}_{0,n}), ~y^{n-1}\in{\cal Y}_{0,n-1}$ is uniformly tight. This completes the derivation of A2).
\vspace*{0.2cm}\\
\noi A3) Weak compactness of the family of measures ${\overrightarrow Q}_{0,n}(\cdot|x^{n})\in{\cal M}_1^{\bf C2}({\cal Y}_{0,n})$ for fixed $x^n\in{\cal X}_{0,n}$ follows from the fact that ${\cal Y}_{0,n}$ is a compact Polish space.
\vspace*{0.2cm}\\
\noi A4) Utilizing the weak convergence  $\nu_{0,n}^{\alpha}\buildrel w \over \longrightarrow\nu_{0,n}^o$ \big{(}shown in A2)\big{)}, we shall show weak convergence of the convolution of measures $\overrightarrow{\Pi}^{\alpha}_{0,n}(dx^n,dy^n)\equiv{\overleftarrow P}_{0,n}(dx^n|y^{n-1})\otimes\nu_{0,n}^{\alpha}(dy^n)\buildrel w \over \longrightarrow{\overleftarrow P}_{0,n}(dx^n|y^{n-1})\otimes\nu_{0,n}^o(dy^n)\equiv\overrightarrow{\Pi}_{0,n}^o(dx^n,dy^n),$ when ${\overleftarrow P}_{0,n}(\cdot|y^{n-1})\in{\cal M}_1^{\bf C1}({\cal X}_{0,n})$ is fixed. We show weak convergence by considering integrals with respect to $g_{0,n}(x^n)h_{0,n}(y^n),$ where $g_{0,n}(\cdot)\in{BC}({\cal X}_{0,n})$ and $h_{0,n}(\cdot)\in{BC}({\cal Y}_{0,n}).$ Let $\epsilon>0$ be given. Condition ${\bf CA}$ implies that the function mapping 
\begin{align}
y^{n-1}\in{\cal Y}_{0,n-1}\longmapsto\int_{{\cal X}_{0,n}}g(x^n){\overleftarrow P}_{0,n}(dx^n|y^{n-1})\in\mathbb{R}\label{equation40}
\end{align}
is continuous.
Hence, by the weak convergence $\nu_{0,n}^{\alpha}\buildrel w \over \longrightarrow\nu_{0,n}^o$ and the continuity of the function mapping (\ref{equation40}) then there exists $N\in\mathbb{N}_0$ such that for all $\alpha\geq{N}$
\begin{align}
\Bigg|&\int_{{\cal Y}_{0,n}}\bigg(\int_{{\cal X}_{0,n}}g(x^n){\overleftarrow P}_{0,n}(dx^n|y^{n-1})\bigg)h(y^n)\nu_{0,n}^o(dy^n)-\int_{{\cal Y}_{0,n}}\bigg(\int_{{\cal X}_{0,n}}g(x^n){\overleftarrow P}_{0,n}(dx^n|y^{n-1})\bigg)h(y^n)\nu_{0,n}^{\alpha}(dy^n)\Bigg|\nonumber\\
&\leq\epsilon.\nonumber
\end{align}
Since $\epsilon>0$ is arbitrary, then the derivation of A5) is complete.
\vspace*{0.2cm}\\
\noi {\bf Part B}. The methodology is similar to that of {\bf Part A.}, hence it is omitted.

\section{Proof of Lemma~\ref{capacity:closedness:lemma}}\label{section:proof:closedness:feedback:capacity}

\par By Theorem~\ref{weak_convergence}, {\bf Part A.}, A2), the family of measures $\overleftarrow{P}_{0,n}(\cdot|y^{n-1})\in{\cal M}_1^{\bf C1}({\cal X}_{0,n}), y^{n-1} \in {\cal Y}_{0,n-1}$ are tight, and by Appendix~\ref{proof_weak_convergence}, (\ref{equation39}), $\{p_i(\cdot|x^{i-1},y^{i-1})\in{\cal M}_1^{\bf C1}({\cal X}_i):~i=0,1,\ldots,n\}$ are tight. Since $p_i(\cdot|x^{i-1},y^{i-1})$ are probability measures on ${\cal M}_1^{\bf C1}({\cal X}_i)$, $i=0,1,\ldots,n$, for any sequence $\overleftarrow{P}_{0,n}^{\alpha}(\cdot|y^{n-1})\in{\cal M}_1^{\bf C1}({\cal X}_{0,n}),~\alpha=1,2,\ldots,$ there is a collection $\{p_i^{\alpha}(\cdot|x^{i-1},y^{i-1}):~i=0,1,\ldots,n\},~\alpha=1,2,\ldots$, such that
\begin{align*}
p_i^{\alpha}(\cdot|x^{i-1},y^{i-1})\buildrel w \over \longrightarrow{p}_i^{o}(\cdot|x^{i-1},y^{i-1}),~i=0,1,\ldots,n.
\end{align*}
\noi Hence, to show closedness of $\overleftarrow{P}_{0,n}(\cdot|y^{n-1})\in{\cal M}_1^{\bf C1}({\cal X}_{0,n}), y^{n-1} \in {\cal Y}_{0,n-1}$ it suffices to show that 
\begin{align*}
\otimes_{i=0}^np_i^{\alpha}(\cdot|x^{i-1},y^{i-1})\buildrel w \over \longrightarrow\otimes_{i=0}^n{p}_i^{o}(\cdot|x^{i-1},y^{i-1})
\end{align*}
whenever $p_i^{\alpha}(\cdot|x^{i-1},y^{i-1})\buildrel w \over \longrightarrow{p}_i^{o}(\cdot|x^{i-1},y^{i-1})$, for each $(x^{i-1},y^{i-1})$,~$i=0,1,\ldots,n$. This will be shown by induction.\\
Consider $n=0$. For any $h_{0}(\cdot)\in{BC}({\cal X}_{0})$, by definition of weak convergence we have
\begin{align}
\lim_{\alpha\longrightarrow\infty}\int_{{\cal X}_0}h_0(x)p_0^{\alpha}(dx_0)=\int_{{\cal X}_0}h_0(x)p_0^{o}(dx_0).\nonumber
\end{align}
Consider $n=1$. For any $h_{0}(\cdot)\in{BC}({\cal X}_{0})$, $h_{1}(\cdot)\in{BC}({\cal X}_{1})$, we need to show $\forall\epsilon>0$, there exists an $N\in\mathbb{N_{+}}\tri\{1,2,\ldots\}$ such that for $\alpha>N$
\begin{align}
\Bigg{|}\int_{{\cal X}_0}h_0(x_0)p_0^{\alpha}(dx_0)\int_{{\cal X}_1}h_1(x_1)p_1^{\alpha}(dx_1|x_0,y_0)-\int_{{\cal X}_0}h_0(x_0)p_0^{o}(dx_0)\int_{{\cal X}_1}h_1(x_1)p_1^{o}(dx_1|x_0,y_0)\Bigg{|}\leq\epsilon.\label{application:capacity:closedness:equation1}
\end{align}
From the left hand side (LHS) of (\ref{application:capacity:closedness:equation1}), by adding and subtracting terms, we have the following upper bound.
\begin{align}
&A_{0,1}\tri\Bigg{|}\int_{{\cal X}_0\times{\cal X}_1}h_0(x_0)h_1(x_1)p_1^{\alpha}(dx_1|x_0,y_0)p_0^{\alpha}(dx_0)-\int_{{\cal X}_0\times{\cal X}_1}h_0(x_0)h_1(x_1)p_1^{o}(dx_1|x_0,y_0)p_0^{o}(dx_0)\Bigg{|}\nonumber\\
&\leq{\underbrace{\Bigg{|}\int_{{\cal X}_0\times{\cal X}_1}h_0(x_0)h_1(x_1)p_1^{o}(dx_1|x_0,y_0)p_0^{\alpha}(dx_0)-\int_{{\cal X}_0\times{\cal X}_1}h_0(x_0)h_1(x_1)p_1^{o}(dx_1|x_0,y_0)p_0^{o}(dx_0)\Bigg{|}}_{Term-1}}\nonumber\\
&+\underbrace{\Bigg{|}\int_{{\cal X}_0\times{\cal X}_1}h_0(x_0)h_1(x_1)p_1^{\alpha}(dx_1|x_0,y_0)p_0^{\alpha}(dx_0)-\int_{{\cal X}_0\times{\cal X}_1}h_0(x_0)h_1(x_1)p_1^{o}(dx_1|x_0,y_0)p_0^{\alpha}(dx_0)\Bigg{|}}_{Term-2}.\label{application:capacity:closedness:equation2}
\end{align}
\noi {\it Term-1}:~Let $\epsilon_0>0$ be given, and consider {\it Term-1}. By the continuity of the function mapping $(x_0,y_0)\in{\cal X}_0\times{\cal Y}_0\longmapsto\int_{{\cal X}_1}h(x_1)p_1(dx_1|x_0,y_0)$ and the weak convergence $p_1^{\alpha}(\cdot|x_{0},y_{0})\buildrel w \over \longrightarrow{p}_1^{o}(\cdot|x_{0},y_{0})$, for each $(x_0,y_0)\in{\cal X}_0\times{\cal Y}_0$, then there exists an $N_1\in\mathbb{N_{+}}$ such that for all $\alpha\geq{N_1}$
\begin{align}
\Bigg{|}\int_{{\cal X}_0}h_0(x_0)\bigg(\int_{{\cal X}_1}h_1(x_1)p_1^{o}(dx_1|x_0,y_0)\bigg)\big(p_0^{\alpha}(dx_0)-p_0^{o}(dx_0)\big)\Bigg{|}\leq\epsilon_0.\label{application:capacity:closedness:equation2aa}
\end{align}
\noi {\it Term-2}:~Consider {\it Term-2}. By the weak convergence, $p_0^{\alpha}(dx_0)\buildrel w \over \longrightarrow{p}_0^{o}(dx_0)$, $p_1^{\alpha}(dx_1|x_{0},y_{0})\buildrel w \over \longrightarrow{p}_1^{o}(dx_1|x_{0},y_{0})$, for each $(x_0,y_0)\in{\cal X}_0\times{\cal Y}_0$. According to Prohorov's theorem there exist compact subset $K_0\subset{\cal X}_0$ such that $p_0^{\alpha}(K^c_0)\leq{\epsilon_1},~\alpha=1,2,\ldots$, and compact subset $K_1\subset{\cal X}_1$ such that $p_1^{\alpha}(K_1^c|x_0,y_0)\leq\epsilon_2,~\alpha=1,2,\ldots$, for each $(x_0,y_0)\in{\cal X}_0\times{\cal Y}_0$.\\
Hence, {\it Term-2} is written as follows.
\begin{align}
&\Bigg{|}\int_{K_0\cup{K}^c_0}h_0(x_0)\Big(\int_{{\cal X}_1}h_1(x_1)p_1^{\alpha}(dx_1|x_0,y_0)\Big)p_0^{\alpha}(dx_0)-\int_{K_0\cup{K}^c_0}h_0(x_0)\Big(\int_{{\cal X}_1}h_1(x_1)p_1^{o}(dx_1|x_0,y_0)\Big)p_0^{\alpha}(dx_0)\Bigg{|}\nonumber\\
&=\Bigg{|}\int_{{K}^c_0}h_0(x_0)\Big(\int_{{\cal X}_1}h_1(x_1)p_1^{\alpha}(dx_1|x_0,y_0)\Big)p_0^{\alpha}(dx_0)-\int_{{K}^c_0}h_0(x_0)\Big(\int_{{\cal X}_1}h_1(x_1)p_1^{o}(dx_1|x_0,y_0)\Big)p_0^{\alpha}(dx_0)\\
&+\int_{K_0}h_0(x_0)\Big(\int_{{\cal X}_1}h_1(x_1)p_1^{\alpha}(dx_1|x_0,y_0)\Big)p_0^{\alpha}(dx_0)-\int_{K_0}h_0(x_0)\Big(\int_{{\cal X}_1}h_1(x_1)p_1^{o}(dx_1|x_0,y_0)\Big)p_0^{\alpha}(dx_0)\Bigg{|}
\nonumber\\
&\leq\int_{{K}^c_0}||h_0(\cdot)||_{\infty}||h_1(\cdot)||_{\infty}p_0^{\alpha}(dx_0)+\int_{{K}^c_0}||h_0(\cdot)||_{\infty}||h_1(\cdot)||_{\infty}p_0^{\alpha}(dx_0)\nonumber\\
&\qquad+\Bigg{|}\int_{K_0}h_0(x_0)\Big(\int_{{\cal X}_1}h_1(x_1)p_1^{\alpha}(dx_1|x_0,y_0)-\int_{{\cal X}_1}h_1(x_1)p_1^{o}(dx_1|x_0,y_0)\Big)p_0^{\alpha}(dx_0)\Bigg{|}\nonumber\\
&\leq{2}.||h_0(\cdot)||_{\infty}||h_1(\cdot)||_{\infty}p^{\alpha}_0(K_0^c)\nonumber\\
&\qquad\qquad+\Bigg{|}\int_{K_0}h_0(x_0)\Big(\int_{{\cal X}_1}h_1(x_1)p_1^{\alpha}(dx_1|x_0,y_0)-\int_{{\cal X}_1}h_1(x_1)p_1^{o}(dx_1|x_0,y_0)\Big)p_0^{\alpha}(dx_0)\Bigg{|}.\nonumber\\
&\leq{2}.||h_0(\cdot)||_{\infty}||h_1(\cdot)||_{\infty}.\epsilon_1+||h_0(\cdot)||_{\infty}\sup_{x_0\in{K}_0}\Bigg{|}\int_{{\cal X}_1}h_1(x_1)p_1^{\alpha}(dx_1|x_0,y_0)-\int_{{\cal X}_1}h_1(x_1)p_1^{o}(dx_1|x_0,y_0)\Bigg{|}.\label{application:capacity:closedness:equation3aa}
\end{align}
\noi By utilizing condition (\ref{capacity:applications:lemma:closedness}), $\forall\epsilon_2>0$ there exists $N_2\in\mathbb{N}_1$ such that for all $\alpha\geq{N}_2$ 
\begin{align}
\sup_{x_0\in{K}_0}\Bigg{|}\int_{{\cal X}_1}h_1(x_1)p_1^{\alpha}(dx_1|x_0,y_0)-\int_{{\cal X}_1}h_1(x_1)p_1^{o}(dx_1|x_0,y_0)\Bigg{|}<\epsilon_2,~\forall{y_0}\in{\cal Y}_0\label{application:capacity:closedness:equation4aa}
\end{align}
Hence, by (\ref{application:capacity:closedness:equation2aa}), (\ref{application:capacity:closedness:equation3aa}), (\ref{application:capacity:closedness:equation4aa}), there exists an $N\in\mathbb{N}_1$ large enough such that for all $\alpha\geq{N}_2$, expression \eqref{application:capacity:closedness:equation2} is further bounded by
\begin{align*}
A_{0,1}\leq\epsilon_0+{2}.||h_0(\cdot)||_{\infty}||h_1(\cdot)||_{\infty}.\epsilon_1+||h_0(\cdot)||_{\infty}.\epsilon.
\end{align*}
Since $\epsilon_0,\epsilon_1,\epsilon_2>0$ are arbitrary, the claim holds for $n=1$, as well.\\
\noi Suppose that for $n=k$, and for each $h_i(\cdot)\in{BC}({\cal X}_i)$,~$i=0,1,\ldots,k$, and $\forall~\epsilon>0$, there exists $N^k\in\mathbb{N}_1$ such that for each $\alpha\geq{N^k}$
\begin{align}
\Bigg{|}\int_{{\cal X}_{0,k}}\otimes_{i=0}^k{h}_i(x_i){p}_i^{\alpha}(dx_i|x^{i-1},y^{i-1})-\int_{{\cal X}_{0,k}}\otimes_{i=0}^k{h}_i(x_i)p_i^{o}(dx_i|x^{i-1},y^{i-1})\Bigg{|}\leq\epsilon.\label{capacity:application:closedness:equation3}
\end{align}
We need to show that (\ref{capacity:application:closedness:equation3}) holds for $n=k+1$, i.e., 
\begin{align}
\otimes_{i=0}^{k+1}{p}_i^{\alpha}(\cdot|x^{i-1},y^{i-1})\buildrel w \over \longrightarrow\otimes_{i=0}^{k+1}{p}_i^{o}(\cdot|x^{i-1},y^{i-1})\nonumber
\end{align} 
whenever ${p}_i^{\alpha}(\cdot|x^{i-1},y^{i-1})\buildrel w \over \longrightarrow{p}_i^{o}(\cdot|x^{i-1},y^{i-1}),~i=0,1,\ldots,k+1$, and provided that $\otimes_{i=0}^k{p}_i^{\alpha}(\cdot|x^{i-1},y^{i-1})\buildrel w \over \longrightarrow\otimes_{i=0}^k{p}_i^{o}(\cdot|x^{i-1},y^{i-1})$. The derivation is similar to showing (\ref{application:capacity:closedness:equation1}), hence it is omitted.\\
This shows (\ref{application:capacity:closedness:equation00}), hence the set $\overleftarrow{P}_{0,n}(\cdot|y^{n-1})\in{\cal M}_1^{\bf C1}({\cal X}_{0,n}), y^{n-1} \in {\cal Y}_{0,n-1}$ is closed. By Theorem~\ref{weak_convergence}, {\bf Part A.} A2), this set is also tight, hence by Prohorov's theorem (Appendix~\ref{backround_material}, Theorem~\ref{corollary_of_prohorov}) it is compact. This completes the derivation.\qed


\section{Proof of Lemma~\ref{closedness:nrdf:lemma}}\label{proof:closedness:nrdf}

\par {\bf(1)} Since every probability measure on a compact metric space is weakly compact, then the set  $\overrightarrow{Q}_{0,n}(\cdot|x^{n})\in {\cal M}_1^{\bf C2}({\cal Y}_{0,n}), x^n \in {\cal X}_{0,n}$ is weakly compact. This means that any sequence $\{\overrightarrow{Q}^{\alpha}_{0,n}(\cdot|x^n):~\alpha=1,2,\ldots\}$, possesses a weakly convergent subsequence $\overrightarrow{Q}^{\alpha_i}_{0,n}(dy^n|x^{n})\buildrel w \over\longrightarrow\bar{Q}_{0,n}^o(dy^n|x^{n})$, for each $x^n\in{\cal X}_{0,n}$, and hence tight (by Prohorov's theorem, see Appendix~\ref{backround_material}, Theorem~\ref{prohorov}), but $\bar{Q}_{0,n}^o(dy^n|x^{n})$ may not be an element of ${\cal M}_1^{\bf C2}({\cal Y}_{0,n})$ (i.e., it may fail to satisfy consistency condition {\bf C2}). By Prohorov's theorem, to show compactness of $\overrightarrow{Q}_{0,n}(\cdot|x^{n})\in {\cal M}_1^{\bf C2}({\cal Y}_{0,n}), x^n \in {\cal X}_{0,n}$, we need to show $\bar{Q}_{0,n}^o(\cdot|x^n)=\overrightarrow{Q}_{0,n}^o(\cdot|x^{n})\tri\otimes_{i=0}^n{q}^{o}_i(dy_i|y^{i-1},x^i)$, whenever $q_i^{\alpha}(dy_i|y^{i-1},x^i)\buildrel w \over\longrightarrow{q}_i^o(dy_i|y^{i-1},x^i),~i=0,1,\ldots,n$ (since ${\cal Y}_i,~i=0,1,\ldots,n$ are compact Polish spaces). The method is precisely the same as in Lemma~\ref{capacity:closedness:lemma}, hence it is omitted. Therefore, the set $\overrightarrow{Q}_{0,n}(\cdot|x^{n})\in {\cal M}_1^{\bf C2}({\cal Y}_{0,n}), x^n \in {\cal X}_{0,n}$  is closed, and since it is also tight, it is compact.  \\
\noi{\bf(2)} Next, we discuss how the fidelity set ${\cal{Q}}_{0,n}(D)$ is a closed subset of the compact set ${\cal M}_1^{\bf C2}({\cal Y}_{0,n})$, hence compact itself, that is, for each sequence $\{\overrightarrow{Q}^{\alpha}_{0,n}(\cdot|x^{n}):\alpha=1,2,\ldots\}\in{\cal Q}_{0,n}(D)$ there is a subsequence such that $\overrightarrow{Q}^{\alpha}_{0,n}(\cdot|x^{n})\buildrel w \over\longrightarrow\overrightarrow{Q}^{o}_{0,n}(\cdot|x^{n})\in{\cal Q}_{0,n}(D)$. We outline the derivation. Let $\{\overrightarrow{Q}^{\alpha}_{0,n}(\cdot|x^{n}):\alpha=1,2,\ldots\}\in{\cal Q}_{0,n}(D)\subset{\cal M}^{\bf C2}({\cal Y}_{0,n})$. Since ${\cal M}_1^{\bf C2}({\cal Y}_{0,n})$ is closed and uniformly tight, and hence compact, there exists a subsequence $\{\overrightarrow{Q}^{\alpha_{i}}_{0,n}(\cdot|x^n):i=1,2,\ldots\}\in{\cal M}_1^{\bf C2}({\cal Y}_{0,n})$ and a measure $\overrightarrow{Q}^{o}_{0,n}(\cdot|x^{n})\in{\cal M}_1^{\bf C2}({\cal Y}_{0,n})$ such that $\overrightarrow{Q}^{\alpha_i}_{0,n}(\cdot|x^{n})\buildrel w \over\longrightarrow\overrightarrow{Q}^{o}_{0,n}(\cdot|x^{n})$ for each $x^{n}\in{\cal X}_{0,n}$. Recall that $d_{0,n}:{\cal X}_{0,n}\times{\cal Y}_{0,n}\longmapsto[0,\infty]$ is a Borel measurable, non-negative, and continuous function on $y^{n}\in{\cal Y}_{0,n}$. Consider the sequence $\{d_{0,n}^{(k)}\tri{d}_{0,n}\wedge{k}:k\in\mathbb{N}_0\},~\mathbb{N}_1\tri\{1,2,\ldots\}$, which is bounded, and continuous function in the second argument $y^{n}\in{\cal Y}_{0,n}$. By Lebesgue's monotone convergence theorem and Fatou's lemma it can be shown that ${\cal Q}_{0,n}(D)$ is closed with respect to the topology of weak convergence. Since a closed subset of a compact set is compact, then ${\cal Q}_{0,n}(D)$ is compact. This completes the derivation.\qed


\section{Proof of Theorem~\ref{lower_semicontinuity}}\label{proof:lower:semicontinuity}

\par 1) We need to show that for any sequence $\{\overrightarrow{Q}^{\alpha}_{0,n}(\cdot|x^n)\in{\cal M}_1^{\bf C2}({\cal Y}_{0,n}):\alpha=1,2,\ldots\}$, such that $\overrightarrow{Q}^{\alpha}_{0,n}(\cdot|x^n)\buildrel w \over \longrightarrow\overrightarrow{Q}^o_{0,n}(\cdot|x^n)$ for each $x^n\in{\cal X}_{0,n}$ then
\begin{align}
{\mathbb{I}}_{X^n\rightarrow{Y^n}}({\overleftarrow P}_{0,n},\overrightarrow{Q}^o_{0,n})\leq\liminf_{\alpha\rightarrow\infty}{\mathbb{I}}_{X^n\rightarrow{Y^n}}({\overleftarrow P}_{0,n},\overrightarrow{Q}^{\alpha}_{0,n}).\nonumber
\end{align}
Define the sequence of joint distribution $P_{0,n}^{\alpha}(dx^n,dy^n)\tri({\overleftarrow P}_{0,n}\otimes\overrightarrow{Q}^{\alpha}_{0,n})(dx^n,dy^n),~\alpha=1,2,\ldots$. Weak convergence $P_{0,n}^{\alpha}(dx^n,dy^n)\buildrel w \over \longrightarrow({\overleftarrow P}_{0,n}\otimes{\overrightarrow Q}_{0,n}^o)(dx^n,dy^n)\equiv{P}^o_{0,n}(dx^n,dy^n)$ is shown by considering integrals with respect to a test function $\phi_{0,n}(\cdot,\cdot){\in}BC({\cal X}_{0,n}\times{\cal Y}_{0,n})$ via
\begin{align}
\int_{{\cal X}_{0,n}\times{\cal Y}_{0,n}}\phi_{0,n}(x^n,y^n)P^{\alpha}_{0,n}(dx^n,dy^n)=\int_{{\cal X}_{0,n}\times{\cal Y}_{0,n}}\phi_{0,n}(x^n,y^n)({\overleftarrow P}_{0,n}\otimes{\overrightarrow Q}^{\alpha}_{0,n})(dx^n,dy^n).\nonumber
\end{align}
By Theorem~\ref{weak_convergence}, {\bf Part A.}, A1), $P_{0,n}^{\alpha}(dx^n,dy^n)\buildrel w \over \longrightarrow P^o_{0,n}(dx^n,dy^n)$. Similarly, consider ${\overrightarrow \Pi}_{0,n}^{\alpha}\tri{\overleftarrow P}_{0,n}\otimes\nu_{0,n}^{\alpha}~\alpha=1,2,\ldots$, where $\{\nu_{0,n}^{\alpha}:\alpha=1,2,\ldots\}$ are the marginals of $\{P^{\alpha}_{0,n}:\alpha=1,2,\ldots\}$. Then by Theorem~\ref{weak_convergence}, {\bf Part A.}, A4) we have
\begin{align}
\overrightarrow{\Pi}^{\alpha}_{0,n}=\overleftarrow{P}_{0,n}\otimes\nu^{\alpha}_{0,n}\buildrel w \over \longrightarrow\overrightarrow{\Pi}^o_{0,n}=\overleftarrow{P}_{0,n}\otimes\nu^o_{0,n}.\nonumber
\end{align}
Recall the definition of directed information via relative entropy given by
\begin{align}
\mathbb{D}(P_{0,n}||{\overrightarrow\Pi}_{0,n})&=\mathbb{D}({\overleftarrow P}_{0,n}\otimes{\overrightarrow Q}_{0,n}||{\overleftarrow P}_{0,n}\otimes\nu_{0,n})={\mathbb{I}}_{X^n\rightarrow{Y^n}}({\overleftarrow P}_{0,n},{\overrightarrow Q}_{0,n}).\label{equation41}
\end{align}
It is well known that relative entropy is lower semicontinuous, hence
\begin{align}
\mathbb{D}(P^o_{0,n}||\overrightarrow{\Pi}^o_{0,n})=\mathbb{D}({\overleftarrow P}_{0,n}\otimes\overrightarrow{Q}^o_{0,n}||\overrightarrow{\Pi}^o_{0,n})\leq\liminf_{\alpha\rightarrow\infty}\mathbb{D}(P^{\alpha}_{0,n}||\overrightarrow{\Pi}^{\alpha}_{0,n}).\label{equation42}
\end{align}
By (\ref{equation41}) it follows that (\ref{equation42}) is also equivalent to
\begin{align}
{\mathbb{I}}_{X^n\rightarrow{Y^n}}({\overleftarrow P}_{0,n},\overrightarrow{Q}^o_{0,n})\leq\liminf_{\alpha\rightarrow\infty}{\mathbb{I}}_{X^n\rightarrow{Y^n}}({\overleftarrow P}_{0,n},\overrightarrow{Q}_{0,n}^{\alpha})\nonumber
\end{align}
Hence, directed information is lower semicontinuous as a functional of $\overrightarrow{Q}_{0,n}(\cdot|x^n)\in{\cal M}_1^{\bf C2}({\cal Y}_{0,n})$ for a fixed $\overleftarrow{P}_{0,n}(\cdot|y^{n-1})\in{\cal M}_1^{\bf C1}({\cal X}_{0,n})$. This completes the derivation of 1).\\
2) The derivation is similar to 1).\qed


\section{Proof of Theorem~\ref{continuity}}\label{continuity1}

\par To show continuity of $\mathbb{I}_{X^n\rightarrow{Y^n}}(\cdot,\overrightarrow{Q}_{0,n})$ we need to show that  for every sequence $\{\overleftarrow{P}_{0,n}^{\alpha}(\cdot|y^{n-1}):\alpha=1,2,\ldots\}$ such that $\overleftarrow{P}_{0,n}^{\alpha}\buildrel w \over\longrightarrow \overleftarrow{P}_{0,n}^{o}$, we have
$$\mathbb{I}_{X^n\rightarrow{Y^n}}(\overleftarrow{P}_{0,n}^{\alpha},\overrightarrow{Q}_{0,n})\longrightarrow\mathbb{I}_{X^n\rightarrow{Y^n}}(\overleftarrow{P}_{0,n}^o,\overrightarrow{Q}_{0,n}).$$
The derivation is based on the procedure utilized in \cite{fozunbal} to show continuity for single letter mutual information. First, decompose directed information into two terms as follows.
\begin{align}
\mathbb{I}_{X^n\rightarrow{Y^n}}(\overleftarrow{P}_{0,n},\overrightarrow{Q}_{0,n})&=\int_{{\cal X}_{0,n}\times{\cal Y}_{0,n}}\log\Big(\frac{d{\overrightarrow Q}_{0,n}(\cdot|x^n)}{d\nu_{0,n}(\cdot)}(y^n)\Big)(\overleftarrow{P}_{0,n}\otimes\overrightarrow{Q}_{0,n})(dx^n,dy^n)\nonumber\\
&=\int_{{\cal X}_{0,n}\times{\cal Y}_{0,n}}\log\Big(\frac{d{\overrightarrow Q}_{0,n}(\cdot|x^n)}{d\nu_{0,n}(\cdot)}(y^n)\Big)(\overleftarrow{P}_{0,n}\otimes\overrightarrow{Q}_{0,n})(dx^n,dy^n)\nonumber\\
&-\qquad\int_{{\cal Y}_{0,n}}\log\Big(\frac{d{\overrightarrow Q}_{0,n}(\cdot|x^n)}{d\nu_{0,n}(\cdot)}(y^n)\Big)\nu_{0,n}(dy^n)\nonumber\\
&=\int_{{\cal X}_{0,n}\times{\cal Y}_{0,n}}\Big(\xi_{\bar{\nu}_{0,n}}(x^n,y^n)\log\xi_{\bar{\nu}_{0,n}}(x^n,y^n)\Big)\overleftarrow{P}_{0,n}(dx^n|y^{n-1})\otimes\bar{\nu}_{0,n}(dy^n)\nonumber\\
&-\qquad\int_{{\cal Y}_{0,n}}\Big(\xi_{\bar{\nu}_{0,n},\overleftarrow{P}_{0,n}}(y^n)\log\xi_{\bar{\nu}_{0,n},\overleftarrow{P}_{0,n}}(y^n)\Big)\bar{\nu}_{0,n}(dy^n),\label{equation_1}
\end{align}
where $\xi_{\bar{\nu}_{0,n},\overleftarrow{P}_{0,n}}(y^n)\tri\frac{d\nu_{0,n}(\cdot)}{d\bar{\nu}_{0,n}(\cdot)}(y^n)$ emphasizes the fact that this RND depends on $\overleftarrow{P}_{0,n}(\cdot|y^{n-1})$ via $\bar{\nu}(\cdot)$. For now, assume that both terms in on the RHS of the above formula are finite; the validity of this assumption will be established at the end. Thus, we only need to show that both terms are bounded and continuous in the weak sense over ${\cal M}_1^{{\bf C1},cl}({\cal X}_{0,n})$.\\
{\it Continuity of the first term}. Since $\overleftarrow{P}_{0,n}^{\alpha}(\cdot|y^{n-1})\buildrel w \over\longrightarrow \overleftarrow{P}_{0,n}^{o}(\cdot|y^{n-1})$, by \cite[Theorem~A.5.8, p.~320]{dupuis-ellis97}, utilizing Lebesgue's dominated convergence theorem, we have $\overleftarrow{P}_{0,n}^{\alpha}\otimes\bar{\nu}_{0,n}\buildrel w \over\longrightarrow\overleftarrow{P}_{0,n}^{o}\otimes\bar{\nu}_{0,n}$. Since $\xi_{\bar{\nu}_{0,n}}(x^n,y^n)$ is continuous, then so is $\xi_{\bar{\nu}_{0,n}}(x^n,y^n)\log\xi_{\bar{\nu}_{0,n}}(x^n,y^n)$. By hypothesis, $\xi_{\bar{\nu}_{0,n}}(x^n,y^n)\log\xi_{\bar{\nu}_{0,n}}(x^n,y^n)$ is uniformly integrable over $\big\{\bar{\nu}_{0,n}\otimes\overleftarrow{P}_{0,n}:{\overleftarrow P}_{0,n}(\cdot|y^{n-1})\in{\cal M}_1^{{\bf C1},cl}({\cal X}_{0,n})\big\}$. Therefore, using Theorem~\ref{continuity-uniform_integrability}, Appendix~\ref{backround_material}, we conclude that 
\begin{align}
\lim_{\alpha\rightarrow\infty}\int_{{\cal X}_{0,n}\times{\cal Y}_{0,n}}&\xi_{\bar{\nu}_{0,n}}(x^n,y^n)\log\xi_{\bar{\nu}_{0,n}}(x^n,y^n)\overleftarrow{P}_{0,n}^{\alpha}(dx^n|y^{n-1})\otimes\bar{\nu}_{0,n}(dy^n)\nonumber\\
&=\int_{{\cal X}_{0,n}\times{\cal Y}_{0,n}}\xi_{\bar{\nu}_{0,n}}(x^n,y^n)\log\xi_{\bar{\nu}_{0,n}}(x^n,y^n){P}_{0,n}^{o}(dx^n|y^{n-1})\otimes\bar{\nu}_{0,n}(dy^n).\label{equation101}
\end{align}
This proves the continuity of the first term. The finiteness of the first term is obtained from uniform integrability as follows. For a given $\epsilon>0$ and sufficiently large $c>0$
\begin{align*}
&\sup_{\overleftarrow{P}_{0,n}(\cdot|y^{n-1})\in{\cal M}_1^{{\bf C1},cl}({\cal X}_{0,n})}\Big\{\int_{{\cal X}_{0,n}\times{\cal Y}_{0,n}}\big{|}\xi_{\bar{\nu}_{0,n}}(x^n,y^n)\log\xi_{\bar{\nu}_{0,n}}(x^n,y^n)\big{|}I_{\{|\xi_{\bar{\nu}_{0,n}}(x^n,y^n)\log\xi_{\bar{\nu}_{0,n}}(x^n,y^n)|\geq{c}\}}\nonumber\\
&\qquad\qquad\times\overleftarrow{P}_{0,n}^{\alpha}(dx^n|y^{n-1})\otimes\bar{\nu}_{0,n}(dy^n)\\
&+\int_{{\cal X}_{0,n}\times{\cal Y}_{0,n}}\big{|}\xi_{\bar{\nu}_{0,n}}(x^n,y^n)\log\xi_{\bar{\nu}_{0,n}}(x^n,y^n)\big{|}I_{\{|\xi_{\bar{\nu}_{0,n}}(x^n,y^n)\log\xi_{\bar{\nu}_{0,n}}(x^n,y^n)|<c\}}\overleftarrow{P}_{0,n}^{\alpha}(dx^n|y^{n-1})\otimes\bar{\nu}_{0,n}(dy^n)\Big\}\\
&\leq\sup_{\overleftarrow{P}_{0,n}(\cdot|y^{n-1})\in{\cal M}_1^{{\bf C1},cl}({\cal X}_{0,n})}\Big\{\int_{\{|\xi_{\bar{\nu}_{0,n}}(x^n,y^n)\log\xi_{\bar{\nu}_{0,n}}(x^n,y^n)|\geq{c}\}}\big{|}\xi_{\bar{\nu}_{0,n}}(x^n,y^n)\log\xi_{\bar{\nu}_{0,n}}(x^n,y^n)\big{|}\nonumber\\
&\qquad\qquad\times\overleftarrow{P}_{0,n}^{\alpha}(dx^n|y^{n-1})\otimes\bar{\nu}_{0,n}(dy^n)\Big\}\\
&+\sup_{\overleftarrow{P}_{0,n}(\cdot|y^{n-1})\in{\cal M}_1^{{\bf C1},cl}({\cal X}_{0,n})}\Big\{\int_{\{|\xi_{\bar{\nu}_{0,n}}(x^n,y^n)\log\xi_{\bar{\nu}_{0,n}}(x^n,y^n)|<{c}\}}\big{|}\xi_{\bar{\nu}_{0,n}}(x^n,y^n)\log\xi_{\bar{\nu}_{0,n}}(x^n,y^n)\big{|}\\
&\qquad\qquad\times\overleftarrow{P}_{0,n}^{\alpha}(dx^n|y^{n-1})\otimes\bar{\nu}_{0,n}(dy^n)\Big\}\leq\epsilon+c.
\end{align*}
{\it Continuity of the second term}. For a fixed $y^n\in{\cal Y}_{0,n}$, since $\xi_{\bar{\nu}_{0,n}}(x^n,y^n)$ is uniformly integrable over ${\cal M}_1^{{\bf C1},cl}({\cal X}_{0,n})$, by Theorem~\ref{uniform_integrability_1}, Appendix~\ref{backround_material}, we obtain that $\overleftarrow{P}_{0,n}^{\alpha}\buildrel w \over\longrightarrow \overleftarrow{P}_{0,n}^{o}$, implies pointwise convergence of $\xi_{\bar{\nu}_{0,n},\overleftarrow{P}_{0,n}^{\alpha}}(y^n)\longrightarrow\xi_{\bar{\nu}_{0,n},\overleftarrow{P}_{0,n}^o}(y^n)$. By continuity of the logarithm, we obtain the pointwise convergence of $\xi_{\bar{\nu}_{0,n},\overleftarrow{P}_{0,n}^{\alpha}}(y^n)\log\xi_{\bar{\nu}_{0,n},\overleftarrow{P}_{0,n}^{\alpha}}(y^n)\longrightarrow\xi_{\bar{\nu}_{0,n},\overleftarrow{P}_{0,n}^{o}}(y^n)\log\xi_{\bar{\nu}_{0,n},\overleftarrow{P}_{0,n}^{0}}(y^n)$. It only remains to show convergence under the integral with respect to $\bar{\nu}_{0,n}$. By (\ref{equation155}), then $\forall{\alpha}$
\begin{align}
\Bigg{|}\xi_{\bar{\nu}_{0,n},\overleftarrow{P}_{0,n}^{\alpha}}&(y^n)\log\xi_{\bar{\nu}_{0,n},\overleftarrow{P}_{0,n}^{\alpha}}(y^n)\Bigg{|}\leq\frac{2}{e\ln2}+\xi_{\bar{\nu}_{0,n},\overleftarrow{P}_{0,n}^{\alpha}}(y^n)\log\xi_{\bar{\nu}_{0,n},\overleftarrow{P}_{0,n}^{\alpha}}(y^n)\nonumber\\
&=\frac{2}{e\ln2}\int_{{\cal X}_{0,n}}\xi_{\bar{\nu}_{0,n},\overleftarrow{P}_{0,n}^{\alpha}}(y^n)\log\xi_{\bar{\nu}_{0,n},\overleftarrow{P}_{0,n}^{\alpha}}(y^n)\overleftarrow{P}_{0,n}^{\alpha}(dx^n|y^{n-1})\label{equation_2}\\
&\leq\frac{2}{e\ln2}+\int_{{\cal X}_{0,n}}\Big(\xi_{\bar{\nu}_{0,n}}(x^n,y^n)\log\xi_{\bar{\nu}_{0,n}}(x^n,y^n)\Big)\overleftarrow{P}_{0,n}^{\alpha}(dx^n|y^{n-1}).\nonumber
\end{align}
where (\ref{equation_2}) follows from (\ref{equation_1}) and the nonnegativity of $\mathbb{I}_{X^n\rightarrow{Y^n}}(\overleftarrow{P}_{0,n},\overrightarrow{Q}_{0,n})$.
By (\ref{equation101}), the integration of the RHS over $\bar{\nu}_{0,n}$ converges. Thus, by the generalized Lebesgue's dominated convergence theorem \cite[p.~59]{folland1999}, we conclude that
\begin{align}
\int_{{\cal Y}_{0,n}}&\xi_{\bar{\nu}_{0,n},\overleftarrow{P}_{0,n}^{\alpha}}(y^n)\log\xi_{\bar{\nu}_{0,n},\overleftarrow{P}_{0,n}^{\alpha}}(y^n)\bar{\nu}_{0,n}(dy^n)
\buildrel{\alpha\rightarrow\infty} \over\longrightarrow\int_{{\cal Y}_{0,n}}\xi_{\bar{\nu}_{0,n},\overleftarrow{P}_{0,n}^{0}}(y^n)\log\xi_{\bar{\nu}_{0,n},\overleftarrow{P}_{0,n}^{o}}(y^n)\bar{\nu}_{0,n}(dy^n).\nonumber
\end{align}
This implies the continuity of the second term. Furthermore, its finiteness follows as before. Since both terms are finite and continuous we deduce continuity of the directed information $\mathbb{I}_{X^n\rightarrow{Y^n}}(\cdot,\overrightarrow{Q}_{0,n})$ with respect to $\overleftarrow{P}_{0,n}(\cdot|y^{n-1})$, for fixed $\overrightarrow{Q}(\cdot|x^n)$. This completes the derivation.

\bibliographystyle{IEEEtran}
\bibliography{photis_references_DI_properties}

\end{document}